\documentclass[a4paper,UKenglish,cleveref,nameinlink,autoref,thm-restate,authorcolumns]{lipics-v2021}

\hideLIPIcs  %
\nolinenumbers

\graphicspath{{./figures}}

\newtheorem{lemma}{Lemma}

\usepackage{cite}
\usepackage{soul}
\usepackage[textsize=footnotesize]{todonotes}
\usepackage{tabularx}
\usepackage[normalem]{ulem}
\usepackage{xspace}
\usepackage{mathtools}
\usepackage{amsthm}
\usepackage{hyperref}
\usepackage[capitalise]{cleveref}
\usepackage[inline]{enumitem}
\usepackage{tcolorbox}

\ccsdesc[500]{Mathematics of computing~Graph theory}

\usepackage{xcolor}

\usepackage{pifont}
\definecolor{PKgreen}{rgb}{0.2,0.627,0.172}
\definecolor{PKred}{rgb}{0.89,0.102,0.109}
\definecolor{MEPKcolor}{rgb}{0.79,0.69,0.84}

\newcommand{\re}{\color{PKred}}

\newcommand{\xmark}{{\re\ding{55}}}

\definecolor{defblue}{rgb}{0.121,0.47,0.705}
\DeclareTextFontCommand{\emph}{\color{defblue}\em}
\hypersetup{colorlinks=false, %
    linkcolor=defblue,
    anchorcolor=defblue,
    citecolor=defblue,
    filecolor=defblue,
    menucolor=defblue,
    urlcolor=defblue,
    bookmarksopen=true,
    bookmarksopenlevel=2,
    bookmarksnumbered=true,
    plainpages=false
    }

\NewDocumentCommand{\lQL}{o}{\ensuremath{\prec\IfNoValueF{#1}{_{#1}}}\xspace}
\NewDocumentCommand{\lQLFull}{o}{\ensuremath{\langle\prec\IfNoValueF{#1}{_{#1}},\sigma\IfNoValueF{#1}{_{#1}}\rangle}\xspace}
\NewDocumentCommand{\Size}{m}{\ensuremath{\vert #1 \vert}\xspace}
\NewDocumentCommand{\Extension}{m}{\ensuremath{\mathtt{#1}}\xspace}

\NewDocumentCommand{\EExtension}{}{\Extension{E}}
\NewDocumentCommand{\OLExtension}{}{\Extension{OL}}
\NewDocumentCommand{\ORExtension}{}{\Extension{OR}}

\newcommand{\defproblem}[4]{
  \begin{tcolorbox}%
    \nolinenumbers\vspace*{-1ex}\hspace*{-2ex}
    \begin{minipage}{0.98\textwidth}
      \begin{tabular}{@{}>{\normalsize}l@{~~}>{\normalsize}p{0.92\textwidth}@{}}
        {\sf\bfseries\color{lipicsGray} Problem:} & #1\\[.1ex]
        {\sf\bfseries\color{lipicsGray} Input:} & #2\\[.1ex]
        {\sf\bfseries\color{lipicsGray} #4:} & #3
      \end{tabular}
    \end{minipage}\vspace*{-1ex}
  \end{tcolorbox}
}
\newcommand{\defdecproblem}[3]{\defproblem{#1}{#2}{#3}{Question}}

\crefname{F}{F}{Fs}
\Crefname{F}{F}{Fs}

\usepackage{apptools,thmtools}
\let\origrestatable=\restatable
\renewcommand{\restatable}[1][]{%
  \origrestatable[%
    \ifstrempty{#1}%
      {$\star$}%
      {\IfAppendix{\hyperref[#1]{$\star$}}{\hyperref[#1*]{$\star$}}}%
  ]%
}
\newbool{restatinbody}
\makeatletter
\patchcmd{\@begintheorem}
  {\thm@swap\swappedhead\thmhead}
  {\ifbool{restatinbody}{\let\thmnote\@gobble}{}\thm@swap\swappedhead\thmhead}
  {}{}
\makeatother
\newcommand{\restateinbody}[1]{\booltrue{restatinbody}#1\boolfalse{restatinbody}}

\title{\texorpdfstring{On the Recognition of Outerplanar Graphs \\ with Queue Number~1}{On the Recognition of Outerplanar Graphs with Queue Number~1}}
\titlerunning{On the Recognition of Outerplanar Graphs with Queue Number~1}

\acknowledgements{This research was initiated at the Bertinoro Workshop on Graph Drawing 2026.}

\author{Michael A. Bekos}{University of Ioannina, Greece \and \url{https://myweb.uoi.gr/bekos/} }{bekos@uoi.gr}{https://orcid.org/0000-0002-3414-7444}
{Supported by HFRI Grant No: 26320.}

\author{Thomas Depian}{TU Wien, Vienna, Austria}{tdepian@ac.tuwien.ac.at}{https://orcid.org/0009-0003-7498-6271}{Supported by the Vienna Science and Technology Fund (WWTF) [10.47379/ICT22029].}

\author{Stefan Felsner}{Technische Universität Berlin, Germany}{felsner@math.tu-berlin.de}{}{}

\author{Michael Kaufmann}{%
University of T{\"u}bingen, Germany}{michael.kaufmann@uni-tuebingen.de}{}{}

\author{Philipp Kindermann}{Universit\"at Trier, Germany \and \url{https://algo.uni-trier.de/~kindermann} }{kindermann@uni-trier.de}{https://orcid.org/0000-0001-5764-7719}{}

\author{Fabrizio Montecchiani}{University of Perugia, Italy}{fabrizio.montecchiani@unipg.it}{https://orcid.org/0000-0002-0543-8912}{}

\author{Maria Eleni Pavlidi}{University of Ioannina, Greece}{m.e.pavlidi@uoi.gr}{https://orcid.org/0009-0009-4500-0112}{Supported by HFRI Grant No: 26320.}

\author{Alexandra Weinberger}{FernUniversit\"at in Hagen, Germany \and \url{https://www.fernuni-hagen.de/ti/team/alexandra.weinberger.shtml}}{alexandra.weinberger@fernuni-hagen.de}{https://orcid.org/0000-0001-8553-6661}{}

\author{Alexander Wolff}{Universität Würzburg, Germany \and \url{https://www.informatik.uni-wuerzburg.de/en/algo/team/wolff-alexander/}}{}{https://orcid.org/0000-0001-5872-718X}{}

\author{Johannes Zink}{Technische Universität München, Germany \and \url{https://www.cs.cit.tum.de/algo/staff/johannes-zink/}}{johannes.zink@tum.de}{https://orcid.org/0000-0002-7398-718X}{}

\authorrunning{Bekos et al.} %

\keywords{Linear layout, 1-stack 1-queue layout, outerplanar graphs, recognition}

\definecolor{dark blue}{rgb}{0.121,0.47,0.705}
\let\emph\relax\DeclareTextFontCommand{\emph}{\color{dark blue}\em}

\newtheorem{open}{Open Problem}
\theoremstyle{claimstyle}

\begin{document}

\maketitle

\vspace{-4pt}

\begin{abstract}
A \emph{linear layout} of a graph is defined as a total order of the vertices and a partition of the edges to pages.
In a \emph{stack (queue) layout}, no two edges on the same page may cross (nest).
The \emph{stack (queue) number} of a graph is the minimum number of pages required in a stack (queue) layout.
This paper focuses on characterizing and recognizing graphs that have both stack number~1 and queue number~1.
It is known that the graphs with stack number~$1$ are exactly the outerplanar graphs.
We show that (i)~deciding whether a given outerplanar graph has queue
number~$1$ is NP-hard; (ii)~deciding whether a given {\em maximal}
outerplanar graph has queue number~$1$ can be done in linear time.
Moreover, we investigate the interplay between outerpaths with queue
number~$1$ and their maximum vertex degree.
\end{abstract}

\vspace{-4pt}

\tableofcontents

\section{Introduction}

A linear layout of a graph is defined by a total order of its vertices together with a partition of its edges into pages such that edges belonging to the same page satisfy a specific property; see, e.g., \cite{2004layouts} for a survey.
In this work, we focus on a well-studied variant of linear layouts called \emph{queue layouts}~\cite{DBLP:journals/siamcomp/HeathR92}, in which each page is required to form a queue, that is, no two of its edges are nested with respect to the vertex order.
Given a graph, it is an interesting problem to determine its \emph{queue number}, defined as the minimum number of pages required in a queue layout of the graph.
Note that the counterpart of a queue layout is a \emph{stack layout}~\cite{DBLP:journals/jct/BernhartK79}, in which each page is required to form a stack, which does not allow any two of its edges to cross.
Analogously, the \emph{stack number} (also called \emph{page number} or \emph{book thickness}) of a graph is the minimum number of pages in a stack layout.

Back in 1992, Heath and Rosenberg~\cite{DBLP:journals/siamcomp/HeathR92} characterized the graphs admitting single-queue layouts as the arched leveled-planar graphs (which we define in \cref{sec:preliminaries}) and showed that
recognizing these graphs is NP-hard.
Interestingly, this complexity-theoretic barrier does not carry over to stack layouts, since the graphs admitting single-stack layouts are precisely the  outerplanar graphs~\cite{DBLP:journals/jct/BernhartK79}, which can be recognized in linear time~\cite{DBLP:conf/wg/Wiegers86}.
As a matter of fact, outerplanar graphs have several useful structural properties that allow for efficient algorithms for many problems that are NP-hard in general, e.g., they are characterized by forbidden minors ($K_{2,3}$ and $K_4$)~\cite{79stack}, have degeneracy and treewidth at most~2, and chromatic number at most $3$~\cite{LickWhite1970}.
Since all outerplanar graphs admit 2-queue layouts but not all outerplanar graphs admit single-queue layouts~\cite{DBLP:journals/siamdm/HeathLR92}, a natural question is whether the problem of testing whether a graph has queue number~1 remains computationally hard when restricted to outerplanar graphs.
A negative answer would establish an intriguing parallel between stack and queue layouts in the outerplanar setting, which to the best of out knowledge has not been investigated yet.
It would also imply that the graphs that have both stack number~$1$ and queue number~$1$ (but not necessarily on the same underlying vertex order) can be efficiently recognized.
The relevance of the class of outerplanar graphs with queue number~$1$ extends beyond linear layouts: in the context of so-called \emph{simple drawings}, this class of graphs was shown to consist precisely of the graphs guaranteed to occur as plane subdrawings in all simple drawings of sufficiently large complete graphs~\cite{wz-wips-GD26}.

Towards an answer to the question above, the characterization by Heath and Rosenberg~\cite{DBLP:journals/siamcomp/HeathR92} implies that determining whether an outerplanar graph has queue number~$1$ is equivalent to determining whether it is arched leveled-planar.
In this direction, Bannister, Devanny, Dujmović, Eppstein, and Wood~\cite{DBLP:journals/algorithmica/BannisterDDEW19} proved that every bipartite outerplanar graph is leveled outerplanar (and thus arched leveled-planar), and therefore admits a 1-queue layout.
Beyond the bipartite case, however, the problem turns out to be widely open.

\subparagraph{Our Contribution.}

First, we settle the complexity of recognizing outerplanar graphs with queue number~1. By a reduction from \textsc{3-Partition}, we show that deciding whether a given outerplanar graph has queue number~1 is NP-hard (see \cref{sec:hard}). This result strengthens the corresponding NP-hardness result by Heath and Rosenberg~\cite{DBLP:journals/siamcomp/HeathR92}, who established hardness for general planar graphs.

\begin{restatable}[thm:nphardness]{theorem}{nphardness}
    \label{thm:nphardness}
    Determining whether an outerplanar graph has queue number $1$ is NP-hard, even when restricted to outerplanar graphs of pathwidth at most $5$.
\end{restatable}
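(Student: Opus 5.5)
We reduce from \textsc{3-Partition}, which is strongly NP-hard, so the integers may be encoded in unary. An instance consists of $3m$ positive integers $a_1,\dots,a_{3m}$ with $\sum_i a_i = mB$ and $B/4 < a_i < B/2$ for all $i$. The question is whether the $a_i$ can be split into $m$ triples, each summing to $B$. We use the characterization of Heath and Rosenberg: a graph has queue number~$1$ if and only if it is arched leveled-planar. Equivalently, the vertex order can be cut into consecutive levels so that every edge joins consecutive levels or is an arch from the first vertex of a level. What makes queue layouts rigid is that they forbid nesting. In a 1-queue layout, a long path or ladder-like structure must advance monotonically, and a triangle forces a specific local pattern: the three vertices occupy positions whose edges do not nest, so the triangle spans two levels with an arch-like edge.

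\textbf{Frame.} First we build a rigid frame gadget. Take a long outerplanar strip of triangles (a triangulated ladder, i.e. an outerpath), with $m$ special ``bins'' marked along it. We prove a lemma that in every 1-queue layout the frame is laid out essentially uniquely up to reversal. Its vertices progress level by level, and each bin provides exactly $B$ ``slots'' (free positions) between consecutive frame levels. Here the triangles serve as separators, because a triangle cannot straddle another structure without creating a nesting.

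\textbf{Items.} Next, for each $a_i$ we attach an item gadget, a path or fan of size proportional to $a_i$ hanging from a common vertex or cut vertex of the frame. The gadgets are attached so that outerplanarity is preserved: they are pendant blocks glued at cut vertices. We then argue that in any 1-queue layout each item gadget must lie entirely within one bin, because leaving a bin would nest with a frame edge. We also argue that it consumes exactly $a_i$ slots of that bin. Pathwidth at most~$5$ comes from the construction itself: the frame is an outerpath of pathwidth~$2$, and the items are paths hanging from a bounded number of attachment vertices per bag. An explicit path decomposition sweeping along the frame gives width $\le 5$.

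\textbf{Correctness.} For the forward direction, given a valid 3-partition, we place the three items of each triple into one bin and exhibit the leveling explicitly. For the backward direction, rigidity together with capacity counting shows that each bin receives total size at most $B$. Since the total size is $mB$, every bin receives exactly $B$. The bounds $B/4<a_i<B/2$ then force exactly three items per bin.

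\textbf{Main obstacle.} The hardest step is the rigidity lemma. We must show that a 1-queue layout cannot fold the frame, cannot interleave item gadgets with frame vertices in unexpected ways, and cannot place an item gadget across two bins. All of this must hold while the attachments stay outerplanar, which limits how strongly items can be anchored. I would first try to prove that every triangle in the frame occupies two consecutive levels in a fixed pattern. Propagating this along the strip would then force monotone progression. After that, I would bound how many vertices of a pendant path can fit between two consecutive frame levels without nesting.
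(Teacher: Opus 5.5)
There is a genuine gap. The mechanism that encodes the numbers is never specified, and the item gadgets you propose cannot supply one. In an arched leveled-planar drawing a level may hold arbitrarily many vertices, so ``$B$ free positions between two frame levels'' is not a capacity. A pendant path needs no arch and can zigzag between two adjacent levels, so its length is not charged against anything bounded; indeed every bipartite outerplanar graph is leveled-planar. A fan, on the other hand, is a maximal outerpath whose apex degree grows with its size. By \cref{prop:degree-6}, a fan with more than five triangles has no $1$-queue layout at all, so your graph would be a no-instance regardless of $S$. Hence the backward direction's capacity counting has nothing to count, and ``leaving a bin would nest with a frame edge'' is an unsupported assertion. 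The rigidity lemma you flag as the main obstacle would have to carry the entire reduction, and no gadget that makes it true is given.

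The missing idea is that the scarce resource is \emph{arch levels}. Every odd cycle needs an arch, and all arches of a level must be incident to its leftmost vertex. Hence odd cycles that pairwise share only the attachment vertex~$a$ must arch on pairwise distinct levels, provided $a$'s own level is unavailable. Moreover, a cycle of length $2h+1$ through $a$ cannot arch more than $h$ levels away from~$a$. The paper uses this with three ingredients:
\begin{itemize}
\item it encodes $s$ by $s$ odd cycles through $a$, linked by length-$2$ paths so that the item cannot be split;
\item it delimits the bins by pairs of odd cycles of lengths $2i(B+2)\pm1$, which are pinned to levels $i(B+2)-1$ and $i(B+2)$ by arch-free tower gadgets of forced height;
\item it uses an anchor gadget to force everything attached to $a$ strictly to one side of $a$. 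This removes $a$'s level as an arch level (where $a$, as leftmost vertex, could otherwise absorb arbitrarily many arches) and prevents using levels on both sides.
\end{itemize}
Counting $mB+2m$ odd cycles against the $m(B+2)$ reachable levels then forces every level to carry exactly one arch, which yields pockets of exactly $B$ arch levels. Your outline needs analogues of all three ingredients before the counting step is valid.
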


The fact that the graphs that we construct in our reduction have pathwidth at most~5 has immediate implications for the parameterized complexity of computing the queue number with respect to the usual structural parameters.
More concretely, the problem is known to admit a fixed-parameter algorithm with respect to the vertex integrity of the graph~\cite{DFGS.LLR.2025}. 
Only for the special case of queue number~$1$, a fixed-parameter algorithm with respect to treedepth has been established~\cite{BGMN.PAQ.2022}.
Generalizing either of the two algorithms to less restrictive parameters, in particular treewidth, has been repeatedly posed as an open problem~\cite{GMNZ.PCG.2021,BSK.Bop.2025}.
With our result, we provide a negative answer to this question:

\begin{restatable}[cor:parahardness-treewidth]{corollary}{paranphardness}
    \label{cor:parahardness-treewidth}
    Determining the queue number of a graph is paraNP-hard
    with respect to pathwidth plus  queue number.
\end{restatable}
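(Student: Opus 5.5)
The corollary should follow directly from \cref{thm:nphardness}; no new construction is needed. Recall that a parameterized problem is paraNP-hard with respect to a parameter $k$ if it stays NP-hard on instances where $k$ is bounded by a constant. Here the parameter is pathwidth plus queue number, so it suffices to exhibit a fixed constant $c$ such that deciding queue number is NP-hard already on graphs with pathwidth plus queue number at most $c$.

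To set this up, I view \emph{determining the queue number} as the decision problem ``given $G$ and $k$, is $\mathrm{qn}(G)\le k$?''. Taking $k=1$ gives exactly the question in \cref{thm:nphardness}, so I restrict attention to the instances produced by the reduction of \cref{thm:nphardness}. Each such instance is an outerplanar graph $G$ of pathwidth at most $5$. Since every outerplanar graph admits a $2$-queue layout~\cite{DBLP:journals/siamdm/HeathLR92}, its queue number is at most $2$. The parameter on these instances is therefore at most $5+2=7$.

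Putting this together: an algorithm running in time $f(\mathrm{pw}+\mathrm{qn})\cdot n^{O(1)}$, or even in XP time $n^{g(\mathrm{pw}+\mathrm{qn})}$, would decide these instances in polynomial time. By \cref{thm:nphardness} it would then solve an NP-hard problem in polynomial time, giving P $=$ NP. Hence the problem is paraNP-hard for this parameter.

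One subtlety concerns the queue number itself being part of the parameter. It could be read either as the target value $k$ in the input or as the true queue number of $G$. In both readings the value is bounded by the constant $2$, because the target is $k=1$ and the true value is $\mathrm{qn}(G)\le 2$. So the argument goes through either way. The only step with any real content is this bound on the parameter, and all the difficulty lies in \cref{thm:nphardness}, which I take as given.
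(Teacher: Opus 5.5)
Your proposal is correct and matches the paper's intent: the paper gives no separate proof and presents the corollary as an immediate consequence of \cref{thm:nphardness}. That consequence is exactly your argument: the reduction yields outerplanar graphs of pathwidth at most~$5$, whose queue number is at most~$2$, so the problem is already NP-hard when pathwidth plus queue number is at most~$7$. Your remark that the queue-number part of the parameter is bounded under either reading (target value or true value) is a detail the paper leaves implicit.
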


For maximal outerplanar graphs, we have a complete characterization.  
We first show that if such a graph has queue number~1, then it must be an \emph{outerpath}, that is, an outerplanar graph whose weak dual is a path. 

\begin{restatable}[thm:maxouterplanar]{theorem}{maxouterplanar}
\label{thm:maxouterplanar}
Every maximal outerplanar graph with queue number~$1$ is an outerpath. 
\end{restatable}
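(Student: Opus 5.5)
We prove the contrapositive: a maximal outerplanar graph whose weak dual is not a path has queue number at least~$2$. Since queue number is monotone under taking subgraphs, it suffices to find a fixed small subgraph that is present in every such graph and admits no 1-queue layout.

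\textbf{Step 1 (structure).} Let $G$ be maximal outerplanar. Its weak dual $T$ is a tree with maximum degree at most~$3$, because each inner face is a triangle. If $G$ is not an outerpath, then $T$ is not a path, so some node of $T$ has degree~$3$. This node is an inner triangle $abc$ all three of whose edges are chords. Across each chord lies another triangular face, with apex $x$ (across $ab$), $y$ (across $bc$) and $z$ (across $ca$). These three apexes are pairwise distinct and distinct from $a,b,c$. Indeed, removing the chords separates the outer cycle into three arcs, and each apex lies on a different arc. Hence $G$ contains the $6$-vertex ``sun'' graph $S$ as a subgraph. Here $S$ consists of the triangle $abc$ together with vertices $x,y,z$ adjacent to $\{a,b\}$, $\{b,c\}$ and $\{c,a\}$, respectively.

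\textbf{Step 2 ($S$ has no 1-queue layout).} Fix any vertex order. By the symmetry of $S$ we may rename so that $a\prec b\prec c$; the ear vertices are renamed along with the triangle. Recall that two edges nest when both endpoints of one lie strictly inside the span of the other. The edges sharing an endpoint with $ac$ never nest with it, but any other edge with both endpoints in the open interval $(a,c)$ nests inside $ac$. Now consider $z$, which is adjacent to $a$ and $c$. There are three cases.
\begin{itemize}
  \item If $a\prec z\prec c$, then $z\neq b$ lies on one side of $b$. Say $z\prec b$; then the edge $zc$ contains $b$, so $zc$ covers $zb$? No edge $zb$ exists, so we argue directly: $a\prec z\prec b\prec c$ gives $zc \supset$ nothing in general. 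Instead use that $a\prec z\prec b$ and the edge $ab$ spans $z$ while $zc$ and $ab$ merely cross; the nesting comes from the edge $zc$ versus $ac$? They share $c$. So we refine: the chord we need is $ab$ or $bc$ against $az$ or $zc$.
\end{itemize}
To avoid this confusion, the cleanest version of the case analysis is as follows. If $a\prec z\prec b$, then $zc$ strictly contains nothing and $az$ is nested inside $ab$? They share $a$, so no. Hence I would instead redo the check carefully as the main step, and I expect it to be the main obstacle.

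The expected argument runs as follows. First, $y$ (adjacent to $b,c$) cannot lie in $(a,c)$, since then $yb$ or $by$ lies strictly inside $(a,c)$ and nests in $ac$. Also $y\prec a$ is impossible, since then $yc$ strictly contains $ab$. Hence $c\prec y$. Symmetrically, $x\prec a$. For $z$:
\begin{itemize}
  \item $z\prec a$ makes $zc$ contain $ab$;
  \item $c\prec z$ makes $az$ contain $bc$;
  \item $z$ inside $(a,c)$ must be compared against the chords $ab$ and $bc$ and the edges $xb$ and $by$.
\end{itemize}
In the last case, if $a\prec z\prec b$, then $xb$ (with $x\prec a$) strictly contains $az$; if $b\prec z\prec c$, then $by$ (with $c\prec y$) strictly contains $zc$. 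So every placement of $z$ creates a nesting. Therefore $S$, and hence $G$, has queue number at least~$2$, which proves \cref{thm:maxouterplanar}.

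The only delicate point is this exhaustive placement check, in particular using the already-forced positions of $x$ and $y$ to kill the interior positions of $z$. The structural Step~1 is routine.
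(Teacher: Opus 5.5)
Your final case analysis is correct. At the top level it follows the paper: both arguments show that every maximal outerplanar graph that is not an outerpath contains the six-vertex sun (the paper's $\mathrm{Tr}_2$), and that the sun has no 1-queue layout. You also justify the containment, via a degree-3 node of the weak dual and the three disjoint arcs of the outer cycle; the paper only asserts it.

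The genuine difference is in how you verify that the sun has no 1-queue layout.
\begin{itemize}
\item The paper deletes one ear and uses the frontier-quadruple machinery of \cref{lem:four-local-states,lem:unique-continuation}, which is stated only later in the recognition section. This enumerates the exactly three 1-queue layouts of the remaining five-vertex graph. It then checks every position of the last ear in each of them.
\item You use the full $S_3$ symmetry of the sun to fix $a\prec b\prec c$. Edge $ac$ nests any edge with both ends strictly inside $(a,c)$. Together with ``$yc$ nests $ab$'' for $y\prec a$ and ``$ax$ nests $bc$'' for $c\prec x$, this forces $x\prec a$ and $c\prec y$. The four possible positions of $z$ are then ruled out by ``$zc$ nests $ab$'', ``$az$ nests $bc$'', ``$xb$ nests $az$'' and ``$by$ nests $zc$''.
\end{itemize}
I checked each of these nestings and they hold. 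Your route is shorter, fully self-contained, and avoids the forward reference. The paper's route reuses machinery it needs anyway for the algorithm, and as a by-product it identifies exactly which layouts of $\mathrm{Tr}_2-y$ exist.

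When you write this up, delete the abandoned first bullet and the ``To avoid this confusion'' paragraph in Step~2. They contain false starts, such as reasoning about a nonexistent edge $zb$, and contribute nothing to the valid argument that follows.
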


Then we give a linear-time algorithm that recognizes maximal outerpaths with queue number~1; see \cref{sec:linear-time}.

\begin{restatable}[thm:lineartime]{theorem}{lineartime}
  \label{thm:lineartime}
  Given a maximal outerpath~$G$, one can decide in linear time whether
  $G$ has queue number~$1$ and, if yes, construct a $1$-page queue
  layout within the same time bound.
\end{restatable}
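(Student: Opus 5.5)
The algorithm would rest on a structural characterization of which maximal outerpaths admit a 1-queue layout, followed by a left-to-right scan that checks it. A maximal outerpath $G$ with $n$ vertices has its triangles $t_1,\dots,t_{n-2}$ ordered along the weak dual path. Consecutive triangles share a chord, so the structure is determined by a sequence of ``turns'': when passing from $t_i$ to $t_{i+1}$, the new vertex is attached to one of the two endpoints of the current chord. I would work with the characterization of Heath and Rosenberg, namely queue number~$1$ if and only if the graph is arched leveled-planar. In a 1-queue layout, each triangle $\{a,b,c\}$ with $a\prec b\prec c$ has its edge $ac$ spanning $b$. Since no nesting is allowed, edges incident to $b$ cannot lie strictly inside $ac$ on both sides in a nested way. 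This strongly restricts how fans can grow.

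The first step is to understand fans. A fan with apex $v$ and path $u_1,\dots,u_k$ in a 1-queue layout forces the $u_i$ to appear in order, all on one side of $v$ or split into at most two monotone runs. I would prove a local lemma bounding the possible placements of apex and chord endpoints relative to each other. The result would be a constant-size set of \emph{states} describing how the current chord $\{x,y\}$ and its relevant neighbours sit in the order, for instance whether the next vertex must go to the far right, fit between $x$ and $y$, and so on.

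The second step is a dynamic program, or equivalently a finite automaton, over the triangle sequence. The invariant would be that the layout of $t_1,\dots,t_i$ restricted to a constant-size ``active'' set of vertices (the chord endpoints and the extreme vertices to their left and right) fully determines extendability. I would argue that vertices outside this window can never interact with future edges in a nesting-relevant way. The justification is that future edges are incident to chord endpoints, so any nesting they create only involves edges with span near the active region. Each step then tries the $O(1)$ positions for the new vertex relative to the active set and updates the state. This gives linear time overall.

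To construct the layout, I would store back-pointers and insert the vertices into a doubly linked list that represents the order. Since each insertion happens next to an active vertex, each takes $O(1)$ time. All edges go on one page.

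I expect the main obstacle to be the locality claim: that a bounded window of the current order suffices, and that inserting the new vertex at a place adjacent to active vertices is without loss of generality. My first attempt would be an exchange argument. Given any 1-queue layout, show it can be normalized so that each new vertex is placed immediately next to an extreme vertex or a chord endpoint, without creating nestings. This uses the fact that in a maximal outerplanar graph all remaining edges are incident to the current chord.
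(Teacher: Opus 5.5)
\textbf{Verdict: genuine gap.} Your outline has the right shape: a constant number of local states, updated along the weak dual path. But the step that carries the whole theorem is only announced, and the justification you sketch for it does not work.

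The claim that a bounded window of the order determines extendability is exactly what must be proved. ``Future edges are incident to chord endpoints, so nestings only involve edges with span near the active region'' is not an argument. The chord endpoints have old neighbours arbitrarily far to the left, so edges of arbitrary span end in the active region. You have to show that none of them can nest, or be nested by, a new edge. Similarly, ``try the $O(1)$ positions relative to the active set'' is ill-defined until you know that the chosen gap contains no older vertices, or that the position inside it is irrelevant. The exchange argument meant to guarantee this is not carried out.

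Your preparatory fan lemma is also false as stated. Take the fan with apex $u$ and path $t,w,v,x,x',x''$, which is a maximal outerpath. The order $w\prec t\prec v\prec u\prec x\prec x''\prec x'$ is a valid $1$-queue layout. Yet on neither side of $u$ do the path vertices appear monotonically, and the path indices read in layout order, $2,1,3,4,6,5$, need three monotone runs, not two.

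The missing idea is two concrete lemmas that make locality trivial.
\begin{itemize}
\item \emph{Four states.} Two consecutive triangles $\{t,u,w\}$ and $\{u,w,v\}$ induce $K_4-(t,v)$. Among four vertices $a\prec b\prec c\prec d$, only $(a,d)$ and $(b,c)$ can nest. So $\{t,v\}$ must occupy the outer or the middle position pair. After reversing the layout if necessary so that $t\prec v$, exactly four states remain.
\item \emph{Forced continuation.} Maintain the invariant that the two rightmost vertices of the current quadruple are the two rightmost vertices of the whole partial layout. Then, in every state, all gaps but one for the new vertex $x$ already produce a nesting among the quadruple edges and the two new edges. The remaining gap is the right end, or the gap between the two rightmost vertices, and it is a single gap.
\end{itemize}
Long old edges are harmless by domination. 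For example, in state $1$ any edge nesting $(u,x)$ also nests $(u,w)$, and any edge nested by $(u,x)$ is also nested by $(u,v)$.

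Hence the continuation is forced. A same move maps state $s$ to $s+1$, an opposite move maps it to $-s$, and state $2$ is a dead end. With this there is nothing to search and no dynamic program is needed: you run the forced extension once for each initial state $-1,0,1$. Each insertion happens at or next to the right end of the order, so linear time is immediate.

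Without these lemmas, your proposal does not establish that a constant-size state suffices, and that is the actual content of the theorem.
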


Finally, we investigate the interplay between outerpaths with queue
number~$1$ and their maximum vertex degree.  We have the following
existential results; see \cref{sec:outerpaths}.

\begin{restatable}[prop:max-degree]{proposition}{maxDegreeProposition}
	\label{prop:max-degree}
        \strut
	\begin{enumerate}
		\item Every maximal outerpath with queue number~$1$ has vertex degree at most~$6$.
		\item Every maximal outerpath with vertex degree at most $4$ has queue number~$1$.
		\item Every outerpath with vertex degree at most~$3$ has queue number~$1$.
		\item There are outerpaths with vertex degree at most $4$ whose queue number is~$2$.
	\end{enumerate}
\end{restatable}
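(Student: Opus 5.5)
We treat the four items separately, working throughout with the Heath--Rosenberg characterization: a graph has queue number~$1$ iff it is arched leveled-planar. Equivalently, we can reason directly with a vertex order $\prec$ in which no two edges nest.

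\textbf{Item 1.} Fix a maximal outerpath $G$ with a $1$-queue layout and a vertex $v$ of degree $d$. Since $G$ is a maximal outerplanar graph, the neighbours $u_1,\dots,u_d$ of $v$ form a path $u_1u_2\cdots u_d$ (the fan around $v$). The neighbours to the right of $v$ must be ordered so that no two edges at $v$ nest. This holds automatically for edges sharing the endpoint $v$, so the real constraint comes from the path edges $u_iu_{i+1}$ combined with the edges $vu_j$. Assume at least four neighbours lie on one side of $v$, say the right, at positions $w_1\prec w_2\prec w_3\prec w_4$. Each $w_j$ other than an endpoint of the path is adjacent to its two path neighbours. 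An edge between two of the $w$'s that skips over $w_k$ with $w_k$ adjacent to something further left would nest, and a short case analysis on how the path visits $w_1,\dots,w_4$ should yield a nesting pair, for example $vw_1$ nested inside some edge $u_iu_{i+1}$ reaching left of $v$, or an edge $w_1w_4$ containing $w_2w_3$. This gives at most three neighbours on each side, hence degree at most~$6$. Making this case analysis clean is the main obstacle. My first attempt would be to show that the neighbours on one side form at most a contiguous subpath of length~$3$, using the outerpath assumption that the fan triangles at $v$ form a contiguous piece of the dual path.

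\textbf{Item 2.} For maximal outerpaths of maximum degree at most~$4$, we build the order explicitly. The dual path lists the triangles $T_1,\dots,T_m$, and each new triangle adds one vertex attached to a "left" or "right" chain edge. Degree at most~$4$ forces the turn pattern to be highly restricted: no vertex sees more than two consecutive attachments. We would define a leveling, a BFS-type layering in which consecutive levels form the two chains locally, and verify arched leveled-planarity. Alternatively, we could give an incremental order in which each new vertex is appended at the right end and check that every new edge spans at most the last few vertices, so no nesting is created. The induction then checks that the last two or three vertices of the order are exactly the current "active" edge of the dual path.

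\textbf{Items 3 and 4.} For item~3, we would reduce to item~2 where possible, or do a direct induction along the dual path. For the non-maximal case, each inner face is a cycle that we process sequentially, placing its new vertices in a zigzag order that preserves a "last edge spans consecutive-ish vertices" invariant; degree at most~$3$ keeps only one chord per vertex. For item~4, we exhibit an explicit small outerpath of maximum degree~$4$, for instance one built from faces containing long odd cycles glued along chords. We then show that it is not arched leveled-planar by a case analysis, or alternatively by a computer check, which I expect the authors to use. An odd face of length at least~$5$ adjacent to two chords forces incompatible level assignments.
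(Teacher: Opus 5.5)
Your items~3 and~4 are not actually proved, and the mechanisms you suggest for them do not work.

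\textbf{Item~3.} ``Reduce to item~2'' only reaches subgraphs of $P_N^2$, the unique maximal outerpath with $\Delta\le 4$. These are graphs of bandwidth at most~$2$. A degree-$3$ outerpath with a face whose one side is a single edge and whose other side is a path of length~$4$ is not such a graph. After compressing to $m=|V(G)|$ positions, its outer cycle would have to be the unique, folded Hamiltonian cycle of $P_m^2$, and the rungs of that cycle only bound faces whose two sides have almost equal length.

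So the ``direct induction'' carries the whole weight. The invariant you name for it (the last chord spans ``consecutive-ish'' vertices) is too vague to check, and it is not what the known construction maintains. What is missing:
\begin{enumerate}
\item[(a)] Augment $G$, by adding chords and at most two vertices, to a degree-$3$ outerpath whose end faces are triangles and whose inner faces have at most one side longer than a single edge.
\item[(b)] Use an invariant that tolerates long chords. For example, require the current chord $u_iw_i$ to be a \emph{frontier}: $w_i$ is the rightmost vertex and has only one other neighbour so far, and every vertex strictly between $u_i$ and $w_i$ has all its other neighbours to the left of $u_i$.
\item[(c)] Place a long side $(x_1,\dots,x_p)$ folded: $x_1,x_3,\dots$ in one block and $x_2,x_4,\dots$ in a later block, with suitable already placed vertices inserted after $x_1$ and between the blocks, so that its edges cross instead of nest. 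Then check that the frontier survives every case: which side is long; the parity of its number of vertices (if odd, the last vertex goes at the end of the odd block or of the even block); and whether the other neighbour of $w_i$ lies left or right of $u_i$.
\end{enumerate}
In this construction the new chord may span almost a whole block. If you want a ``short chord'' invariant instead, you must prove it can always be maintained.

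\textbf{Item~4.} You give no graph, and the proposed obstruction is false. Take the pentagon $p_1\cdots p_5$ and glue on triangles $tp_1p_2$ and $sp_3p_4$. This is an outerpath of maximum degree~$3$ whose face of length~$5$ is adjacent to two chords, yet $t,p_2,p_1,p_3,p_5,s,p_4$ is a $1$-queue layout, as item~3 predicts. Face parity is not what the paper uses. Its witness is a $12$-vertex outerpath whose faces are triangles and three quadrilaterals. Non-embeddability is shown by an exhaustive case analysis: start from the four nesting-free orders of one $4$-cycle, then force the positions of further vertices one at a time until some vertex has no admissible position. Without an explicit graph and a complete verification (by hand or by a reproducible exhaustive search), item~4 is unproved. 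To conclude that the queue number is exactly~$2$, you also need the known fact that outerplanar graphs have queue number at most~$2$.

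\textbf{Items~1 and~2.} These are fine in outline and take a different route from the paper, which derives both from the $\mathrm{U}/\mathrm{D}$-encoding and the criterion that its lattice path has height at most~$3$.

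Your local approach to item~1 closes in two lines, but contiguity comes from the nesting condition, not from the outerpath property. Let $z$ and $w_1$ be the neighbours of $v$ nearest to $v$ on the left and on the right. A link edge $xy$ with $x\prec v\prec y$ nests $vw_1$ unless $y=w_1$, and nests $zv$ unless $x=z$. So at most one link edge joins the two sides, and the right neighbours form a contiguous subpath of the link path. The edge from $v$ to its rightmost neighbour $w_k$ nests every link edge between two other right neighbours. Hence all edges of that subpath are incident to $w_k$, which has at most two link neighbours, so at most three neighbours lie on each side. This is more elementary than the paper's argument and bounds the degree by~$6$ in every maximal outerplanar graph with queue number~$1$.

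For item~2, say explicitly that $\Delta\le 4$ forbids two consecutive attachments on the same side. So the graph is $P_n^2$, and its natural order, in which every edge has length at most~$2$, is nesting-free.
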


\section{Preliminaries}
\label{sec:preliminaries}

For an integer $p \geq 1$, we use \emph{$[p]$} as a shorthand for the set $\{1, 2, \dots, p\}$.
For a simple graph~$G$, we let \emph{$V(G)$} denote the vertex set of~$G$ and we let \emph{$E(G)$} $\subseteq {V(G) \choose 2}$ denote the edge set of~$G$.
Without loss of generality, we assume $G$ to be connected and nonempty.
For a vertex~$v$ of~$G$, we let \emph{$N_G(v)$} $= \{u \in V(G) \mid uv \in E(G)\}$ denote the \emph{neighborhood} of~$v$ in~$G$.
We let $\deg_G(v) = |N_G(v)|$ denote the degree of~$v$ in~$G$, and we let \emph{$\Delta(G)$} $= \max_{v \in V(G)} \deg_G(v)$ denote the \emph{maximum degree} of~$G$. 

\subparagraph*{Outerplanar Graphs and Outerpaths.}
A \emph{drawing} $\Gamma_G$ of $G$ maps every vertex $v$ of $G$ to a point $\Gamma(v) \in \mathbb{R}^2$ and every edge $uv$ of $G$ to a curve between $\Gamma(u)$ and $\Gamma(v)$.
We call $\Gamma_G$ \emph{planar} if every two edges meet only at common endpoints.
A planar drawing subdivides the plane into regions, called \emph{faces}.
The \emph{outer face} is the single face of $\Gamma_G$ that is not bounded, the remaining faces are the \emph{inner faces}.
The \emph{weak dual} of a planar drawing $\Gamma_G$ contains one vertex for every inner face of $\Gamma_G$.
Two vertices of the weak dual are adjacent if and only if their respective faces are separated by an edge.
A graph $G$ is called \emph{outerplanar} if it admits a planar drawing $\Gamma_G$ where every vertex lies on the outer face.
An outerplanar graph~$G$ is an \emph{outerpath} if the weak dual of $\Gamma_G$ is a path.
Note that outerpaths are biconnected. 
We say that a graph $G$ is \emph{maximal} with respect to a property (e.g., being outerplanar) if, for every non-edge $uv$ of $G$, the graph $G+uv$ does not have the property.

\subparagraph{(Arched) Leveled-Planar Graphs.} A planar graph is \emph{leveled-planar} if its vertices can be partitioned into levels $V_1,\ldots,V_k$ so that every edge connects only consecutive levels and, for each $\ell \in \{1,\dots,k\}$, the vertices in $V_\ell$ can be placed on the line $y=\ell$ such that the resulting straight-line drawing is planar.
A planar graph $G$ is \emph{arched leveled-planar} if it contains a set~$E'$ of edges, so-called \emph{arches}, so that $G'=G - E'$ admits a leveled-planar drawing~$\Gamma'$ that can be extended to a planar drawing of~$G$.
The arches must connect vertices that lie on the same level in~$\Gamma'$ and, in every level, all arches must be incident to the leftmost vertex.
Additionally, arches must fulfill the following technical requirement.
For~$\ell \in \{1,\dots,k-1\}$, let $V_\ell'$ be the subset of $V_\ell$ of those vertices that have neighbors in~$V_{\ell+1}$.
Then the arches incident to vertices in~$V_\ell$ cannot have their right endpoint to the left of the rightmost vertex in~$V_\ell'$.

\subparagraph*{Stack and Queue Layouts.}
A \emph{linear layout} of a graph $G$ consists of a total order ($\prec$) of $V(G)$, called the \emph{spine} and a partition of the edges $E(G)$ into \emph{pages}. For two vertices $u,v \in V(G)$, we say that $u$ is \emph{left} of $v$ if $u \prec v$ and \emph{right} of $v$ if $v \prec u$.
An \emph{interval} of $\prec$ is a (possibly empty) set of vertices which forms a contiguous block in $\prec$.
Let $tu, vw \in E(G)$ be two edges of $G$ with $t \prec u$, $v \prec w$, and $t \prec v$.
The edges \emph{cross} under $\prec$ if $t \prec v \prec u \prec w$ and they \emph{nest} under $\prec$ if $t \prec v \prec w \prec u$. 
To this end, a \emph{stack layout}~\cite{DBLP:journals/jct/BernhartK79}  is a linear layout in which no two edges assigned to the same page cross, while a \emph{queue layout}~\cite{DBLP:journals/siamcomp/HeathR92} is a linear layout in which no two edges assigned to the same page nest.
The minimum number of pages required in a stack (queue)  layout of G is the stack (queue) number of G. In particular, a graph has stack number one if and only if it admits a vertex order in which no two edges cross, also called a \emph{1-stack layout}, and it has queue number one if and only if it admits a vertex order in which no two edges nest, also called a \emph{1-queue layout}. Furthermore, graphs with stack number one are precisely the outerplanar graphs, while graphs with queue number one correspond to graphs admitting an arched leveled-planar representation~\cite{79stack,DBLP:journals/siamcomp/HeathR92}.
\section{Complexity of the Recognition Problem for General Outerplanar Graphs}
\label{sec:hard}

In this section, we establish NP-hardness for deciding whether an outerplanar graph admits a 1-queue layout.
We reduce from \textsc{3-Partition}, which is defined as follows.

\defdecproblem{\sc 3-Partition}{A multiset $S$ of integers such that $m=|S|/3$ and $B=\sum S/m$ are integers and, for every $s \in S$, $B/4 < s < B/2$.}{Does there exist a partition of~$S$ into $m$ multisets
such that the sum of the numbers in each multiset equals~$B$?}
The task is to decide whether there is a partition of~$S$ into $m$ sets
such that the sum of the numbers in each set equals~$B$.
Since \textsc{3-Partition} is {\em strongly} NP-hard,
we can assume that $B$ is polynomial in~$m$.

Our NP-hardness proof
is a polynomial-time reduction from
\textsc{3-Partition} to
the problem of recognizing
an arched leveled-planar graph.
Recall that a graph admits
an arched leveled-planar drawing if and only if
it admits a 1-queue layout~\cite{DBLP:journals/siamcomp/HeathR92}.
Given an arched leveled-planar drawing~$\Gamma$ and a vertex~$v$,
we let~$y_{\Gamma}(v)$ denote the level of~$v$ in~$\Gamma$.
Essentially, we will use the following observation.

\begin{observation}
	\label{obs:k-odd-length-cycles-k-arches}
    A graph with $k$ edge-disjoint odd-length cycles requires $k$~arches in an arched leveled-planar drawing.
\end{observation}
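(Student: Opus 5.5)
The plan is to use the leveled structure directly. In an arched leveled-planar drawing~$\Gamma$ of~$G$, let $E'$ be the set of arches and $G' = G - E'$ the leveled-planar part. Every edge of~$G'$ joins vertices on consecutive levels, so the map $v \mapsto y_{\Gamma}(v) \bmod 2$ is a proper $2$-colouring of~$G'$. Hence $G'$ is bipartite and contains no odd-length cycle.

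Now let $C_1,\dots,C_k$ be the $k$ pairwise edge-disjoint odd-length cycles of~$G$. Each $C_i$ has odd length, so it cannot lie entirely in the bipartite graph~$G'$. Therefore at least one edge of~$C_i$ belongs to~$E'$, that is, it is an arch. The same conclusion can be read off from the levels: going once around~$C_i$, the level changes by $\pm 1$ along every non-arch edge and by $0$ along every arch, and the changes sum to~$0$. So the number of non-arch edges on~$C_i$ is even, and since $|C_i|$ is odd, $C_i$ contains an odd, hence positive, number of arches.

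For each~$i$, pick one arch $e_i \in E(C_i) \cap E'$. The cycles are edge-disjoint, so the arches $e_1,\dots,e_k$ are pairwise distinct. This gives $|E'| \ge k$, as required.

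No step here is a real obstacle. The one point that needs care is that arches join vertices on the same level while all other edges join consecutive levels. This is exactly what makes the parity argument go through.
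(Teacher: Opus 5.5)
Your proof is correct and matches the reasoning the paper leaves implicit; the paper states the observation without proof. Every non-arch edge joins consecutive levels, so $G - E'$ is bipartite, each odd cycle must contain an arch, and edge-disjointness makes these $k$ arches distinct.
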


We will attach our construction modeling the \textsc{3-Partition}
instance to an \emph{anchor gadget}, which is a maximal outerpath that
consists of five triangles; see \cref{fig:nph:anchor-gadget-a}.  The
central triangle of the path consists of three vertices of degree~4,
which we call~$a$, $b$, and~$e$.  The edge~$ab$ lies on the outer face
of the unique outerplanar embedding
of the gadget.  We now show that the gadget behaves in a specific way
that allows us to attach two copies of our construction to~$a$
and~$b$.

\begin{figure}[tb]
    \begin{subfigure}[t]{.12\textwidth}
        \centering
        \includegraphics[page=9]{nph}
        \subcaption{}
        \label{fig:nph:anchor-gadget-a}
    \end{subfigure}
    \hfill
    \begin{subfigure}[t]{.21\textwidth}
        \centering
        \includegraphics[page=13]{nph}
        \subcaption{Case 1a}
        \label{fig:nph:anchor-gadget-b}
    \end{subfigure}
    \hfill
    \begin{subfigure}[t]{.21\textwidth}
        \centering
        \includegraphics[page=16]{nph}
        \subcaption{Case 1b}
        \label{fig:nph:anchor-gadget-c}
    \end{subfigure}
    \hfill
    \begin{subfigure}[t]{.21\textwidth}
        \centering
        \includegraphics[page=21]{nph}
        \subcaption{Case 2}
        \label{fig:nph:anchor-gadget-d}
    \end{subfigure}
    \hfill
    \begin{subfigure}[t]{.21\textwidth}
        \centering
        \includegraphics[page=26]{nph}
        \subcaption{Case 3a}
        \label{fig:nph:anchor-gadget-e}
    \end{subfigure}
    
    \bigskip
    
    \begin{subfigure}[t]{.24\textwidth}
        \centering
        \includegraphics[page=27]{nph}
        \subcaption{Case 3b}
        \label{fig:nph:anchor-gadget-f}
    \end{subfigure}
    \hfill
    \begin{subfigure}[t]{.24\textwidth}
        \centering
        \includegraphics[page=33]{nph}
        \subcaption{Case 4a}
        \label{fig:nph:anchor-gadget-g}
    \end{subfigure}
    \hfill
    \begin{subfigure}[t]{.24\textwidth}
        \centering
        \includegraphics[page=34]{nph}
        \subcaption{Case 4b}
        \label{fig:nph:anchor-gadget-h}
    \end{subfigure}
    \hfill
    \begin{subfigure}[t]{.24\textwidth}
        \centering
        \includegraphics[page=39]{nph}
        \subcaption{Case 5}
        \label{fig:nph:anchor-gadget-i}
    \end{subfigure}
    
    \caption{Anchor gadget of size~$k$.
        (a) shows the graph, (b) and (c) illustrate that $a$ and $b$
        cannot lie on the same level, while (d)--(i) show
        (up to symmetry w.r.t.\ $a$ and~$b$) the only six realizations of the anchor gadget
        in an arched leveled-planar drawing.}
    \label{fig:nph:anchor-gadget}
\end{figure}

\begin{lemma}
	\label{lem:anchor-gadget}
    Let $\Gamma$ be an arched leveled-planar drawing of the anchor gadget
    that has been extended by two vertex-disjoint cycles $C_a$ and $C_b$
    such that $C_a$ contains~$a$ and $C_b$ contains~$b$.
    Then, there is a vertex $v \in \{a, b\}$ such that,
    for every vertex $u \in V(C_v) \setminus \{v\}$,
    $y_{\Gamma}(v) < y_{\Gamma}(u)$.
    In other words, the cycle $C_v$ has its unique bottom point
    at~$v$ and is otherwise drawn above~$v$.
\end{lemma}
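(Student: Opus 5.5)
We first pin down the anchor gadget. It is a maximal outerpath with five triangles whose central triangle is $abe$, with $ab$ on the outer face and $a,b,e$ of degree~$4$. Reading \cref{fig:nph:anchor-gadget-a}, we take the triangle sequence to be $T_1,T_2,abe,T_4,T_5$. The central triangle is glued to $T_2$ along $ae$ and to $T_4$ along $be$. Each of $a$ and $b$ then has one further neighbour beyond the central triangle, and the outer triangles close off the path. The gadget contains five edge-disjoint triangles? Not exactly, since consecutive triangles share edges. Still, these triangles are odd cycles, and by \cref{obs:k-odd-length-cycles-k-arches} and its local use, every triangle must contain an arch. A triangle cannot be drawn with all three edges between consecutive levels. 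Hence each triangle has exactly the pattern "two vertices on one level $\ell$ joined by an arch, the third on $\ell\pm1$." There is one exception: a single arch can serve two triangles sharing that edge.

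\textbf{Step 1: enumerate the drawings.} We classify the possible level assignments and arch choices of the gadget, up to the symmetry swapping $a$ and $b$. This reproduces the cases of \cref{fig:nph:anchor-gadget}.

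First we show that $a$ and $b$ cannot lie on the same level (Cases~1a/1b). If they did, $ab$ would be an arch, so one of $a,b$ is the leftmost vertex of its level. Then $e$ and the neighbours in $T_2,T_4$ are forced onto adjacent levels. Two triangles hanging off $ae$ and $be$ on the same side would then need arches on a level that are not incident to the leftmost vertex, or would violate the technical condition on arch right endpoints. Either way this is a contradiction.

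So without loss of generality $y(a)=y(b)-1$, and $e$ is at the level of $a$ or of $b$. We then propagate through $T_2,T_4$ and the outer triangles. Each time, the arch must be incident to the leftmost vertex of its level, and we record the finitely many planar configurations, which are the six realizations (d)--(i).

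\textbf{Step 2: derive the conclusion in each realization.} In each of the six realizations we verify a common property for some $v\in\{a,b\}$ (typically the lower vertex, or the one that is leftmost on its level). The property is that $v$ is "exposed from above only": every face or region incident to $v$ lying at or below level $y(v)$ is blocked by gadget edges and arches.

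Concretely, a cycle $C_v$ through $v$, vertex-disjoint from the gadget except at $v$, must be drawn in a single face of the drawing incident to $v$. Planarity forbids it from crossing gadget edges, and the arch rules prevent it from using arches on $v$'s level that jump over gadget vertices. We then argue as follows:
\begin{itemize}
\item If $C_v$ had a vertex other than $v$ on level $\le y(v)$, then the two edges of $C_v$ at $v$ would have to leave downward or horizontally.
\item Leaving horizontally would require an arch at $v$ with $v$ leftmost, competing with the gadget's arch on that level.
\item Leaving downward would enter a region bounded by gadget edges.
\end{itemize}
The remaining subtlety is a cycle that goes up and then returns to $v$'s level elsewhere. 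We exclude it because $C_v$ is confined to the face of $v$ above level $y(v)$, and that face contains no other point of level $y(v)$ reachable without crossings.

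We must also handle why we get $v=a$ in some realizations and $v=b$ in others. This is fine, since the statement only needs existence. We additionally use that $C_a$ and $C_b$ are disjoint, so they cannot share an arch endpoint on the same level.

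\textbf{Main obstacle.} The hardest step is Step 1: the exhaustive case analysis showing that only the six realizations exist, and in particular ruling out $y(a)=y(b)$. For this we would rely systematically on two facts. First, all arches on a level share the leftmost vertex. Second, each triangle contributes one arch whose endpoints are on the same level. Together these give a strong counting constraint, since at most one "arch hub" can exist per level. Combined with the right-endpoint technical condition, this kills most branches quickly.
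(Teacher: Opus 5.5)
\textbf{There is a genuine gap.} Your overall plan (enumerate the realizations of the gadget, then inspect each one) matches the paper's. The gap is in Step~2, where the conclusion is actually proved: it rests on a claim that is false. You rule out a cycle $C_v$ that leaves $v$ upward and later returns to level $y(v)$ by saying that the face receiving $C_v$ contains no other point of that level reachable without crossings. But the definition only requires that the arches extend $\Gamma'$ to \emph{some} planar drawing.

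For example, in Case~3b the arch $af$ may be drawn as a short curve just above level $\ell$ between $a$ and $f$. Then the face receiving $C_a$ (to the left of $ag$) also contains points of level $\ell$ to the left of $a$ and to the right of $f$. Planarity does not stop $C_a$ from climbing over $g$ and coming down to a new vertex on level $\ell$ to the right of $f$. What stops it is the arching condition of the gadget's own arch on level $\ell$ ($ae$ in Case~2, $af$ in Case~3):
\begin{itemize}
\item a vertex of $C_a$ on level $\ell$ left of the arch's left endpoint makes that endpoint non-leftmost;
\item a vertex to the right of the right endpoint that has an edge going up violates the right-endpoint condition;
\item the stretch in between is sealed off by the gadget.
\end{itemize}

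Your other two exclusions need these same conditions.
\begin{itemize}
\item Being ``in competition with the gadget's arch'' does not forbid an arch of $C_v$ at $v$. Arches on a level only have to share the leftmost vertex, and $a$ \emph{is} leftmost on level $\ell$ in Cases~2 and~3, so $C_a$ may a priori start with an arch $ax$. The paper excludes this only by following the successor of $x$ along $C_a$ and invoking the conditions on $ae$/$af$ or $bd$/$be$, or a crossing.
\item A downward neighbour is not always enclosed by gadget edges. In Case~4a, a neighbour $x$ of $a$ on level $\ell-1$ to the right of $b$ creates no crossing. It is ruled out only because it violates the right-endpoint condition of the arch $bd$.
\end{itemize}

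There are two further problems.
\begin{itemize}
\item Your tentative choice of $v$ as ``typically the lower vertex'' is backwards. Once $b$ is placed one level below $a$, the paper's $v$ is always the \emph{upper} vertex $a$. The lower vertex can fail outright: in Case~3, $a$ is leftmost on level $\ell$ and $e$ lies right of $b$ on level $\ell-1$, so every edge from $b$ to a new vertex on level $\ell$ crosses $ae$.
\item Your sketch for excluding $y(a)=y(b)$ looks at the wrong triangles. With $b$ leftmost on the common level, $\triangle bde$ causes no contradiction, since $d$ can join that level through an arch $bd$. So ``the neighbours in $T_2,T_4$ are forced onto adjacent levels'' is false. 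The contradiction lives on $a$'s side: $\triangle aef$ forces $f$ onto $e$'s level with arch $ef$ and $e$ leftmost, after which $\triangle afg$ cannot receive an arch.
\end{itemize}
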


\begin{proof}
	We first analyze all realizations of the anchor gadget (without $C_a$ and $C_b$)
	as arched leveled-planar drawings.
	To this end, we go through several cases and subcases.
	To realize the triangle $\triangle abe$, two of its vertices need to lie on the same level.
	Let~$\ell$ be the level of~$a$.
	
	\subparagraph*{Case 1: $b$ lies on level $\ell$, too.}
	Due to the symmetry of the anchor gadget, we may assume
    that $b$ lies to the left of~$a$. %
	We further distinguish where $e$ lies.
	
	\smallskip \noindent \textit{Case 1a: $e$ lies on level $\ell - 1$.}
	See \cref{fig:nph:anchor-gadget-b}.
	To realize $\triangle aef$, $f$ cannot lie on level~$\ell$
    (because the edge~$af$ would arch but $b$ is the apex of the star of arching edges on
    level~$\ell$, and hence leftmost on its level).  Thus, $f$
    lies on level $\ell - 1$, and the edge $ef$ arches.
	Note that $f$ lies to the right of $e$ since
	otherwise the edge $af$ crosses $be$.
    Since $g$ is adjacent to both~$a$ and~$f$,
	$g$ needs to lie on the same level as~$a$ or~$f$.
	Then, either the edge~$ag$ or the edge~$fg$ is arching;
	neither of those, however, can arch since
	$b$ and $e$ are the leftmost vertices
	of the corresponding levels due to their arches.
	So, $g$ cannot be added, and this subcase has no realization.
    
	\smallskip \noindent \textit{Case 1b: $e$ lies on level $\ell + 1$.}
	See \cref{fig:nph:anchor-gadget-c}.
	By an argument symmetric to that in Case~1a,
    $f$ lies on level $\ell + 1$ to the right of~$e$.
	Again, we cannot add $g$ together with an arching edge $ag$ or $fg$
	since $e$ and $b$ are the leftmost vertices of the corresponding levels.
	So, this subcase has no realization.
	
	Since Case~1 has no realization, $a$ and $b$
	must lie on distinct neighboring levels.
	Due to the symmetry of the anchor gadget,
	we can switch the roles of~$a$ and~$b$.
	This allows us to assume henceforth that $b$ lies on level $\ell - 1$.
	
	\subparagraph*{Case 2: $e$ lies on level~$\ell$ to the right of~$a$.}
	See \cref{fig:nph:anchor-gadget-d}.
	To complete $\triangle aef$, $f$ lies on a neighboring level;
	more precisely, $f$ lies on level~$\ell + 1$ since otherwise
	$af$ would cross $be$, or $ef$ would cross $ab$.
	To complete $\triangle afg$, $g$ lies on level $\ell + 1$ to the left of $f$;
	$g$ does not lie on level $\ell$ as otherwise $e$ and $g$
	would be right endpoints of arches on $\ell$ that both have an edge going up.
	On level $\ell + 1$, $g$ cannot lie to the right of $f$
	as this would imply a crossing between $ag$ and $ef$.
	As every triangle in an arched leveled-planar drawing,
    $\triangle bde$ must contain an arch.
	Since $e$ is not leftmost on its level,
    $e$ cannot be involved in a second arch.
	Hence $d$ lies on level $\ell - 1$ and $bd$ arches.
	Moreover, $d$ lies to the right of $b$
	since otherwise $ab$ and $de$ would cross.
	Finally, $c$ lies on level~$\ell - 2$;
	$c$ must lie on a level neighboring the level of $b$ and $d$
	and it cannot lie above those two as this would
	imply a crossing with an edge incident to~$e$.
	This finishes the description of the only realization in this case.

	\subparagraph*{Case 3: $e$ lies on level~$\ell - 1$ to the right of~$b$.}
	See \cref{fig:nph:anchor-gadget-e,fig:nph:anchor-gadget-f}.
	Since $be$ arches, $b$ is the leftmost vertex of level~$\ell - 1$.
	To complete $\triangle aef$, one of the edges incident to~$f$ arches;
	this edge is $af$ since neither $e$ nor $f$
	can be the leftmost vertex on level~$\ell - 1$.
	Note that $f$ lies on level~$\ell$ to the right of~$a$
	because otherwise $ab$ and $ef$ would cross.
	To complete $\triangle afg$, $g$ lies on level $\ell + 1$;
	$g$ cannot lie on level~$\ell-1$ as this would imply a crossing
	between an edge incident to~$e$ and one incident to~$g$.
	To complete $\triangle bde$, $d$ lies on level $\ell - 2$;
	$d$~cannot lie on level~$\ell$ as this would imply a crossing
	between an edge incident to~$a$ and one incident to~$d$.
	To complete $\triangle bcd$, either $bc$ or $cd$ arches.
	
	\smallskip \noindent \textit{Case 3a: $c$ lies on level $\ell - 2$.}
	See \cref{fig:nph:anchor-gadget-e}.
	In this subcase, $c$ lies to the left of $d$
	as otherwise $bc$ and $de$ would cross.
	This finishes the description of the first of
        two realizations in this case.
	
	\smallskip \noindent \textit{Case 3b: $c$ lies on level $\ell - 1$.}
	See \cref{fig:nph:anchor-gadget-f}.
	In this subcase, $c$ lies to the right of~$e$
	because both are right endpoints of arches
    and $c$ must not lie inside the face~$abde$ (shaded in the figure)
    as condition~(iv) for arching would be violated there.
	This finishes the description of the second of
    two realizations in this case.

	\subparagraph*{Case 4: $e$ lies on level~$\ell$ to the left of~$a$.}
	See \cref{fig:nph:anchor-gadget-g,fig:nph:anchor-gadget-h}.
	To complete $\triangle aef$, $f$~lies on a level neighboring
    level~$\ell$.  Indeed, $f$ must lie on level~$\ell + 1$ since
    otherwise an edge incident to~$f$ would cross one incident to~$b$.
	To complete $\triangle afg$, $g$ lies on level $\ell + 1$ to the right of $f$;
	$g$ does not lie on level $\ell$ as $e$ is the leftmost vertex of level~$\ell$
	due to the arching edge~$ae$, which disallows $ag$ to arch.
	Also, $g$ does not lie on level $\ell + 1$ to the left of $f$
	as this would imply a crossing between $ag$ and $ef$.
	To complete $\triangle bde$, either $bd$ or $be$ arches.
	
	\smallskip \noindent \textit{Case 4a: $d$ lies on level $\ell - 1$.}
	See \cref{fig:nph:anchor-gadget-g}.
	In this subcase, $d$ lies to the left of $b$
	as otherwise $ab$ and $de$ would cross.
	Finally, $c$ lies on level~$\ell - 2$;
	$c$ must lie on a level neighboring the level of $b$ and $d$
	and it cannot lie above those two as this would imply a
    crossing between an edge incident to~$d$ and one incident to~$e$.
	This finishes the description of the first of
    two realizations in this case.
	
	\smallskip \noindent \textit{Case 4b: $d$ lies on level $\ell$.}
	See \cref{fig:nph:anchor-gadget-h}.
	In this subcase, $d$ lies to the right of~$a$
	because both are right endpoints of arches
    and $d$ must not lie inside the face~$afeb$ (shaded in the figure).
	Finally, $c$ lies on level~$\ell - 1$ to the right of~$b$;
	$c$ cannot lie on level $\ell$ as the edge $cd$ would need to arch
	but $e$ is already the leftmost vertex of level~$\ell$.
	Also, $c$ lies to the right of~$b$ as otherwise
	$cd$ and $ab$ would cross.
	This finishes the description of the second of
    two realizations in this case.

	\subparagraph*{Case 5: $e$ lies on level~$\ell - 1$ to the left  of~$b$.}
	See \cref{fig:nph:anchor-gadget-i}.
	Since $be$ arches, $e$ is the leftmost vertex of level~$\ell - 1$.
	To complete $\triangle aef$,
	$f$ lies on level $\ell$ or $\ell - 1$;
	$f$ does not lie on level $\ell - 1$
	since then $b$ and $f$ would be right endpoints of arches
	on the same level and both have edges going up.
	Note that $f$ lies on level~$\ell$ to the left of~$a$
	because otherwise $ab$ and $ef$ would cross.
	To complete $\triangle afg$, $g$ lies on level $\ell + 1$;
	$g$ cannot lie on level~$\ell-1$ as this would imply a crossing
	between an edge incident to~$e$ and one incident to~$g$.
	To complete $\triangle bde$, $d$ lies on level $\ell - 2$;
	$d$ cannot lie on level~$\ell$ as this would imply a crossing
	between edges incident to~$a$ and~$d$.
	To complete $\triangle bcd$, either $bc$ or $cd$ arches;
	$bc$ cannot arch since $e$ is already the leftmost vertex of level~$\ell-1$.
	Hence, $c$ lies on level~$\ell - 2$ and $cd$ arches.
	Note that $c$ lies to the right of~$d$
	because otherwise $bc$ and $de$ would cross.
	This finishes the description of the only realization in this case.
	
	\subparagraph*{Wrap-up.}
	Observe that this case analysis is complete
	and there are~-- up to symmetry~-- precisely the six
    distinct realizations depicted in \cref{fig:nph:anchor-gadget-d,fig:nph:anchor-gadget-e,fig:nph:anchor-gadget-f,fig:nph:anchor-gadget-g,fig:nph:anchor-gadget-h,fig:nph:anchor-gadget-i}.
	To complete the proof, we attach, in each of the realizations,
	the cycles $C_a$ and $C_b$ to $a$ and $b$, respectively.
	We have broken symmetries in such a way that $a$ always takes
    the role of~$v$.  It remains to show that no neighbor of~$a$
    in~$C_a$ lies on level~$\ell$ or level~$\ell - 1$.
    Note that
    it is impossible that both neighbors of~$a$ in~$C_a$ lie on
    level~$\ell + 1$ and at the same time another vertex of~$C_a$
    distinct from~$a$ lies on level~$\ell$.
    For an illustration, see \cref{fig:nph:no-going-around}.
    If this was the case,
    this would violate the arching conditions of an edge~$yz$ of
    the anchor gadget arching on level~$\ell$ because $y$ would
    not be the leftmost vertex or $z$ would have a vertex to its
    right with an edge going up.
    Also, entering level~$\ell$ between $y$ and $z$
    is blocked in all realizations
    by other vertices and edges of the anchor gadget.

	\begin{figure}[tb]
		\begin{subfigure}[t]{.45\textwidth}
			\centering
			\includegraphics[page=40]{nph}
			\subcaption{Arching condition is violated at~$z$
			(a vertex to the right of~$z$ has an
			edge to the level above).}
			\label{fig:nph:no-going-around-a}
		\end{subfigure}
		\hfill
		\begin{subfigure}[t]{.45\textwidth}
			\centering
			\includegraphics[page=41]{nph}
			\subcaption{Arching condition is violated at~$y$
			($y$ is not the leftmost vertex).}
			\label{fig:nph:no-going-around-b}
		\end{subfigure}
		
		\caption{The cycle~$C_a$ cannot have a vertex on level~$\ell$
			if the two neighbors of $a$ on~$C_a$ are on level~$\ell + 1$.
			Note that here $a = y$ is the leftmost vertex on level~$\ell$.
			The situation is similar if $a = z$ is not the leftmost vertex on level~$\ell$.}
		\label{fig:nph:no-going-around}
	\end{figure}

    Let $x$ be one of the two neighbors of~$a$ in~$C_a$.
	First suppose that the edge $ax$ of~$C_a$ arches.
        
    In Cases 4a, 4b, and~5
    (\cref{fig:nph:anchor-gadget-g,fig:nph:anchor-gadget-h,fig:nph:anchor-gadget-i}),
    this is excluded since either~$e$ or~$f$ is the leftmost
    vertex on level~$\ell$.

    In Case~2 (\cref{fig:nph:anchor-gadget-d}), $x$ would
    have to lie to the right of~$e$.  If the successor of~$x$
    along~$C_a$ was on level~$\ell + 1$, this would violate the
    arching conditions of~$ae$.  If the successor of~$x$
    along~$C_a$ was on level~$\ell - 1$, this would violate the
    arching conditions of~$bd$ or this would imply a crossing.
        
	In Case~3 (\cref{fig:nph:anchor-gadget-e,fig:nph:anchor-gadget-f}),
    $x$ would have to lie to the right of~$f$.
	If the successor of~$x$ along~$C_a$ was on level~$\ell + 1$,
	this would violate the arching conditions of~$af$.
	If the successor of~$x$ along~$C_a$ was on level~$\ell - 1$,
	this would violate the arching conditions of~$be$
    or this would imply a crossing.
	
	Finally suppose that a neighbor $x$ of~$a$ in~$C_a$ lies on level~$\ell - 1$.
	Since in all realizations there is at least one arch on level~$\ell - 1$,
	$x$ cannot be the leftmost vertex on level~$\ell - 1$.
	Therefore, in Case~2 (\cref{fig:nph:anchor-gadget-d}), $ax$ and $be$ would cross,
	and in Case~4b (\cref{fig:nph:anchor-gadget-h}), $ax$ and $bd$ would cross.
	Now suppose that~$x'$ is the successor of~$x$ along~$C_a$ and $x' \ne a$.
	In Case~3 (\cref{fig:nph:anchor-gadget-e,fig:nph:anchor-gadget-f} and Case~5 (\cref{fig:nph:anchor-gadget-i}),
	$x$ could be placed between~$b$ and~$e$,
	but then $xx'$ cannot arch and would therefore cross an edge of the anchor gadget.
	In Case~4a (\cref{fig:nph:anchor-gadget-g}),
	placing $x$ between $d$ and~$b$ implies a crossing between $ax$ and $be$,
	while placing $x$ to the right of $b$ violates the arching conditions of $bd$.
        
	This completes the proof.
\end{proof}

We now introduce a (sub)graph~$T_k$ called \emph{tower gadget of size $k$}.
We use this name because, under specific circumstances (that we get from the anchor gadget),
it requires a height depending on~$k$ in an (arched) leveled-planar drawing.
For an illustration, see \cref{fig:nph:tower-gadget-a}.
The graph $T_k$ is a sequence of $k$ 4-cycles (called \emph{main} 4-cycles)
where each main 4-cycle shares one edge with its predecessor and the opposite edge with its successor.
Moreover, each main 4-cycle has another \emph{additional} 4-cycle attached on
one of its unshared edges; the additional 4-cycles
are all attached on the same side (say on the left if we traverse the main 4-cycles upward).
We call the degree-3 vertex shared between the first main 4-cycle but not the second main 4-cycle~$v$.

\begin{figure}[t]
	\centering
	\begin{subfigure}[t]{.16\textwidth}
		\centering
		\includegraphics[page=3]{nph}
		\subcaption{}
		\label{fig:nph:tower-gadget-a}
	\end{subfigure}
	\hfill
	\begin{subfigure}[t]{.12\textwidth}
		\centering
		\includegraphics[page=4]{nph}
		\subcaption{}
		\label{fig:nph:tower-gadget-b}
	\end{subfigure}
	\hfill
	\begin{subfigure}[t]{.12\textwidth}
		\centering
		\includegraphics[page=5]{nph}
		\subcaption{}
		\label{fig:nph:tower-gadget-c}
	\end{subfigure}
	\hfill
	\begin{subfigure}[t]{.12\textwidth}
		\centering
		\includegraphics[page=6]{nph}
		\subcaption{}
		\label{fig:nph:tower-gadget-d}
	\end{subfigure}
	\hfill
	\begin{subfigure}[t]{.12\textwidth}
		\centering
		\includegraphics[page=7]{nph}
		\subcaption{}
		\label{fig:nph:tower-gadget-e}
	\end{subfigure}
	\hfill
	\begin{subfigure}[t]{.22\textwidth}
		\centering
		\includegraphics[page=8]{nph}
		\subcaption{}
		\label{fig:nph:tower-gadget-f}
	\end{subfigure}
	\caption{Tower gadget of size~$k$.
		(a) shows the graph, (b)--(f) shows the unique (arched) leveled-planar drawing
		if $v$ is on its unique level and no edge is represented as an arch.}
	\label{fig:nph:tower-gadget}
\end{figure}

\begin{lemma}
    \label{lem:tower-gadget}
    If $v$ is the only vertex of the tower gadget $T_k$ of size~$k$ on its level and no edge is represented as an arch, then $T_k$ requires at least $k + 2$ levels in any leveled-planar drawing.
\end{lemma}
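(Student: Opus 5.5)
The plan is an induction on $k$ that tracks where the top edge of the $i$-th main 4-cycle can lie. Fix a leveled-planar drawing of $T_k$ in which $v$ is alone on its level, say level $\ell$, and no edge is an arch. Every edge then joins consecutive levels, so $T_k$ is bipartite with the levels' parity as the bipartition. Each 4-cycle $wxyz$ is drawn either as a ``zigzag'', with its vertices on two levels, or as a ``diamond'', with one vertex on each of three consecutive levels $j-1,j,j+1$ and two vertices on level $j$. Let $u_0u_0'$ be the bottom edge of the first main 4-cycle, with $u_0=v$, and let $u_iu_i'$ be the edge it shares with the $(i+1)$-st main cycle. I would prove the following claim by induction on $i$: for every $i\ge 1$, the edge $u_iu_i'$ spans levels $\ell+i$ and $\ell+i+1$ (up to reflection $\ell\mapsto 2\ell-\cdot$), and the $i$-th additional 4-cycle forces a vertex on level $\ell+i+1$. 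Applying the claim with $i=k$ gives that levels $\ell,\dots,\ell+k+1$ are all used, i.e., at least $k+2$ levels.

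Base case. Since $v$ has degree $3$ and is alone on level $\ell$, all its neighbours lie on levels $\ell\pm1$. Its two neighbours in the first main cycle and its neighbour in the additional cycle cannot be split between both sides, because otherwise a 4-cycle through $v$ would need two vertices on level $\ell$. Hence they all lie on level $\ell+1$, and the first main cycle is a diamond with apex $v$, so its top vertex is on level $\ell+2$. Likewise the additional cycle on edge $vu_0'$ uses level $\ell+2$. This matches \cref{fig:nph:tower-gadget-b,fig:nph:tower-gadget-c}.

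Inductive step. This is the main obstacle: I must show that each new main 4-cycle cannot ``fold back down'' or stay flat (zigzag), but has to climb one level. The intended argument is planarity on levels. The previous cycles occupy levels $\ell+1,\dots,\ell+i$ and, together with $v$ being alone on level $\ell$, they enclose a region. The additional 4-cycle attached on the left side of main cycle $i$ sits in a face that blocks the left side of level $\ell+i$. Then the two new vertices of main cycle $i+1$ cannot both lie on level $\ell+i$ without crossing the previous cycle's edges or the additional cycle, and they cannot go down into levels already enclosed. So one of them must lie on level $\ell+i+1$ (or $\ell+i+2$ via a diamond). I would carry this out by a short case analysis on the zigzag/diamond configurations of cycle $i+1$ relative to the attached additional cycle, as in \cref{fig:nph:tower-gadget-d,fig:nph:tower-gadget-e,fig:nph:tower-gadget-f}. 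In each case, the forbidden configuration yields either a crossing or an edge between non-consecutive levels. Uniqueness of the drawing (given in the figure caption) is not needed; only the monotone increase of the maximum used level is.
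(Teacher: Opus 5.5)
Your overall plan (fix $v$ alone on level $\ell$ and show that each further main 4-cycle climbs exactly one level) is the paper's plan, and your base case is correct. The gap is the inductive step. You identify it as the main obstacle but do not carry it out, and the justification you sketch does not work.

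First, the case analysis is much smaller than you suggest. A 4-cycle can never be drawn as a zigzag, since a $K_{2,2}$ between two consecutive levels always contains two crossing straight-line edges. Also, $u_{i+1}$ and $u_{i+1}'$ are adjacent, so they never share a level. Take $u_i$ on the rail through $v$, so $u_i$ is on level $\ell+i$ and $u_i'$ on level $\ell+i+1$. Then the only configuration to rule out is the fold-back, in which $u_{i+1}$ lies on level $\ell+i-1$ (or, for the $(i+1)$-st additional cycle $u_i a_{i+1} b_{i+1} u_{i+1}$, the vertex $a_{i+1}$ does). Your reason for excluding it, that the earlier cycles ``enclose'' the lower levels, is not an argument. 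Levels $\ell+1,\dots,\ell+i-1$ are unbounded on both sides and nothing in the tower encloses them. An invariant that records only levels gives no reason why $u_i$ could not have a second neighbour somewhere on level $\ell+i-1$.

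What is missing is the local squeezing argument, which is exactly the step the paper states: the undrawn neighbours of the degree-5 vertex must be placed one level up \emph{between} the two vertices already drawn there. Concretely, for $i\ge 2$:
\begin{itemize}
\item On level $\ell+i-1$ lie $a_{i-1}$, $u_{i-1}$, $u_{i-2}'$. On level $\ell+i$ lie $b_{i-1}$ (adjacent to $a_{i-1}$ and $u_{i-1}$) and $u_{i-1}'$ (adjacent to $u_{i-1}$ and $u_{i-2}'$).
\item Planarity forces $u_{i-1}$ to lie between $a_{i-1}$ and $u_{i-2}'$.
\item The edges $a_{i-1}b_{i-1}$ and $u_{i-2}'u_{i-1}'$ then force $u_i$ strictly between $b_{i-1}$ and $u_{i-1}'$.
\item The edges $u_{i-1}b_{i-1}$ and $u_{i-1}u_{i-1}'$ leave $u_{i-1}$ as the only possible neighbour of $u_i$ on level $\ell+i-1$.
\end{itemize}
For $i=1$ this is just $v$ being alone on level $\ell$. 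Hence $u_{i+1}$ and $a_{i+1}$ lie on level $\ell+i+1$. Because zigzags are impossible, $u_{i+1}'$ and $b_{i+1}$ lie on level $\ell+i+2$, which closes the induction. So you must either carry this horizontal information in your invariant or re-derive it at every step.

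A minor slip: the first additional cycle is $u_0a_1b_1u_1$, attached to the rail edge $vu_1$, not to $vu_0'$. This is why $v$, and not $u_0'$, has degree~$3$, and why $u_1$ has degree~$5$.
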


\begin{proof}
	We assume that $v$ is the only vertex of~$T_k$ on level~0.
	Clearly, $v$'s neighbors in the tower gadget are on the same level (i.e., level~$1$ or $-1$);
	otherwise, there would be a second vertex on level~0 as the gadget is biconnected.
	We assume, without loss of generality, that these neighbors are on level~1;
	since there are no arches allowed,
	going up or down makes no difference
    for the realizability.
	To close the first main 4-cycle and the first additional 4-cycle,
	the two vertices opposite to~$v$ are on level~2
	(see \cref{fig:nph:tower-gadget-b}).
	Furthermore, note that the middle vertex on level~1 has degree~5
	and all its undrawn neighbors must be placed one level above on level~2
	between the two vertices already assigned to level~2 to avoid crossings
	(see \cref{fig:nph:tower-gadget-c}).
	Now to close the second main cycle and the second additional cycle,
	the missing vertex of each of these cycles can only lie on level~3
	(see \cref{fig:nph:tower-gadget-d}).
	Now the argument repeats because the degree-5 vertex of level~2
	has its two undrawn neighbors between the already drawn vertices
	one level above on level~3
	(see \cref{fig:nph:tower-gadget-e}).
	The first main and additional 4-cycles occupy three levels
	while, for each consequent pair of main and additional 4-cycle,
	we require one more level.
	This results in an (arched) leveled-planar drawing on $k + 2$ levels
	(see \cref{fig:nph:tower-gadget-f}).
\end{proof}

\restateinbody{\nphardness*}
\label{thm:nphardness*}

\begin{proof}
    Clearly, the problem is in NP.
    A linear ordering of the vertices serves as a polynomial-size certificate,
    for which it can be determined in polynomial time whether
    it is a queue layout of the input graph.
    
    To show NP-hardness, we reduce from \textsc{3-Partition}.
    Given an instance $S$ of \textsc{3-Partition},
    we construct the following graph~$G$;
    see \cref{fig:nph:outerplanar,fig:nph:arched-leveled-planar} for an overview.
    We start with an anchor gadget
    and we attach to each of~$a$ and~$b$
    a copy of a (sub)graph~$G'$
    depending on the \textsc{3-Partition} instance.
    We only describe it with respect to~$a$.
    Essentially, we attach several (mostly odd-length) cycles to~$a$.

    For every $s \in S$, we have a \emph{number gadget};
    see the blue vertices and edges with orange background on the left in \cref{fig:nph:outerplanar}.
    The number gadget of~$s$ will represent the number $s$ by requiring $s$ arches on adjacent levels.
    It consists of $s$ edge-disjoint odd-lengths cycles
    that share only the vertex~$a$.
    The cycles are ordered and have lengths
    $2m(B+2) - 3$, $2m(B+2) - 5$, $2m(B+2) - 7$, \dots
    Consider a pair of consecutive cycles in the same number gadget.
    Connect two vertices neighboring $a$ within these two cycles
    by a path of length~2; see \cref{fig:nph:outerplanar} (brown edges and vertices).
    Do this in such a way that all length-2 paths are edge-disjoint.
    
    We will separate number gadgets that correspond to numbers summing to~$B$ by \emph{pocket gadgets};
    see the black and green vertices and edges with yellow background on the right in \cref{fig:nph:outerplanar}.
    For each $i \in [m]$, there is a pocket gadget,
    which contains two edge-disjoint odd-length cycles~$C_i^1$ and~$C_i^2$ (colored black)
    of lengths $2i(B+2) - 1$ and $2i(B+2) + 1$, respectively.
    Moreover the pocket gadget consists of three (green) tower gadgets of size $i(B+2)$;
    one shares an edge with $C_i^1$ and an edge with $C_i^2$,
    one shares an edge with only $C_i^1$, and
    one shares an edge with only $C_i^2$;
    see \cref{fig:nph:outerplanar} on the bottom of the pocket gadgets on the right side.
    
    This finishes the description of~$G$ and~$G'$.
    We next prove that $G$ has pathwidth at most~5.
    Afterwards, we show that $G$
    admits an arched leveled-planar drawing
    if and only if $S$ is a yes-instance.
    
    \subparagraph*{Constant Pathwidth.}
    
    We first show that $G$ has pathwidth at most~5.
    To this end, we describe path decomposition
    with at most six vertices per bag.
    Refer to \cref{fig:nph:outerplanar}.
    
    In the first sequence of bags, we cover the copy of~$G'$ attached to~$a$,
    in the second sequence, we cover the anchor gadget,
    in the third sequence we cover the copy of~$G'$ attached to~$b$.
    In the first sequence, we keep~$a$ in all bags.
    In the third sequence, we keep~$b$ in all bags.
    
    Observe that the outer boundary of a number gadget
    of a number~$s$ attached to~$v \in \{a, b\}$
    is a cycle that starts in~$v$,
    then traverses the (blue) odd-length cycles one after the other
    and between each two consecutive odd-length cycles,
    there are the (brown) path of length~2.
    Observe that having~$v$ and each pair
    of neighboring vertices in a bag~-- in the order
    along the outer boundary~--
    covers all edges of the number gadget.
    Hence, each number gadget has pathwidth~2.
    
    Now consider the ($i$-th) pocket gadget
    and let it be attached to~$v \in \{a, b\}$.
    We have~$v$ in all bags.
    We start by traversing the (green) tower gadget
    that shares a (black) vertex~$x$ with $C_i^2$.
    While traversing, we keep also~$x$ in the bag.
    We first traverse pairs of vertices 
    along the additional 4-cycle of the first main 4-cyle.
    For that, we have at most 4 vertices in the bag.
    After that we have all four vertices of
    the first main 4-cycle in a bag.
    We keep them and continue with pairs of
    consecutive vertices of the additional
    4-cycles of the second 4-cycle.
    This gives us six vertices in the bag.
    After that we have all four vertices of
    the second main 4-cycle in the bag (still six vertices).
    Now we can remove the two vertices
    shared between the first and second
    main 4-cycle.
    We repeat this process with the rest of the tower gadget.
    With at most six vertices per bag, we have covered
    all edges of that tower gadget.
    We then traverse $C_i^2$ with pairs of consecutive vertices
    along the boundary (at most three vertices per bag).
    Now we have reached the (black) vertex~$y$
    shared between $C_i^2$ and the next tower gadget,
    call it~$T$.
    When traversing~$T$, we keep
    the (black) vertex~$z$ shared between~$T$ and~$C_i^1$.
    Now traverse~$T$ in the same way as described previously
    for the tower gadget (with keeping~$z$ instead of~$x$).
    After that, we traverse~$C_i^1$, and then the remaining
    tower gadget.
    For the whole pocket gadget we have at most six vertices in a bag,
    i.e., pathwidth at most~5.
    
    So to cover the two copies of~$G'$,
    we use the previously
    described sequence for each number gadget,
    then for each pocket gadget.
    
    Finally, we describe the sequence in the middle
    representing the anchor gadget. We have the following sequence of bags:
    $(\{a, f, g\}, \{a, e, f\}, \{a, b, e\}, \{b, d, e\}, \{b, d, e\})$.
    This is at most three vertices per bag (pathwidth~2).
    
    This finishes the description
    of the path decomposition,
    which has at most six vertices per bag.

    \begin{figure}[p]
        \hspace*{-10ex}\includegraphics[page=1]{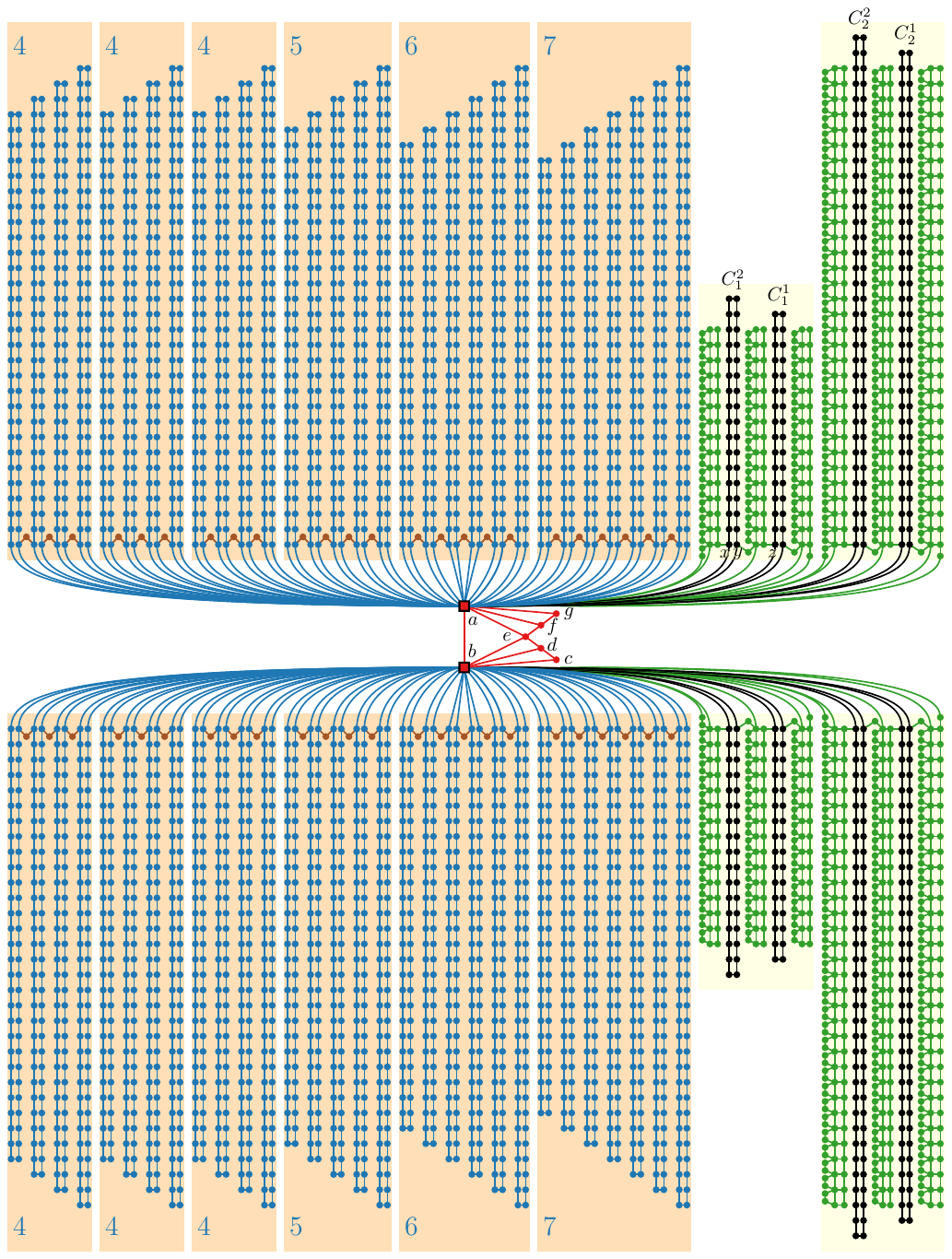}
        \caption{Outerplanar drawing for the instance $S = \{4, 4, 4, 5, 6, 7\}$.}
        \label{fig:nph:outerplanar}
    \end{figure}
    
    \begin{figure}[p]
        \hspace*{-20.5ex}\includegraphics[page=2]{nph}
        \caption{Arched leveled-planar drawing of the instance $S$ from \cref{fig:nph:outerplanar}.}
        \label{fig:nph:arched-leveled-planar}
    \end{figure}
    
    \subparagraph*{\textsc{3-Partition} of~$S$ $\Rightarrow$ arched leveled-planar drawing~$\Gamma$ of~$G$.}
    Suppose there is a solution $\{S_1, \dots, S_m\}$
    to the given \textsc{3-Partition} instance~$S$, i.e.,
    for each $i \in [m]$, 
    $S_i$ contains three numbers that sum to~$B$
    and $S_1, \dots, S_m$ together contain every element from~$S$.
    This allows us to construct an arched leveled-planar drawing~$\Gamma$ of~$G$.
    For an overview see \cref{fig:nph:arched-leveled-planar}.
    
    Draw the anchor gadget as in \cref{fig:nph:anchor-gadget-f}
    such that $a$ is on level~0.
    
    Next we describe the drawing of the pocket gadgets.
    For each $i \in [m]$, let $C_i^1$ and~$C_i^2$
    arch on levels $i(B+2) - 1$ and $i(B+2)$, respectively.
    Observe that this is exactly possible if each such cycle ``goes up'' at~$a$ with $i(B+2) - 1$ ($i(B+2)$, resp.) edges,
    then arches and then ``goes down'' with the remaining $i(B+2) - 1$ ($i(B+2)$, resp.) edges.
    Moreover, note that the three tower gadgets of the $i$-th pocket gadget can be drawn without arches
    going up to level~$i(B+2) - 1$ next to $C_i^1$ and~$C_i^2$~--
    the one that shares edges with both between them on one side
    and the other two on the other side somewhere next to the edges of~$C_i^1$ and~$C_i^2$.
    On the shared levels, $C_i^1$ has the outer and $C_i^2$ the inner vertices;
    see \cref{fig:nph:arched-leveled-planar}.
    
    For each $i \in [m]$, let $S_i = {s_i^1, s_i^2, s_i^3}$.
    Recall that the corresponding number gadgets have $s_i^1$, $s_i^2$, and $s_i^3$
    odd-lengths cycles, respectively.
    For brevity of notation, define $\ell = (i - 1) (B+2)$.
    We let the number gadget of $s_i^1$ arch on levels
    $\ell + 1, \ell + 2, \dots, \ell + s_i^1$,
    we let the number gadget of $s_i^2$ arch on levels
    $\ell + s_i^1 + 1, \ell + s_i^1 + 2, \dots, \ell + s_i^1 + s_i^2$, and
    we let the number gadget of $s_i^3$ arch on levels
    $\ell + s_i^1 + s_i^2 + 1, \ell + s_i^1 + s_i^2 + 2, \dots, \ell + B$.
    Observe that this is precisely between the levels where the $(i-1)$-th pocket gadget (if existent)
    and the $i$-th pocket gadget are arching~--
    their odd lengths cycles arch on levels
    $\ell - 1$, $\ell$, $\ell + B + 1$, and $\ell + B + 2$.
    
    It remains to discuss the full drawing of the odd-lengths cycles
    in the number gadgets since, unlike the pocket gadgets,
    their length does not precisely match the level where they arch.
    First note that every odd-length cycle in a number gadget
    can reach the level where it arches:
    The topmost level of arching for $s_m^3$ is $m (B+2) - 2$.
    The longest odd-length cycle of the number gadget of~$s_m^3$
    has length~$2m (B+2) - 3$, which is two vertices for
    each level in $[m (B+2) - 2]$, together with one vertex on level~0, which is~$a$.
    Clearly, the next odd-length cycle of the number gadget of~$s_m^3$
    can reach up to level~$m (B+2) - 3$ etc.
    The odd-length cycles of all other number gadgets are longer than needed.
    We can always fit this extra length into the drawing by adding zigzag
    parts of edges going up and down; see \cref{fig:nph:arched-leveled-planar}.
    Note that the extra length is always an even number~$k$,
    which perfectly fits going up $k/2$ times and going down $k/2$
    times in the zigzag parts.
    We can always fit the zigzag parts such that they do not enter level~0;
    for the first odd-length cycle of the number gadget of~$s_1^1$
    we can zigzag between levels 1 and~2, which does not prevent
    the arching of the second odd-length cycle of the number gadget.
    
    The order of the neighbors of~$a$ from left to right are
    half of the vertices in the number gadgets of~$s_1^1$, $s_1^2$, $s_1^3$,
    then half of the vertices belonging to the first pocket gadget,
    then half of the vertices in the number gadgets of~$s_2^1$, $s_2^2$, $s_2^3$,
    then half of the vertices belonging to the second pocket gadget, etc.
    In the ``middle'' we have all vertices of the $m$-th pocket gadget,
    then the other half of the vertices in the number gadgets of~$s_m^1$, $s_m^2$, $s_m^3$,
    then the other half of the vertices in the $(m-1)$-th pocket gadget, etc.,
    until we have reached the other half of the vertices
    in the number gadgets of~$s_1^1$, $s_1^2$, $s_1^3$.
    Also observe that the (brown) connecting paths of length two
    between each two consecutive odd-length cycles in the number gadgets
    can be drawn by placing their middle vertex on level~2;
    they can be drawn either on the left or the right half of the drawing.
    
    The drawing of the second copy of~$G'$ attached to~$b$
    is the same but mirrored: $b$ lies on level~$-1$
    and the rest of construction on levels $-2$ to $- m (B+2) - 2$.
    Also note that the arching conditions can be in both parts observed:
    the endpoints of the edges arching on levels $1, 2, \dots$, and $-2, -3, \dots$
    can be placed as the leftmost and the rightmost vertices of their levels,
    respectively; see \cref{fig:nph:arched-leveled-planar}.
    
    \subparagraph*{Arched leveled-planar drawing~$\Gamma$ of~$G$ $\Rightarrow$ \textsc{3-Partition} of~$S$.}
    Suppose there is a arched leveled-planar drawing~$\Gamma$ of~$G$.
    We next describe how to obtain a solution of~$S$ by~$\Gamma$.
    
    By \cref{lem:anchor-gadget}, the anchor gadget of~$G$ is drawn in one of six ways
    and there is a $v \in \{a, b\}$ such that all cycles in the number and pocket gagdets
    that are attached to~$v$ are drawn above~$v$.
    W.l.o.g., assume $v = a$ and that the level of~$a$ is~0.
    From now on, we only consider the drawing of~$G'$ attached to~$a$.
    Due to both \cref{lem:anchor-gadget} and \cref{obs:k-odd-length-cycles-k-arches},
    every odd-length cycle of the number gadgets and of the pocket gadgets
    has an arch on a distinct level greater than~0.
    All number gadgets together have $\sum_{s \in S} s = m B$ odd lengths cycles
    and the pocket gadgets together have $2m$ odd length cycles.
    Hence, there are $m (B + 2)$ levels needed for arching.
    Since the longest such cycle has length $2m (B+2) + 1$,
    it can reach at most level $m (B + 2)$ and, therefore,
    on every level from 1 to $m (B + 2)$ there is an edge arching.
    
    Consider the $i$-th pocket gadget.
    As all levels up to level~$m (B + 2)$
    are required for the odd-length cycles to arch,
    no tower gadget within a pocket gadget can arch.
    Hence, by \cref{lem:tower-gadget}, the three tower gadgets
    reach up to level $i(B+2) - 1$.
    Consider the tower gadget~$T$ that shares an edge with both~$C_i^1$ and~$C_i^2$.
    Clearly, $C_i^1$ and $C_i^2$ ``rise'' on the left and the right side of~$T$
    and then arch at vertices $u_1$ and $u_2$, respectively.
    If $u_1$ and $u_2$ are the leftmost vertices on their levels,
    then one of $u_1$ and $u_2$ must be at least on level~$i(B+2)$,
    which is one level above the tower gadget~--
    this can only be $u_2$ because only $C_i^2$ is sufficiently long.
    Then, $C_i^1$ has its arch on level~$i(B+2) - 1$
    since no other odd-length cycle of a pocket or a number gadget
    can arch in between and there needs to be an arch on every level.
    Now consider the case that $u_1$ and $u_2$ are the right endpoints of arches.
    Suppose that one of $u_1$ and $u_2$ lies to the left of
    the tower-gadget vertices on level~$i(B+2) - 1$ and the other one is further below.
    This is not yet a contradiction for the corresponding cycle to arch,
    however, on the left side, the other halves of $C_i^1$ and $C_i^2$
    have the reversed order, and they are both incident to individual copies
    of tower gadgets of size~$i(B+2)$
    reaching up to level~$i(B+2) - 1$.
    Thus, the other endpoint of edge arching on level~$i(B+2) - 1$
    is not the leftmost vertex on its level excluding this case.
    Therefore, $C_i^1$ and $C_i^2$ arch on levels $i(B+2) - 1$ and $i(B+2)$, respectively.
    Due to their length, they can not arch any higher.
    
    Knowing exactly where the odd-length cycles of the pocket gadgets arch,
    we also know that we have exactly $B$ free levels to arch
    for the number gadgets between each two neighboring pocket gadget.
    The arches of the same number gadget cannot lie above and below
    the arches of the $i$-th pocket gadget for some $i \in [m]$
    because the number gadgets are connected (by the (brown)
    length-2 paths and not just via~$a$)
    and the cycles $C_i^1$ and~$C_i^2$ act as separators in a planar drawing.
    
    For $i \in [m]$ call the levels between the $(i-1)$-th and
    the $i$-th pocket gadget the \emph{$i$-th pocket}.
    By the previous observation, a number gadget of a number~$s$ can be assigned
    to precisely one pocket and requires $s$ levels to arch.
    Since there must be an arch on every level,
    there are $B$ arches in the $i$-th pocket that belong
    to three number gadgets of numbers~$s_i^1$, $s_i^2$, and~$s_i^3$
    where $B = s_i^1 + s_i^2 + s_i^3$.
    This precisely defines a 3-partition
    $\{\{s_1^1, s_1^2, s_1^3\}, \{s_2^1, s_2^2, s_2^3\}, \dots, \{s_m^1, s_m^2, s_m^3\}\}$ of~$S$.
\end{proof}

\restateinbody{\paranphardness*}
\label{cor:parahardness-treewidth*}

We remark that \Cref{cor:parahardness-treewidth} is tight in the sense that graphs of treewidth~$1$, i.e., trees, have stack and queue number~$1$~\cite{2004layouts}.

\section{Linear-Time Recognition Algorithm for Maximal Outerplanar Graphs}
\label{sec:linear-time}

In this section, we show that 1-queue layouts of maximal outerplanar graphs can be recognized in linear time.
We first prove that every maximal outerplanar graph that admits a 1-queue layout is an outerpath.
We then analyze 1-queue layouts of maximal outerpaths along the path of the weak dual and show that, at each step, only four local states are possible and the next vertex has a unique position, which yields the recognition algorithm.
Finally, we reformulate the criterion in terms of a binary encoding of the outerpath, which we use for the structural results in \cref{sub:maximal-outerpaths}.

\subsection{Reduction to Maximal Outerpaths}

We first reduce the problem to maximal outerpaths by showing that all other maximal outerplanar graphs contain an obstruction.

\restateinbody{\maxouterplanar*}
\label{thm:maxouterplanar*}
\begin{proof}
    A maximal outerplanar graph is not an outerpath if and only if it contains the triangular grid $\mathrm{Tr}_2$ as a subgraph; see \cref{fig:no_outer}. We will now prove that there is no 1-queue layout of $\mathrm{Tr}_2$.

\begin{figure}[tb]
    \centering
    \begin{subfigure}[t]{.32\textwidth}
        \includegraphics[page=1]{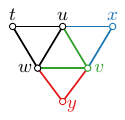}
        \subcaption{}
        \label{fig:no_outer}
    \end{subfigure}
    \hfill
    \begin{subfigure}[t]{.65\textwidth}
        \includegraphics[page=2]{figures/maximal-no.pdf}
        \subcaption{}
        \label{fig:1_queue}
    \end{subfigure}

    \medskip
    
    \begin{subfigure}[t]{\textwidth}
        \includegraphics[page=3]{figures/maximal-no.pdf}
        \subcaption{}
        \label{fig:1_queue-Q1}
    \end{subfigure}

    \medskip
    
    \begin{subfigure}[t]{\textwidth}
        \includegraphics[page=4]{figures/maximal-no.pdf}
        \subcaption{}
        \label{fig:1_queue-Q2}
    \end{subfigure}

    \medskip
    
    \begin{subfigure}[t]{\textwidth}
        \includegraphics[page=5]{figures/maximal-no.pdf}
        \subcaption{}
        \label{fig:1_queue-Q3}
    \end{subfigure}
    \caption{(a) The triangular grid $\mathrm{Tr}_2$ (b) $1$-queue layout of $\mathrm{Tr}_2\setminus\{y\}$ (c)--(e) vertex $y$ cannot appear in the $1$-queue layout of \ref{q:1}--\ref{q:3}.}
    \label{fig:maximal_no}
\end{figure}

    We first enumerate the $1$-queue layouts of $\mathrm{Tr}_2\setminus\{y\}$. Up to symmetry, assume that $t \prec v$. By \cref{lem:four-local-states,lem:unique-continuation}, the subgraph $\mathrm{Tr}_2\setminus\{x,y\}$ admits exactly four $1$-queue layouts (see the first row of \cref{fig:maximal-outerpath-states}). Reintroducing $x$ extends three of them to a layout of $\mathrm{Tr}_2\setminus\{y\}$, while the fourth admits no valid extension; in each extension, the position of $x$ is uniquely determined. 
    Thus, $\mathrm{Tr}_2\setminus\{y\}$ has exactly three $1$-queue layouts:
    \begin{enumerate*}[label={\textcolor{lipicsGray}{\bf\textsf{(Q.\arabic*)}}},ref={\textsf{Q.\arabic*}}]
    \item\label{q:1} $w \prec t \prec v \prec u \prec x$
    \item\label{q:2} $t \prec w \prec u \prec v \prec x$
    \item\label{q:3} $t \prec u \prec w \prec x \prec v$
\end{enumerate*} ; see \cref{fig:1_queue}.

    Now suppose, for a contradiction, that $\mathrm{Tr}_2$ admits a $1$-queue layout, and let $\prec$ denote its vertex ordering.  
    We will prove that for each of the orders \ref{q:1}--\ref{q:3}, an edge of vertex $y$ is involved in a nesting, which leads to a contradiction.
    \noindent{\textcolor{lipicsGray}{\bf\textsf{(Q.1)}}}: If $y\prec w$, then the edge $(y,v)$ nests the edge $(w,t)$; if $w\prec y\prec u$, then the edge $(w,u)$ nests the edge $(y,v)$; and finally, if $u\prec y$, then the edge $(w,y)$ nests the edge $(v,u)$; see \cref{fig:1_queue-Q1}.
    Similarly, {\textcolor{lipicsGray}{\bf\textsf{(Q.2)}}}: If 
    $y \prec w$, then edge $(y,v)$ nests edge $(w,u)$; if $w \prec y \prec u$, then edge $(t,u)$ nests edge $(w,y)$; if $u \prec y \prec x$, then edge $(u,x)$ nests edge $(y,v)$; if $x \prec y$ then edge $(w,y)$ nests edge $(v,x)$; see \cref{fig:1_queue-Q2}.
    Finally, {\textcolor{lipicsGray}{\bf\textsf{(Q.3)}}}: If $y \prec u$, then edge $(y,v)$ nests edge $(u,w)$; if $u \prec y \prec v$, then edge $(u,v)$ nests edge $(w,y)$; if $v \prec y$, then edge $(w,y)$ nests edge $(x,v)$; see \cref{fig:1_queue-Q3}.
\end{proof}

\subsection{Recognition Algorithm for Maximal Outerpaths}
\label{sub:rec-max-outerpaths}

We now describe the recognition algorithm for maximal outerpaths.
The key observation is that in a $1$-queue layout, the local order of each triangle and its predecessor can attain only four combinatorial states, and each new vertex has a uniquely determined position.

Let $G$ be a maximal outerpath with $n$ vertices and let $F_0,\dots,F_{n-3}$ be its inner triangles,
ordered along the path of the weak dual.
For $i<n-3$, we call the edge shared by $F_i$ and $F_{i+1}$ the \emph{frontier edge} at step~$i$.

Consider a step $i\ge 1$.
We label the vertices of $F_{i-1}$ and $F_i$ as follows:
$t_i$ is the vertex of $F_{i-1}$ that does not belong to $F_i$,
$v_i$ is the vertex of $F_i$ that does not belong to $F_{i-1}$,
and $u_i,w_i$ are the endpoints of the edge shared by $F_{i-1}$ and $F_i$,
named so that the frontier edge at step~$i$ is $(u_i,v_i)$.
(For $i=n-3$, there is no frontier edge and the labels $u_{n-3},w_{n-3}$ can be chosen arbitrarily.)
We call $(t_i,u_i,w_i,v_i)$ the \emph{frontier quadruple} at step~$i$;
it induces the graph $K_4-(t_i,v_i)$.

We first analyze which linear orders of a frontier quadruple are admissible. 
Since the analysis is local, we drop the index~$i$.

\begin{lemma}
  \label{lem:four-local-states}
  Let $t,u,w,v$ be four vertices inducing $K_4-(t,v)$.  If we assume
  that $t\prec v$, then a linear order of $t,u,w,v$ contains no two
  nested edges if and only if it is one of the following four
  \emph{states}:
\begin{description}
\item[$-1$:] $w\prec t\prec v\prec u$,
\item[$\phantom{-}0$:] $t\prec w\prec u\prec v$,
\item[$\phantom{-}1$:] $t\prec u\prec w\prec v$,
\item[$\phantom{-}2$:] $u\prec t\prec v\prec w$.
\end{description}
\end{lemma}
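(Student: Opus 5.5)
Since only the four vertices $t,u,w,v$ matter, I will prove the lemma by a finite case analysis on the relative positions of $u$ and $w$ with respect to $t$ and $v$. The five edges are $tu$, $tw$, $uw$, $uv$, $wv$ (all pairs except $tv$). With $t\prec v$ fixed, there are $4!/2=12$ orders. For each order I either exhibit a nested pair or verify that no pair nests. A nesting needs four distinct endpoints or shares an endpoint. Since no two edges in a linear order nest when they share an endpoint (nesting as defined requires $t\prec v\prec w\prec u$ with four distinct vertices), only pairs of disjoint edges can nest. The disjoint pairs are $\{tu,wv\}$ and $\{tw,uv\}$; $uw$ is disjoint from nothing, since the only candidate partner would be $tv$, which is not an edge. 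So the condition reduces to: neither $\{tu,wv\}$ nor $\{tw,uv\}$ is nested. This is the key simplification, and it makes the check short.

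Next I classify by where $t$ and $v$ sit among the four positions, i.e.\ which of the two disjoint pairs has its endpoints interleaved. Nesting of $\{tu,wv\}$ means one of these edges has both endpoints strictly between those of the other, and similarly for $\{tw,uv\}$. I will split into three cases by the set of positions occupied by $\{t,v\}$: positions $\{1,4\}$ (both $u,w$ inside), positions $\{2,3\}$ (both outside), and the mixed positions $\{1,2\},\{1,3\},\{2,4\},\{3,4\}$.

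In the case $t$ first and $v$ last, the order is $t\,x\,y\,v$ with $\{x,y\}=\{u,w\}$. Edge $tu$ or $tw$ spans from position 1. If $x=w$, then $tu=(1,3)$ and $wv=(2,4)$ cross, and $tw=(1,2)$, $uv=(3,4)$ are disjoint, so the order is valid (state $0$). The case $x=u$ is symmetric and gives state $1$. In the case $t,v$ in the middle, the order is $x\,t\,v\,y$. Take $x=w$, giving $w\,t\,v\,u$: here $tu=(2,4)$ and $wv=(1,3)$ cross, and $tw=(1,2)$, $uv=(3,4)$ are disjoint, so this is state $-1$; symmetrically, $u\,t\,v\,w$ is state $2$.

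For the four mixed position sets, each gives two orders, so eight orders in total. In each one, $t$ and $v$ are adjacent at one end or are separated by exactly one of $u,w$ and one vertex lies outside. I will show directly that one disjoint pair nests. For example, $t\,v\,u\,w$ gives $tw\supset vu$, and $t\,u\,v\,w$ gives $tw\supset uv$. I expect to tabulate these eight orders quickly, each with its witnessing nested pair. This is the only tedious part: it is not hard, but it needs a careful enumeration to be complete. Finally, I will confirm that the four surviving orders are exactly the listed states $-1,0,1,2$.
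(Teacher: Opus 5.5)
Your proposal is correct and follows essentially the paper's argument. The paper likewise observes that among four vertices $a\prec b\prec c\prec d$ only $(a,d)$ and $(b,c)$ can nest, so the order is nesting-free exactly when $\{t,v\}$ occupies positions $\{1,4\}$ or $\{2,3\}$. Your eight ``mixed'' orders therefore need no tabulation: in each of them both position pairs $\{1,4\}$ and $\{2,3\}$ are edges of $K_4-(t,v)$ and hence nest, which is the one-line reason the paper uses.
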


\begin{proof}
Among four vertices $a\prec b\prec c\prec d$, the only pair of edges that can nest is $(a,d)$ and $(b,c)$.
Since every pair in $\{t,u,w,v\}$ except $\{t,v\}$ is an edge,
the order is nesting-free if and only if $\{t,v\}$ occupies one of these two position pairs,
that is, if and only if $t$ and $v$ are the two outermost or the two middle vertices.
Under the assumption $t\prec v$, this yields exactly the four listed orders.
\end{proof}

The assumption $t\prec v$ is without loss of generality for the first frontier quadruple,
since reversing a 1-queue layout yields a 1-queue layout.
We will see in Lemma~\ref{lem:unique-continuation} that every subsequent frontier quadruple then satisfies $t_i\prec v_i$ automatically,
so it suffices to work with the four states $-1,0,1,2$ throughout.

Now let $x$ be the vertex of $F_{i+1}$ that does not belong to $F_i$,
so $F_{i+1}=(u_i,v_i,x)$ is attached to the frontier edge $(u_i,v_i)$.
The frontier edge at step $i+1$ is either $(u_i,x)$ or $(v_i,x)$;
we call the step a \emph{same move} in the first case and an \emph{opposite move} in the second;
see Figure~\ref{fig:maximal-outerpath-states}.
By the labeling convention, the next frontier quadruple is
\[
(t_{i+1},u_{i+1},w_{i+1},v_{i+1})=
\begin{cases}
(w_i,\,u_i,\,v_i,\,x) & \text{for a same move,}\\
(w_i,\,v_i,\,u_i,\,x) & \text{for an opposite move.}
\end{cases}
\]

Throughout the construction, we maintain the following invariant:
\begin{description}
    \item[(I)] The two rightmost vertices of the frontier quadruple are the two rightmost vertices of the entire partial layout.
\end{description}
For the first quadruple, only four vertices have been placed, so (I) holds trivially.

\begin{lemma}
\label{lem:unique-continuation}
If the current frontier configuration is in one of the states
$-1,0,1,2$,
then the position of the next vertex $x$ is uniquely determined.
The new frontier quadruple satisfies $t_{i+1}\prec v_{i+1}$ and invariant~(I),
and its state is given by the following table.
\[
\begin{array}{c|c|c}
\text{current state} & \text{same move} & \text{opposite move}\\
\hline
-1\phantom{-} & 0 & 1 \\
0  & 1      & 0\\
1  & 2      & -1\phantom{-}\\
2  & \text{\xmark} & \text{\xmark}
\end{array}
\]
\end{lemma}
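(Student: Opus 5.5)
The plan is a direct case analysis over the four current states and the two move types. Throughout, I use invariant~(I) and the constraints that the new vertex $x$ must satisfy.

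The first step restricts where $x$ can go. The vertex $x$ is adjacent to $u=u_i$ and $v=v_i$. It is placed relative to the vertices already laid out, so I must rule out every position of $x$ in which one of the new edges $xu$ or $xv$ nests with an existing edge. All earlier vertices not in the current quadruple lie to the left of the two rightmost quadruple vertices, by~(I). Their incident edges reach at most up to these rightmost positions, since the earlier frontier edges are the only edges that reach this far.

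I would argue the following. If $x$ were placed left of the second-rightmost vertex of the partial layout, then one of $xu$, $xv$ would be nested inside, or would nest, an edge of the current quadruple. This is checked state by state using the explicit orders of \cref{lem:four-local-states}. Concretely, I would check which positions among the gaps between $t,u,w,v$ avoid nesting with the edges $tu,tw,uw,uv,wv$. Then, by~(I), I would argue that gaps further left are blocked by the quadruple edges spanning them. The expected outcome is that in states $-1,0,1$ the vertex $x$ must go either to the far right or into one specific gap among the rightmost vertices, and exactly one position survives.

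The second step reads off the new quadruple in each case, applying the relabeling formulas: $(w,u,v,x)$ for a same move and $(w,v,u,x)$ for an opposite move. For example, in state $0$ ($t\prec w\prec u\prec v$), putting $x$ at the far right gives the order $w\prec u\prec v\prec x$. For a same move the new quadruple is $(t',u',w',v')=(w,u,v,x)$, which is $t'\prec u'\prec w'\prec v'$, i.e.\ state~$1$. For an opposite move it is $(w,v,u,x)$, giving $t'\prec w'\prec u'\prec v'$, i.e.\ state~$0$. This matches the table. Similarly, in state $-1$ ($w\prec t\prec v\prec u$), the edges $wu$ and $tv$ force $x$ beyond $u$ (the edge $xv$ from a point right of $u$ nests nothing), giving states $0$ and $1$. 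In state $1$, $x$ goes right of $v$, and for the opposite move one checks that the new order is $w\prec u\prec v\prec x$ with $t'=w$, $u'=v$, $w'=u$, $v'=x$. This gives $w'\prec t'\ldots$, and I need to recheck the exact placement there, possibly with $x$ placed between $w$ and $v$, to land in state~$-1$. In each case I verify $t'\prec v'$ and that the two rightmost quadruple vertices are the rightmost overall, which establishes~(I).

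The third step handles state $2$ ($u\prec t\prec v\prec w$). Here $w$ is rightmost and $u$ is leftmost in the quadruple, so the edge $uw$ spans everything; by~(I) it is in fact the longest edge near the right. The vertex $x$ must be adjacent to $u$ and $v$. If $x\succ w$, then $xu$ nests $tv$. If $x$ lies between $u$ and $w$, then $xv$ is nested by $uw$ unless $x$ is adjacent to $v$ immediately, but $uw$ strictly contains $v$ together with $x$'s positions, so nesting occurs. If $x\prec u$, then $xv$ nests $tu$'s neighbour edge, or $xv$ nests... I would check this against $tu$ and use~(I) to exclude positions far left. Thus no position works, giving the \xmark\ entries.

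The main obstacle is the uniqueness claim, in particular excluding placements of $x$ far to the left among older vertices. That requires using~(I) carefully, together with the observation that older vertices have edges into the quadruple forming nestings with $xu$ or $xv$. I would first try to show that any $x$ left of the quadruple's third vertex creates a nesting with a quadruple edge alone.
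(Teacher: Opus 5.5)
Your overall plan matches the paper's proof: a gap-by-gap case analysis per state, where bad gaps are excluded by nestings with quadruple edges and~(I) makes the surviving gap a single slot. The execution, however, leaves a real hole at state~$1$.

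\textbf{State $1$.} You place $x$ right of $v$, which is impossible. In $t\prec u\prec w\prec v\prec x$ the edge $(u,x)$ nests $(w,v)$, and neither relabeled quadruple is even one of the four states. The unique admissible slot is between $w$ and $v$, and it is empty by~(I). The gaps left of $w$ are excluded because $(x,v)$ nests $(u,w)$ when $x\prec u$, and $(t,w)$ nests $(u,x)$ when $u\prec x\prec w$.

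Since $x$ is then \emph{not} the rightmost vertex, you also need to show that $(u,x)$ neither nests nor is nested by any older edge. The paper does this by observing that any edge nesting $(u,x)$ also nests $(u,w)$, and any edge nested by $(u,x)$ is also nested by $(u,v)$. So the validity of the existing layout carries over. Only with this placement do you get $u\prec w\prec x\prec v$, which yields state~$2$ after a same move and state~$-1$ after an opposite move, with $x,v$ as the two rightmost vertices for~(I). You flagged this for a recheck, but as written the state-$1$ row is not established.

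\textbf{Use of the non-edge $tv$.} Twice you treat $tv$ as an edge: ``the edges $wu$ and $tv$'' in state~$-1$, and ``$xu$ nests $tv$'' in state~$2$. But $\{t,v\}$ is exactly the missing edge of $K_4-(t,v)$. Both spots are easy to repair:
\begin{itemize}
\item In state~$-1$, $(w,u)$ alone already excludes every gap strictly between $w$ and $u$. For $x\prec w$, use that $(x,u)$ nests $(w,v)$.
\item In state~$2$, for $w\prec x$ use that $(u,x)$ nests $(t,w)$. For the case $x\prec u$ that you left open, $(x,v)$ nests $(u,t)$.
\end{itemize}

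\textbf{The ``main obstacle.''} The obstacle you name is not one. Every position of $x$ lies in one of the five gaps determined by $t,u,w,v$, and each inadmissible gap is already excluded by a quadruple edge, whatever older vertices it contains. Invariant~(I) is needed only to make the admissible gap a single slot and to rule out nestings between the new edges and older edges.
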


In other words, a same move increases the state by one and an opposite move negates it;
state~$2$ is a dead end.

\begin{figure}
    \centering
    \includegraphics[]{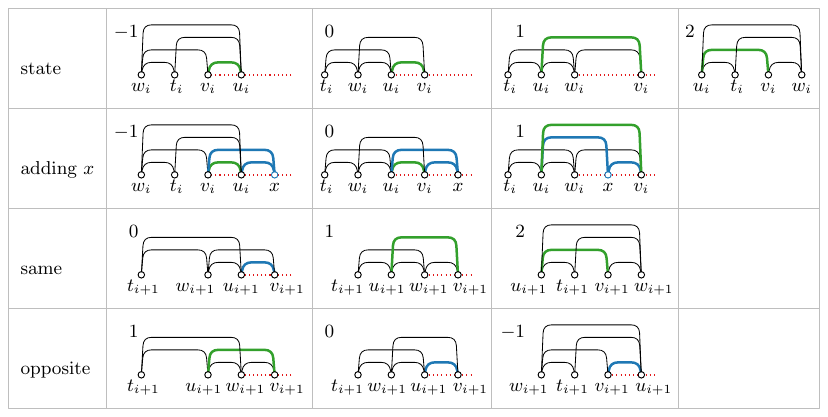}
    \caption{Illustration for Lemma~\ref{lem:unique-continuation}. From top to bottom: The state, placement of $x$, relabeling after a same move, relabeling after an opposite move. Dotted regions contain no other vertex.}
    \label{fig:maximal-outerpath-states}
\end{figure}

\begin{proof}
We inspect the four states one by one and analyze the possible gaps for $x$;
see Figure~\ref{fig:maximal-outerpath-states}.
Note first that in every transition we have $t_{i+1}=w_i$ and $v_{i+1}=x$,
and in every admissible placement below $x$ lies to the right of $w_i$;
hence $t_{i+1}\prec v_{i+1}$ always holds.

\begin{description}

\item[State $-1$: $w_i\prec t_i\prec v_i\prec u_i$.]~

\medskip
  
\begin{tabularx}{\linewidth}{@{}>{\normalsize}l>{\normalsize}X@{}}
$x\prec w_i$:     & Edge $(u_i,x)$ nests $(w_i,v_i)$. \xmark \\
$w_i\prec x\prec t_i$: & Edge $(u_i,w_i)$ nests $(x,v_i)$. \xmark \\
$t_i\prec x\prec u_i$: & Edge $(t_i,u_i)$ nests $(x,v_i)$. \xmark \\
$u_i\prec x$:     & Admissible.
By (I), no vertex lies to the right of $u_i$,
so this is a single gap and $x$ becomes the rightmost vertex.
The interval between $u_i$ and $x$ is empty and the interval between $v_i$ and $x$ contains only $u_i$,
so the new edges $(u_i,x)$ and $(v_i,x)$ are not involved in any nesting. 
\end{tabularx}

\medskip

For a same move, the new labels are
$(t_{i+1},u_{i+1},w_{i+1},v_{i+1})=(w_i,u_i,v_i,x)$.
In the current order, we have $t_{i+1}\prec w_{i+1}\prec u_{i+1}\prec v_{i+1}$,
so we arrive at state $0$,
and $u_{i+1}=u_i$ and $v_{i+1}=x$ are the two rightmost vertices.

For an opposite move, the new labels are
$(t_{i+1},u_{i+1},w_{i+1},v_{i+1})=(w_i,v_i,u_i,x)$.
In the current order, we have $t_{i+1}\prec u_{i+1}\prec w_{i+1}\prec v_{i+1}$,
so we arrive at state $1$,
and $w_{i+1}=u_i$ and $v_{i+1}=x$ are the two rightmost vertices.

\medskip
\item[State $0$: $t_i\prec w_i\prec u_i\prec v_i$.]~

\medskip

\begin{tabularx}{\linewidth}{@{}>{\normalsize}l>{\normalsize}X@{}}
$x\prec w_i$: & Edge $(x,v_i)$ nests $(w_i,u_i)$. \xmark \\
$w_i\prec x\prec v_i$: & Edge $(w_i,v_i)$ nests $(u_i,x)$. \xmark \\
$v_i\prec x$: & Admissible.
By (I), no vertex lies to the right of $v_i$,
so this is a single gap and $x$ becomes the rightmost vertex.
The interval between $v_i$ and $x$ is empty and the interval between $u_i$ and $x$ contains only $v_i$,
so the new edges $(u_i,x)$ and $(v_i,x)$ are not involved in any nesting. 
\end{tabularx}

\medskip

For a same move, the new labels are
$(t_{i+1},u_{i+1},w_{i+1},v_{i+1})=(w_i,u_i,v_i,x)$.
In the current order, we have $t_{i+1}\prec u_{i+1}\prec w_{i+1}\prec v_{i+1}$,
so we arrive at state $1$,
and $w_{i+1}=v_i$ and $v_{i+1}=x$ are the two rightmost vertices.

For an opposite move, the new labels are
$(t_{i+1},u_{i+1},w_{i+1},v_{i+1})=(w_i,v_i,u_i,x)$.
In the current order, we have $t_{i+1}\prec w_{i+1}\prec u_{i+1}\prec v_{i+1}$,
so we arrive at state $0$,
and $u_{i+1}=v_i$ and $v_{i+1}=x$ are the two rightmost vertices.

\medskip
\item[State $1$: $t_i\prec u_i\prec w_i\prec v_i$.]~

\medskip

\begin{tabularx}{\linewidth}{@{}>{\normalsize}l>{\normalsize}X@{}}
$x\prec u_i$:     & Edge $(x,v_i)$ nests $(u_i,w_i)$. \xmark \\
$u_i\prec x\prec w_i$: & Edge $(t_i,w_i)$ nests $(u_i,x)$. \xmark \\
$w_i\prec x\prec v_i$: & Admissible.
By (I), $w_i$ and $v_i$ are the two rightmost vertices,
so this gap is empty.
The edge $(v_i,x)$ cannot be involved in a nesting,
since the interval between $x$ and $v_i$ is empty and no vertex lies to the right of $v_i$.
The edge $(u_i,x)$ cannot be involved in a nesting either:
any edge that nests $(u_i,x)$ also nests $(u_i,w_i)$,
and any edge that $(u_i,x)$ nests is also nested by $(u_i,v_i)$.  \\
$v_i\prec x$:     & Edge $(u_i,x)$ nests $(w_i,v_i)$. \xmark
\end{tabularx}

\medskip

For a same move, the new labels are
$(t_{i+1},u_{i+1},w_{i+1},v_{i+1})=(w_i,u_i,v_i,x)$.
In the current order, we have $u_{i+1}\prec t_{i+1}\prec v_{i+1}\prec w_{i+1}$,
so we arrive at state $2$,
and $w_{i+1}=v_i$ and $v_{i+1}=x$ are the two rightmost vertices.

For an opposite move, the new labels are
$(t_{i+1},u_{i+1},w_{i+1},v_{i+1})=(w_i,v_i,u_i,x)$.
In the current order, we have $w_{i+1}\prec t_{i+1}\prec v_{i+1}\prec u_{i+1}$,
so we arrive at state $-1$,
and $u_{i+1}=v_i$ and $v_{i+1}=x$ are the two rightmost vertices.

\medskip
\item[State $2$: $u_i\prec t_i\prec v_i\prec w_i$.]~

\medskip

\begin{tabularx}{\linewidth}{@{}>{\normalsize}l>{\normalsize}X@{}}
$x\prec u_i$:     & Edge $(x,v_i)$ nests $(u_i,t_i)$. \xmark \\
$u_i\prec x\prec w_i$: & Edge $(u_i,w_i)$ nests $(x,v_i)$. \xmark \\
$w_i\prec x$:     & Edge $(u_i,x)$ nests $(t_i,w_i)$. \xmark
\end{tabularx}

\medskip

Hence, no admissible placement exists;
state~$2$ has no continuation.\qedhere
\end{description}
\end{proof}

\restateinbody{\lineartime*}
\label{thm:lineartime*}
\begin{proof}
By \cref{thm:maxouterplanar}, $G$ must be a maximal outerpath; otherwise, the answer is 'no'.
The dual path $F_0,\dots,F_{n-3}$ and the frontier edges can be computed in linear time.
If $n=3$, then $G$ is a single triangle and every vertex order is a 1-queue layout,
so assume $n\ge 4$.

Suppose that $G$ admits a 1-queue layout.
After possibly reversing it,
the first frontier quadruple satisfies $t_1\prec v_1$ and is thus, by Lemma~\ref{lem:four-local-states}, in one of the states $-1,0,1,2$.
											   
If $n\ge 5$, then state $2$ is excluded by Lemma~\ref{lem:unique-continuation};
if $n=4$, then $G$ is exactly $K_4-(t_1,v_1)$ and all four states are 1-queue layouts of $G$.
In both cases, $G$ admits a 1-queue layout whose first quadruple is in state $-1$, $0$, or $1$.

The algorithm therefore runs the following procedure once for each initial state $\sigma\in\{-1,0,1\}$:
place the first four vertices in the order prescribed by $\sigma$,
then process $F_2,\dots,F_{n-3}$,
at each step inserting the new vertex into the unique gap given by Lemma~\ref{lem:unique-continuation} and updating the state via the transition table;
abort if the current state is $2$ and a triangle remains.
If a run completes, its output is a 1-queue layout,
since by Lemma~\ref{lem:unique-continuation} no step creates a nested pair of edges.
Conversely, if $G$ admits a 1-queue layout whose first quadruple is in state $\sigma$,
then every extension step is forced,
so the run for $\sigma$ never aborts.
Hence, $G$ admits a 1-queue layout if and only if at least one of the three runs completes.
\end{proof}

\subsection{Characterization for Maximal Outerpaths}\label{sec:characterization}

In this subsection, we give a different perspective on the structure of maximal outerpaths that admit a 1-queue layout, which leads to an alternative linear-time recognition algorithm and helps us prove several structural results later.

Let $F_0,F_1,\dots,F_{n-3}$ be the $n-2$ triangles of the maximal outerpath,
ordered along the path of the weak dual, and, for $i \in [n-3]$,
let $e_i=F_{i-1}\cap F_i$ be the diagonal
shared by $F_{i-1}$ and $F_i$.  Draw the outerpath in a
horizontal strip with an upper and a lower boundary chain; see \cref{fig:characterization-example}.

For each $i\in[n-4]$, let $a_i$ be the \emph{apex} of $F_i$, that is, the vertex
of $F_i$ not incident to $e_i$, and set $w_i=\mathrm{U}$ if $a_i$ lies on the upper chain, and $w_i=\mathrm{D}$ otherwise. 
The word $w=w_1w_2\cdots w_{n-4}\in\{\mathrm{U},\mathrm{D}\}^{n-4}$ is the \emph{$\mathrm{U}/\mathrm{D}$-encoding} of the outerpath.
The first face $F_0$ carries no letter: $F_0$ has no preceding diagonal,
and the apex of $F_{n-3}$ is an end vertex of degree~$2$ that can be drawn on either chain,
so a letter for $F_{n-3}$ would be a property of the drawing rather than of the graph.

\begin{figure}[ht]
    \centering
    \begin{subfigure}{.65\textwidth}
    \centering
    \includegraphics[page=6]{figures/maxouterpathdeg4.pdf}
    \caption{}
    \label{fig:characterization-example}
    \end{subfigure}
    \begin{subfigure}{.65\textwidth}
    \centering
    \includegraphics[page=7]{figures/maxouterpathdeg4.pdf}
    \caption{}
    \label{fig:characterization-path}
    \end{subfigure}
    \begin{subfigure}{.65\textwidth}
    \centering
    \includegraphics[page=8]{figures/maxouterpathdeg4.pdf}
    \caption{}
    \label{fig:characterization-drawing}
    \end{subfigure}
    \caption{(a) The maximal outerplanar graph $G$ corresponding to the $\mathrm{U}/\mathrm{D}$-encoding $\mathrm{UDUUDUUDDDU}$; (b)~its lattice path has height~3; (c)~a $1$-queue layout of~$G$.}
    \label{fig:characterization}
\end{figure}

Since $w_i$ records the side on which $F_i$ contributes its new outer
edge when crossing $e_i$, two consecutive faces lie on the same side
of the strip exactly when $w_i=w_{i+1}$.  This corresponds exactly to
the same move
in \cref{sub:rec-max-outerpaths} and yields a recognition
algorithm formulated directly in terms of the
$\mathrm{U}/\mathrm{D}$-encoding.  Given $w=w_1w_2\cdots w_{n-4}\in\{\mathrm{U},\mathrm{D}\}^{n-4}$, define the prefix
difference $d_0=0$ and, for $i \in [n-4]$,
\[
  d_i = \#\mathrm{U}(w_1\cdots w_i)-\#\mathrm{D}(w_1\cdots w_i),
\]
and read the sequence $(0,d_0),(1,d_1),\dots,(n-4,d_{n-4})$ as a \emph{lattice path}
that goes up one unit on~$\mathrm{U}$ and goes down one unit on~$\mathrm{D}$; see \cref{fig:characterization-path}.
The \emph{height} of the path is the difference between the largest and the smallest value among $d_0,\dots,d_{n-4}$.

\begin{proposition}\label{prop:span}
A maximal outerpath admits a 1-queue layout if and only if the height of its lattice path is at most $3$.
\end{proposition}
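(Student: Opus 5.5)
The plan is to translate the state machine of \cref{lem:unique-continuation} into the lattice-path language by finding an explicit formula that expresses the state after each step in terms of the prefix difference $d_i$ and the last letter $w_i$. By \cref{thm:lineartime}'s proof, a maximal outerpath has a 1-queue layout if and only if some initial state $\sigma\in\{-1,0,1\}$ (or any state for $n=4$) produces a run that never has to continue from state~$2$. The rule is that a same move adds one to the state and an opposite move negates it. A same move corresponds to $w_{i+1}=w_i$ and an opposite move to $w_{i+1}\neq w_i$; I will first verify this alignment carefully against the labeling $(t_i,u_i,w_i,v_i)$, including which step corresponds to which letter.

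Write $\varepsilon_i=+1$ if $w_i=\mathrm U$ and $\varepsilon_i=-1$ otherwise, so that $d_i=d_{i-1}+\varepsilon_i$. The key identity I aim for is
\[
  s_i+\tfrac12=\varepsilon_i\,(d_i-h)
\]
for a fixed half-integer $h$ determined by the initial state, where $s_i$ is the state at the step associated with letter $w_i$. Both transition types preserve this identity.
\begin{itemize}
\item Same move ($\varepsilon_{i+1}=\varepsilon_i$, $d_{i+1}=d_i+\varepsilon_i$): the right-hand side becomes $\varepsilon_i(d_i-h)+1=s_i+\tfrac32$, so $s_{i+1}=s_i+1$.
\item Opposite move ($\varepsilon_{i+1}=-\varepsilon_i$, $d_{i+1}=d_i-\varepsilon_i$): the right-hand side becomes $-\varepsilon_i(d_i-h)+1=-s_i+\tfrac12$, so $s_{i+1}=-s_i$.
\end{itemize}
Hence one choice of $h$, fixed by $s_1$ and $\varepsilon_1$, describes the whole run, and the initial state $\sigma$ is in bijection with the choice of $h$ (given $\varepsilon_1$ and $d_1$).

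With the identity in hand, the condition becomes a window condition. A run survives if and only if $s_i\in\{-1,0,1\}$ for all non-final steps, with $s=2$ allowed only at the very end. In terms of the identity, $\varepsilon_i(d_i-h)\in\{-\tfrac12,\tfrac12,\tfrac32\}$ is required, with $\tfrac52$ permitted only at the end. If $w_i=\mathrm U$ this says $d_i\in\{h-\tfrac12,h+\tfrac12,h+\tfrac32\}$; if $w_i=\mathrm D$ it says $d_i\in\{h-\tfrac32,h-\tfrac12,h+\tfrac12\}$. An up-step always arrives from below, so reaching $h+\tfrac32$ by $\mathrm U$ is allowed. A $\mathrm D$-step can never land at $h+\tfrac32$, and symmetrically for $h-\tfrac32$. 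So the values $d_0,\dots$ must stay inside the four-integer window $[h-\tfrac32,h+\tfrac32]$, which gives height at most~$3$. Conversely, if the height is at most~$3$, pick $h$ so that this window contains the whole path, which yields a valid initial state.

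I expect the main obstacle to be the boundary bookkeeping rather than the algebra. This includes:
\begin{itemize}
\item aligning $F_0$ (which carries no letter), $d_0=0$, and the first frontier quadruple;
\item checking that the three initial states $-1,0,1$ correspond exactly to the admissible values of $h$ for the first letter;
\item handling the last triangle $F_{n-3}$, which has no letter and where the final state may be~$2$ or the apex side is free;
\item the degenerate cases $n\le 5$.
\end{itemize}
I will treat these by direct inspection of \cref{fig:maximal-outerpath-states}, checking that the terminal freedom exactly absorbs the extreme level $h\pm\tfrac52$ that would otherwise break the equivalence.
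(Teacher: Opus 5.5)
Your proposal is correct and follows the paper's route: reduce via \cref{lem:four-local-states,lem:unique-continuation} to the recurrence (add one on a same move, negate on an opposite move), then read the surviving runs as a window of height~$3$. Your invariant $s_i+\tfrac12=\varepsilon_i(d_i-h)$ makes explicit the one-line geometric claim that the paper only asserts, namely that the initial state fixes the window and that reaching state~$2$ means leaving it. On the boundary bookkeeping: the only state that may equal~$2$ is $s_{n-3}$, which belongs to the letterless last triangle. Hence every lettered state $s_1,\dots,s_{n-4}$ must lie in $\{-1,0,1\}$, and no level $h\pm\tfrac52$ ever enters the lattice path.
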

\begin{proof}
    By \cref{lem:four-local-states,lem:unique-continuation},
    the outerpath admits a 1-queue layout if and only if there is an initial state $\sigma\in\{-1,0,1\}$ such that the state sequence given by $s_1=\sigma$ and
    \[ s_{i+1}=
        \begin{cases}
            s_i+1 & \text{for a same move, that is, } w_i=w_{i+1},\\
            -s_i & \text{for an opposite move, that is, } w_i\ne w_{i+1},
        \end{cases}
    \]
    satisfies $s_i\in\{-1,0,1\}$ for every $i\in [n-4]$.

    Geometrically, choosing the initial state amounts to choosing the
    vertical position of the window of height~$3$, and the terminal
    state~$2$ is reached as soon as the path leaves the chosen window.
\end{proof}

    Since the height is computed by a single left-to-right scan of the word,
    maintaining the running minimum and maximum of the prefix differences,
    \cref{prop:span} yields a linear-time recognition algorithm.

\section{Existential Results for Outerpaths of Bounded Degree}
\label{sec:outerpaths}

In this section, we establish all four statements of \Cref{prop:max-degree}:
\restateinbody{\maxDegreeProposition*}
\label{prop:max-degree*}
We begin with Statements~1 and~2, which we show in the upcoming \Cref{sub:maximal-outerpaths} (see \cref{prop:degree-6,prop:degree-4-max-outerpath}).
In \Cref{sub:degree-3-outerpaths}, we show Statement~3 (see \Cref{thm:outerpath-max-deg-3}) and the proof of Statement~4 can be found in \Cref{sub:degree-4-outerpaths} (see \Cref{prop:max-deg-4-outerpaths}).

\subsection{Maximal Outerpaths}
\label{sub:maximal-outerpaths}

In this subsection, we use the $\mathrm{U}/\mathrm{D}$-encoding of \cref{sec:characterization} to derive structural properties of maximal outerpaths that admit a 1-queue layout.
Runs of equal letters in the encoding correspond to high-degree vertices, so the height criterion of \cref{prop:span} translates into degree bounds.

\begin{proposition}\label{prop:degree-6}
Every maximal outerpath that admits a $1$-queue layout has degree at most~$6$.
\end{proposition}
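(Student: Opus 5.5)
Our plan is to combine \cref{prop:span} with a count of how a vertex's degree relates to runs of equal letters in the $\mathrm{U}/\mathrm{D}$-encoding. We show that a vertex of degree at least~$7$ forces a run of at least four equal consecutive letters. Such a run moves the lattice path monotonically by four units, so its height is at least~$4$. By \cref{prop:span}, the outerpath then has no 1-queue layout.

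\textbf{Step 1: degree equals the number of incident triangles plus one.} In a maximal outerpath, every vertex~$v$ lies on the outer cycle. The triangles containing~$v$ form a contiguous subpath $F_j,\dots,F_k$ of the weak dual, since the faces around~$v$ form a fan. Hence $\deg(v)=(k-j+1)+1$. So $\deg(v)\ge 7$ means $v$ lies in at least six consecutive triangles $F_j,\dots,F_{j+5}$.

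\textbf{Step 2: translating the fan into the encoding.} Say $v$ lies on the lower chain. For each $i$ with $j<i\le k$, the diagonal $e_i$ is incident to~$v$. The triangle $F_i$ contains~$v$, so its apex $a_i$ is not~$v$ and lies on the upper chain; hence $w_i=\mathrm{U}$. This holds for every $i\in\{j+1,\dots,k\}$ that carries a letter, that is, for $1\le i\le n-4$. The subtle point is the boundary: $F_0$ and $F_{n-3}$ carry no letter. So a fan of six triangles yields at least five consecutive equal letters in the interior. Near either end it may yield fewer, and we check this directly.

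We verify the counts for a fan lying at the beginning. If $j=0$, the letters $w_1,\dots,w_5$ are all equal to $\mathrm{U}$, provided $k\le n-4$. If the fan reaches the end, so that $k=n-3$, then $w_{j+1},\dots,w_{n-4}$ gives $k-j-1\ge 4$ letters. In every case we obtain at least four consecutive equal letters. Four equal letters change $d_i$ monotonically by~$4$, so the height is at least~$4$.

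\textbf{Step 3: conclusion.} Once a run of four equal letters is found, \cref{prop:span} finishes the proof.

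The main obstacle is the bookkeeping at the two ends of the dual path, together with verifying that the letter assigned to $F_i$ really equals the side opposite to~$v$ whenever both endpoints of $e_i$ lie in the fan. I expect the sharp threshold of~$7$ to be exactly what guarantees four letters after the end effects are accounted for. Degree~$6$ with three letters should be fine, which matches Statement~1.
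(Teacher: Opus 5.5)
Your route is the paper's argument run in contrapositive. The paper bounds every run of equal letters by $3$ via \cref{prop:span}, then shows that the anchor of a maximal run of length $\ell$ has degree exactly $\ell+3$. You instead start from a vertex $v$ of degree at least $7$ and extract a run of length at least $4$. That structure is sound, but Step~2 contains a false claim. The inference ``$a_i\neq v$, hence $a_i$ lies on the upper chain'' does not follow. The apex $a_i$ of $F_i$ lies on the chain opposite to $v$ exactly when the next diagonal $e_{i+1}$ still contains $v$, that is, when $F_{i+1}$ also belongs to the fan. Now take the last fan triangle $F_k$ with $k\le n-4$. The diagonal $e_{k+1}$ avoids $v$, so $e_{k+1}=xa_k$, where $x$ is the upper endpoint of $e_k$. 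Hence $a_k$ lies on $v$'s own chain and $w_k=\mathrm{D}$, not $\mathrm{U}$. Concretely, for the encoding $\mathrm{DUUUUD}$ ($n=10$), the lower vertex lying in $F_1,\dots,F_6$ has degree $7$, but the run is only $w_2\cdots w_5$. So two of your claims fail when $\deg(v)=7$: ``five consecutive equal letters in the interior'', and ``$w_1,\dots,w_5$ all equal to $\mathrm{U}$'' when $j=0$ and $k=5$.

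The fix is immediate and removes the boundary bookkeeping altogether. The equal letters are exactly $w_{j+1},\dots,w_{k-1}$. These indices always lie in $[1,n-4]$, because $j\ge 0$ and $k\le n-3$. This is a run of length $k-j-1=\deg(v)-3\ge 4$, which is precisely the paper's ``degree $=\ell+3$'' relation. Your computation for the case $k=n-3$ was correct, but that case is not special. There are no end effects, and the threshold $7$ is sharp because the count $\deg(v)-3$ is uniform. With this correction, your final step goes through: four equal letters give height at least $4$, and \cref{prop:span} finishes the proof.
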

\begin{proof}
We first bound the length of a run of equal letters in the $\mathrm{U}/\mathrm{D}$-encoding of \cref{sec:characterization}.
Consider a run $w_j=w_{j+1}=\dots=w_{j+\ell-1}$ of $\ell$ equal letters. 
Along this run, the lattice path takes $\ell$ steps in the same direction,
so the values $d_{j-1},d_j,\dots,d_{j+\ell-1}$ are strictly monotone and span $\ell$.
By \cref{prop:span}, the height of the lattice path is at most $3$, so $\ell\le 3$.

Now suppose that the run above is maximal.
Throughout the run,
the endpoint of the frontier edge on the chain opposite to the apexes stays fixed;
we call this vertex the \emph{anchor} of the run.
The anchor is incident exactly to the faces $F_{j-1},\dots,F_{j+\ell}$,
and a vertex incident to exactly $r$ consecutive faces has degree $r+1$.
Hence, the anchor of a maximal run of length $\ell$ has degree $\ell+3$.
Since every vertex of degree at least $4$ is the anchor of some maximal run,
every vertex has degree at most $3+3=6$.
\end{proof}

\Cref{prop:degree-6} shows that large degree is an obstruction to 1-queue layouts.
At the other end of the spectrum, degree at most~$4$ determines the maximal outerpath uniquely, and this outerpath always admits a 1-queue layout.

\begin{proposition}\label{prop:degree-4-max-outerpath}
    For every $n$, the maximal $n$-vertex outerpath of degree at most~4 admits a $1$-queue layout. 
\end{proposition}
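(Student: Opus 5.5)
The plan is to use \cref{prop:span}: it suffices to show that the $\mathrm{U}/\mathrm{D}$-encoding of a maximal outerpath of maximum degree at most~$4$ yields a lattice path of height at most~$3$. First I will determine the encoding. From the proof of \cref{prop:degree-6}, the anchor of a maximal run of $\ell$ equal letters has degree $\ell+3$. Degree at most~$4$ therefore forces every maximal run to have length~$1$, so consecutive letters always differ. The encoding is then alternating, $w=\mathrm{UDUD}\cdots$ or $\mathrm{DUDU}\cdots$. Up to mirroring the strip, which does not change the graph, this gives a unique maximal $n$-vertex outerpath of degree at most~$4$, namely the ``zigzag'' triangulated strip.

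The one delicate point is the boundary. The run-length argument has to hold for runs touching the start or end of the word, where the anchor might see extra faces $F_0$ or $F_{n-3}$. I will check directly that a run of length~$2$ at either end still produces a vertex incident to at least four consecutive faces, and hence of degree at least~$5$. If this boundary bookkeeping gets messy, I will instead argue directly: two equal consecutive letters $w_i=w_{i+1}$ mean the fixed frontier endpoint lies in $F_{i-1},F_i,F_{i+1}$ and in at least one further face on one side, giving degree at least~$5$.

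With an alternating word, the prefix differences $d_i$ take only the values $0$ and $\pm1$ (specifically, in $\{0,1\}$ or $\{0,-1\}$). So the lattice path has height at most $1\le 3$, and \cref{prop:span} gives a 1-queue layout. As a sanity check via \cref{lem:unique-continuation}: every move after the first is an opposite move, so starting in state~$0$ the state stays~$0$ forever ($0\mapsto -0=0$) and never reaches the dead state~$2$. Small cases $n\le 4$, with an empty or short word, are trivial, since $K_3$ and $K_4$ minus an edge have 1-queue layouts by \cref{lem:four-local-states}.

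The main obstacle is not the layout but rigorously pinning down that degree at most~$4$ forces strict alternation, including at the ends of the word. I expect this to be a short local degree count.
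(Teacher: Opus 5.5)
Your proposal is correct and follows the paper's own proof. Both arguments show that degree at most~4 forbids two equal consecutive letters, so the $\mathrm{U}/\mathrm{D}$-encoding alternates, the lattice path has height at most~1, and \cref{prop:span} applies; the paper counts three diagonals $e_i,e_{i+1},e_{i+2}$ meeting at one vertex, which is the same count as your anchor argument. Your boundary worry does not arise: letters are only assigned to $F_1,\dots,F_{n-4}$, so whenever $w_i=w_{i+1}$ the faces $F_{i-1}$ and $F_{i+2}$ both exist.
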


\begin{proof}
    In a maximal outerpath, every vertex is incident to two edges of the outer face. 
    Since the maximum degree is~4, every vertex is incident to at most two inner edges.
    Consider the $\mathrm{U}/\mathrm{D}$-encoding from \cref{sec:characterization}.
    
    Three inner edges meet in a common vertex if and only if $F_i$ and
    $F_{i+1}$ keep the same apex, that is, if and only if $w_i=w_{i+1}$. Therefore,
    $\Delta\le 4$ forbids two equal consecutive letters, and the $\mathrm{U}/\mathrm{D}$-encoding is
    strictly alternating; see~\cref{fig:maximal_outerpath_deg4_enc}.

\begin{figure}[t]
    \centering
    \begin{subfigure}[t]{.42\textwidth}
        \centering
        \includegraphics[page=4]{figures/maxouterpathdeg4.pdf}
        \subcaption{}
        \label{fig:encoding}
    \end{subfigure}
    \hfill
    \begin{subfigure}[t]{.53\textwidth}
        \centering
        \includegraphics[page=5]{figures/maxouterpathdeg4.pdf}
        \subcaption{}
        \label{fig:1_queue_en}
    \end{subfigure}
    \caption{(a) The unique maximal outerpath where each vertex has degree at most~$4$ and (b) a queue layout of the graph.}
    \label{fig:maximal_outerpath_deg4_enc}
\end{figure}
    
    For a strictly alternating word, the prefix difference $d_i$ steps $+1$ and $-1$
    in turn from $d_0=0$, so all $d_i$ lie in $\{0,1\}$ or all in $\{-1,0\}$. 
    In either case, the height of the lattice path is at most $1\le 3$, and by \cref{prop:span} the outerpath is $1$-queue-embeddable.
\end{proof}

\subsection{Outerpaths with Maximum Degree~3}
\label{sub:degree-3-outerpaths}

In this section, we establish the third statement from \Cref{prop:max-degree}, i.e., that every outerpath of maximum degree~3
admits a 1-page queue layout \lQL.  

Let $G$ be such an outerpath with
$\Delta(G) \leq 3$. 
Since $G$ is an outerpath, it can be decomposed
into a Hamiltonian cycle $H$ with $V(H)=V(G)$ and
$E(H) \subseteq E(G)$, and a set of chords $C = E(G) \setminus E(H)$;
see \Cref{fig:outerpath-degree-three}a.
\begin{figure}
	\centering
	\includegraphics[page=1]{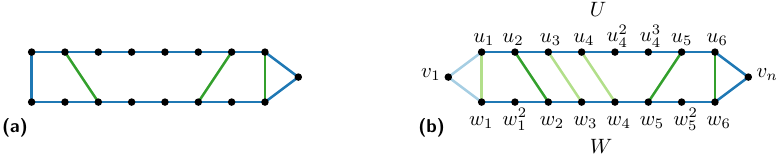}
	\caption{\textbf{\textsf{(a)}} An outerpath $G$ with maximum degree 3. We color the Hamiltonian cycle $H$ in blue and the matching $C$ in green. \textbf{\textsf{(b)}} The supergraph $G^+$ for the graph $G$. We color new edges in the Hamiltonian cycle and matching in light blue and light green, respectively.}
	\label{fig:outerpath-degree-three}
\end{figure}
Moreover, since $\Delta(G) \leq 3$, the chords $C$ form a matching in $G$, i.e., every vertex $v \in V$ is incident to at most one chord.
In the following, we let $f(G)$ denote the number of interior faces of $G$ and for every $i \in [f(G)]$, we let $F_i(G)$ denote the interior faces of $G$, ordered as they appear along the weak dual of (a planar drawing of) $G$.
Note that every pair of consecutive faces $F_i$ and $F_{i + 1}$ is separated by a chord $c_i \in C$.
First, we show a useful property about the considered outerpaths.

\begin{lemma}
	\label{lem:outerpath-deg-3-supergraph}
	Every outerpath $G$
	with $\Delta(G) \leq 3$ is a subgraph of
	an outerpath $G^+$ with $\Size{V(G^+) \setminus V(G)} \leq 2$
	and $\Delta(G^+) \leq 3$ such that $F_1$ and $F_{f(G^+)}$ are
	triangles and for every $1 < i < f(G^+)$, at least three sides of
	$F_i$ consist of a single edge, two of which are the chords~$c_{i - 1}$ and $c_i$.
\end{lemma}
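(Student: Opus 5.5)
The plan is to build $G^+$ by repeatedly inserting new chords between degree-$2$ vertices inside the faces of $G$, and to add a single new vertex only at each of the two ends of the dual path. Two observations make this work. First, since $\Delta(G)\le 3$, the chords $C$ form a matching, so for $1<i<f(G)$ the chords $c_{i-1}$ and $c_i$ are vertex-disjoint. Hence the boundary of $F_i$ consists of $c_{i-1}$, $c_i$, and two subpaths $P_i^{\mathrm{top}}$ and $P_i^{\mathrm{bot}}$ of $H$, each of length at least~$1$. Second, every internal vertex of these subpaths lies only on $F_i$ and on the outer face, so it has degree exactly~$2$. The same holds for the internal vertices of the $H$-path bounding $F_1$ (respectively $F_{f(G)}$) together with $c_1$ (respectively $c_{f(G)-1}$). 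A chord joining two degree-$2$ vertices of the same face keeps the chords a matching, keeps $\Delta\le 3$ and outerplanarity, and splits one face of the weak-dual path into two consecutive faces, so the weak dual stays a path.

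For the inner faces, let $c_{i-1}=xy$ with $x$ on $P_i^{\mathrm{top}}$ and $y$ on $P_i^{\mathrm{bot}}$. If both $P_i^{\mathrm{top}}$ and $P_i^{\mathrm{bot}}$ have length at least~$2$, let $x'$ be the successor of $x$ on $P_i^{\mathrm{top}}$ and $y'$ the successor of $y$ on $P_i^{\mathrm{bot}}$; both have degree~$2$. Insert the chord $x'y'$. This cuts off the quadrilateral $x\,x'\,y'\,y$, all four of whose sides are single edges, including both of its chords $xy$ and $x'y'$. The remaining face is bounded by $x'y'$, $c_i$, and two paths that are each shorter by one. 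I would iterate this until one of the two side paths is a single edge, at which point the remaining face satisfies the required property. Every face created along the way is a quadrilateral whose sides are all single edges, so it satisfies the property as well.

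For the end faces, consider $F_1$, bounded by $c_1=xy$ and an $H$-path $x=p_0,p_1,\dots,p_k=y$ with $k\ge 2$. The case of the last face $F_{f(G)}$ is symmetric. While $k\ge 4$, insert the chord $p_1p_{k-1}$. This cuts off a quadrilateral with all sides single edges, two of which are its chords $xy$ and $p_1p_{k-1}$, and reduces $k$ by~$2$. Eventually $k\in\{2,3\}$. If $k=2$, then $F_1$ is already a triangle. If $k=3$, the quadrilateral $x\,p_1\,p_2\,y$ admits no further chord without raising some degree above~$3$, so I add a new vertex $z$ adjacent to $p_1$ and $p_2$ on the outer side. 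Now $p_1p_2$ becomes a chord, the new first face $p_1p_2z$ is a triangle, and the old face $x\,p_1\,p_2\,y$ becomes an inner face all of whose sides are single edges. In this step $p_1$ and $p_2$ go from degree~$2$ to degree~$3$ and $z$ has degree~$2$. At most one vertex is added at each end, which gives $\Size{V(G^+)\setminus V(G)}\le 2$. The Hamiltonian cycle of $G^+$ is obtained from $H$ by replacing the edge $p_1p_2$ with the path $p_1zp_2$.

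I expect the main difficulty to lie in the bookkeeping rather than in any single step. One must check that each inserted chord is not already an edge; this is the reason for requiring $k\ge 4$ and for requiring both side paths to have length at least~$2$. One must also handle the degenerate case $f(G)=1$, where $G$ is a single cycle: there I would first pick a chord between two degree-$2$ vertices that splits the cycle into two faces and then treat both ends as above, with small cycles of length at most~$5$ handled directly. Finally, the chords $c_i$ of $G^+$ must be re-indexed along the new dual path, and one must confirm that the property holds for every face of $G^+$, which follows from the case analysis above.
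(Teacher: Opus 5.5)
Your proof is correct and follows the paper's approach: inside each inner face you insert ladder chords between successive degree-$2$ vertices of the two side paths (these chords separate $c_{i-1}$ from $c_i$, which is what keeps the weak dual a path), and you add at most one new degree-$2$ vertex at each end. The only difference is at the end faces: the paper always attaches the new vertex to two consecutive degree-$2$ vertices of a non-triangular end face, turning that face into an inner face handled by the ladder step, whereas you first ladder inward with chords $p_1p_{k-1}$ and add a vertex only if a quadrilateral remains, which saves a vertex when $k$ is even; you also treat the single-cycle case $f(G)=1$ explicitly, which the paper glosses over.
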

\begin{proof}
	Let $G$ be an outerpath with $\Delta(G) \leq 3$.
	If $G$ is a triangle, there is nothing to show, so we assume $G$ has at least four vertices.
	Moreover, for the sake of the proof, assume that $G$ does not fulfill the properties of $G^+$ (otherwise, we can simply set $G^+ = G$ and are done).
	
	We first show how to ensure that the first and last faces are triangles.
	Consider the first face $F$ of $G$.
	If it is not a triangle, it is incident to at least four vertices, two of which are consecutive and of degree two in $G$.
	Let these vertices be $x$ and $y$.
	We introduce the vertex $a_1$, which is adjacent to precisely $x$ and $y$, i.e., in the resulting graph, $a_1$ is of degree two and $x$ and $y$ are of degree three.
	Observe that the resulting graph remains an outerpath.
	Analogous arguments can be applied to the last face $F'$ of $G$, where we introduce the vertex $a_n$ if necessary.
	In the end, we have $V(G^+) \setminus V(G) \subseteq \{a_1, a_n\}$.
	
	Let $G'$ be the resulting graph.
	If $G'$ fulfills the properties of the lemma statement, we have $G^+ = G'$ and are done.
	Otherwise, let $F_i$ be an arbitrary face that does not fulfill the statement.
	Since $F_i$ is bounded by the chords $c_{i - 1} = uw$ and $c_{i} = u'w'$, this implies that the two remaining sides of $F_i$ are bounded by two paths $P = (u = u_1, u_2, \ldots, u_p = u')$ and $Q = (w = w_1, w_2, \ldots, w_q = w')$ such that $p,q > 2$.
	Note that every vertex $u_x$ and $w_y$, $1 < x < p, 1 < y < q$, has degree two in $G'$.
	We now introduce the chords $u_jw_j$ for every $2 \leq j < \min(p,q)$; see \Cref{fig:outerpath-degree-three}b.
	Observe that they just partition the face into several smaller faces, i.e., the graph remains an outerpath.
	Moreover, since every $u_j$ and $w_j$ has degree two in $G'$, the resulting graph has maximum degree three.
	Finally, after this process, at least three sides of each resulting face consist of a single edge; recall also \Cref{fig:outerpath-degree-three}b.
	
	By iteratively applying the above steps, we eventually obtain the graph $G^+$.
	Note that this does not introduce any further vertex, i.e., it remains $V(G^+) \setminus V(G) \subseteq \{a_1, a_n\}$.
\end{proof}

In the following, we call an outerpath $G$ that results from the application of \Cref{lem:outerpath-deg-3-supergraph} a \emph{maximal} outerpath with $\Delta(G) \leq 3$.
Let $G$ be a maximal outerpath and let $V(F_i)$ denote the vertices incident to face $F_i$ for every $i \in [f(G)]$.
We let $G_i$ denote the subgraph of $G$ induced by $V(F_1) \cup \ldots \cup V(F_i)$ and use $V_i = V(G_i)$ and $E_i = E(G_i)$ as a shorthand for its vertex and edge set, respectively.
Moreover, we let $v_1$ and $v_n$ be the unique vertices of degree two in the first and last face, respectively.
Observe that we can decompose $H$ into two internally vertex-disjoint paths connecting $v_1$ with $v_n$.
We let $U$ and $W$ denote the vertex-disjoint subpaths, i.e., after dropping $v_1$ and $v_n$; see the vertex labeling in \Cref{fig:outerpath-degree-three}b.
Moreover, we let~$u_i$ and~$w_i$ denote the endpoints of the chord~$c_i$.
We use $U_i = (u_i = u_i^1, \ldots, u_i^p = u_{i + 1})$ and $W_i = (w_i = w_i^1, \ldots, w_i^q = w_{i + 1})$ to refer to the two subpaths connecting the chords.
Recall that $p = 2$ or $q = 2$ (or both) due to \Cref{lem:outerpath-deg-3-supergraph}.

Our high-level idea is as follows.
We traverse the weak dual of $G$ from left to right and (iteratively) build a one-page queue layout $\prec_i$ for the graph $G_i$.
During construction, we ensure that the chord $c_i$ is among the two ``rightmost'' edges in $\prec_i$, which allows us to extend $\prec_i$ to a one-page queue layout $\prec_{i + 1}$ for $G_{i + 1}$.

\subparagraph*{Frontiers and Compressed Layouts.}
The above-sketched idea is formalized via the concept of a frontier, which is a special pair of vertices in a linear layout.
\begin{definition}
	\label{def:outerpath-deg-3-frontier}
	Let $\prec$ be a one-page queue layout of a graph $G$.
	A \emph{frontier} of $\prec$ is a vertex pair $(u,w)$, $u,w \in V(G)$ such that
	\begin{enumerate} %
		\item $N_G(w) = \{u,z\}$ for some $z \in V(G)$, i.e., $\deg_G(w) = 2$,
		\item for every $x \in V(G)$ it holds that $x \preceq w$, i.e., $w$ is the rightmost vertex in $\prec$, and
		\item for every vertex $y \in V(G)$ with $u \prec y \prec w$, we have $x \prec u$ for every $x \in N_G(y) \setminus \{w\}$.
	\end{enumerate}
\end{definition}

\begin{figure}
	\centering
	\includegraphics[page=1]{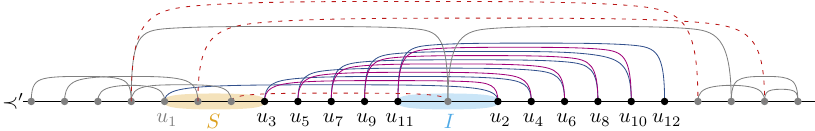}
	\caption{The compressed extension $\prec'$ of the layout $\prec$ by the path $P$ on twelve vertices. We indicate the graph $G$ in gray and the separator $S$ interior $I$ in yellow and blue, respectively. Red dotted edges are forbidden in $G$, since they are nesting with $P$.}
	\label{fig:compressed-extension}
\end{figure}

Note that frontiers are not unique, i.e., a layout could have multiple frontiers.
Later, frontiers serve as well-defined separating lines between the already-placed vertices and those that are yet-to-be-placed.
In particular, the endpoints of every chord $c_i \in C$ will constitute a frontier of some partial layout.
To ensure this, we will need an ``efficient'' way of placing the vertices of a path $P$ in~$\prec_i$.
\begin{definition}
	\label{def:compressed-extension}
	Let $\prec$ be a one-page queue layout of some graph $G$ and let $P = (u_1, \ldots, u_{p})$, $p \geq 2$, be a path with an even number of vertices such that $V(G) \cap V(P) = \{u_1\}$.
	Moreover, let $S$ be an interval directly right of $u_1$ and let $I$ be an interval directly right of $S$.
	Note that both $S$ and $I$ could be empty.
	The \emph{compressed extension} of $\prec$ by $P$ with respect to \emph{separator} $S$ and \emph{interior} $I$ is the layout $\prec'$ obtained by taking the transitive closure of
	\begin{itemize}
		\item $u_1 \prec' u_3 \prec' \ldots \prec' u_{p - 1} \prec' u_2 \prec' u_4 \prec' \ldots \prec' u_{p}$,
		\item $x \prec' y$ for every $x,y \in V(G)$ with $x \prec y$,
		\item $u_1 \prec' v_s \prec' u_3$ for every $v_s \in S$,
		\item $u_{p - 1} \prec' v_i \prec' u_2$ for every $v_i \in I$, and
		\item $u_{p} \prec' x$ for every $x \in V(G) \setminus (S \cup I)$ with $u_1 \prec x$.
	\end{itemize}
	We call $(u_1, u_3, \ldots, u_{p - 1})$ and $(u_2, u_4, \ldots, u_{p})$ the \emph{odd} and \emph{even} interval of $P$, respectively.
\end{definition}

Note that \Cref{def:compressed-extension} requires $\Size{V(P)}$ to be even; the case for $\Size{V(P)}$ being odd will be handled separately.
We make the following observation; see also \Cref{fig:compressed-extension}.

\begin{observation}
	\label{obs:compressed-extension}
	Let $P = (u_1, \ldots, u_{p})$, $p \geq 2$, be a path on an even number of vertices and let $\prec$ be a one-page queue layout of $G$ such that $V(P) \cap V(G) = \{u_1\}$.
	Moreover, let $\prec'$ be the compressed extension of $\prec$ by $P$ with respect to some separator $S$ and interior $I$.
	If $S \cup I$ forms an independent set in $G$ and there is no edge $ab \in E$ with $a \in S$ or $a \preceq' u_1$ and $u_{p} \prec' b$, then~$\prec'$ is a one-page queue layout of $G + P$.
\end{observation}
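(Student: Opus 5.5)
The plan is a direct case analysis on the pairs of edges of $G+P$, checking that none of them nest under $\prec'$. Reading \Cref{def:compressed-extension} from left to right, $\prec'$ has the following shape. First come the vertices of $G$ that precede $u_1$, then $u_1$, then $S$, then $u_3,\dots,u_{p-1}$, then $I$, then $u_2,u_4,\dots,u_p$, and finally the remaining vertices of $G$ that $\prec$ puts to the right of $u_1$. Within $G$, the order $\prec$ is preserved. We split the pairs of edges into three classes: both edges in $G$, both edges in $P$, and one edge in each.

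\emph{Pairs of edges of $G$.} The restriction of $\prec'$ to $V(G)$ equals $\prec$, which is a one-page queue layout of $G$. Hence no two edges of $G$ nest.

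\emph{Pairs of path edges.} Write $o_k = u_{2k-1}$ and $e_k = u_{2k}$. The path edges, listed along $P$, are $(o_1,e_1),(o_2,e_1),(o_2,e_2),(o_3,e_2),\dots,(o_{p/2},e_{p/2})$. Every odd vertex precedes every even vertex, so each of these is written as (left endpoint, right endpoint). Both the odd vertices and the even vertices appear in increasing index order. Therefore, along this list, the left endpoints are weakly increasing and the right endpoints are weakly increasing. Two intervals whose left and right endpoints are both weakly ordered in the same direction cannot nest: a nesting needs both endpoints strictly on opposite sides. So no two path edges nest.

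\emph{Mixed pairs.} This is the main step. First suppose a path edge $(o,e)$ nests an edge $ab$ of $G$. Then $o \prec' a \prec' b \prec' e$ holds strictly. The only vertices of $G$ lying strictly between the odd interval start $u_1$ and the last even vertex are those of $S\cup I$. The endpoint $u_1$ itself is excluded by strictness, since the only path edge with left end $u_1$ is $u_1u_2$. Hence $a,b\in S\cup I$, contradicting that $S\cup I$ is independent.

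Conversely, suppose an edge $ab$ of $G$ nests a path edge $(o,e)$, so $a \prec' o$ and $e \prec' b$. Since $e$ lies in the even interval, $b$ comes after some even vertex. As $b\in V(G)$, this forces $b$ into the final block, that is, $u_p \prec' b$. As for $a$, it precedes some odd vertex, so either $a \preceq' u_1$ or $a\in S$; it cannot lie in $I$ or after it. This is exactly the configuration excluded by the hypothesis. Pairs of edges sharing an endpoint never nest, so no degenerate cases remain.

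The only delicate point is the bookkeeping of which blocks of $G$-vertices can lie inside or outside the span of a path edge. I would organize it by the five-block shape of $\prec'$ given above, checking each block against the two mixed-pair conditions.
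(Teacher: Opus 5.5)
Your proof is correct and is essentially the argument the paper intends. The paper gives no written proof of this observation and relies on \Cref{fig:compressed-extension}; its proof of \Cref{lem:compressed-extension} uses the same three-way split (edges of $G$ keep the order $\prec$, path edges cannot nest because of the compressed odd/even order, and a mixed nesting would require either an edge inside $S \cup I$ or an edge of the excluded type), and your five-block description of $\prec'$ (with an empty odd tail when $p=2$) and the monotonicity argument for path edges spell this out rigorously. Your remark about $u_1u_2$ in the mixed case is unnecessary, since $a \neq u_1$ already follows from $u_1 \preceq' o \prec' a$.
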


Next, we want to show that compressed extensions can be used to extend a partial layout.
Before we can do that, we first discuss how to handle the case where $\Size{V(P)} = p$ is odd and $p \geq 3$ (the case $p = 1$ will never occur in our constructions by \Cref{lem:outerpath-deg-3-supergraph}); see also \Cref{fig:compressed-extension-odd}.
\begin{definition}
	\label{def:o-extension}
	Let $\prec$ be a one-page queue layout of some graph $G$ and let $P = (u_1, \ldots, u_{p})$, $p \geq 3$, be a path with an odd number of vertices such that $V(G) \cap V(P) = \{u_1\}$.
	Let $S$ and $I$ be defined as in \Cref{def:compressed-extension}.
	Moreover, let $\prec'$ be the compressed extension of $\prec$ by $P' = (u_1, \ldots, u_{p - 1})$ with separator $S$ and interior $I$.
	The \emph{\OLExtension-extension} of $\prec$ by $P$ with separator $S$ and interior $I$ is the layout $\prec''$ obtained by inserting $u_{p}$ directly right of $u_{p - 2}$ in $\prec'$, i.e., $u_{p - 2} \prec'' u_{p} \prec'' x$ for every $x \in I$.
	The \emph{\ORExtension-extension} of $\prec$ by $P$ with separator $S$ and interior $I$ is the layout $\prec''$ obtained by inserting $u_{p}$ directly right of $u_{p - 1}$ in $\prec'$, i.e., $u_{p - 1} \prec'' u_{p} \prec'' x$ for every $x \in V(G) \setminus (S \cup I)$.
\end{definition}
In the following, we use the term \emph{\EExtension-extension} to refer to a compressed extension.
We are now ready to show useful properties about the above-introduced extensions; see again \Cref{fig:compressed-extension-odd} for a visualization of them.
\begin{figure}
	\centering
	\includegraphics[page=1]{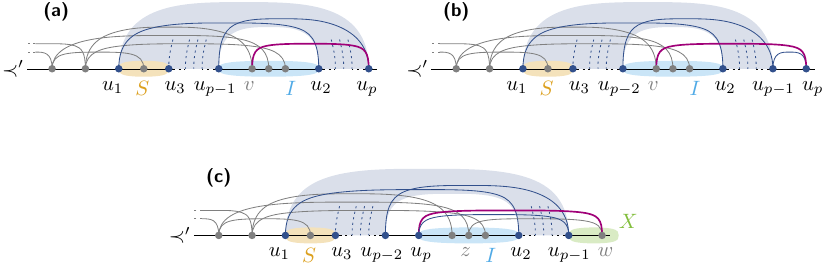}
	\caption{Visualization of the extensions $\prec'$ constructed in \Cref{lem:compressed-extension}.
		\textbf{\textsf{(a)}} The \EExtension-extension for $p$ even and $X = \emptyset$.
		\textbf{\textsf{(b)}} The \ORExtension-extension for $p$ odd and $X = \emptyset$.
		\textbf{\textsf{(c)}} The \OLExtension-extension for $p$ odd and $X = \{w\}$.	
		We indicate the graph $G$ in gray, the compressed extension from \Cref{def:compressed-extension} with the blue band, and the edge of the frontier in purple.}
	\label{fig:compressed-extension-odd}
\end{figure}
\begin{lemma}
	\label{lem:compressed-extension}
	Let $\prec$ be a one-page queue layout of $G$, and let $P = (u_1, \ldots, u_{p})$, $p \geq 2$, be a path such that $V(G) \cap V(P) = \{u_1\}$.
	Moreover, let $S$ and $I$ be as in \Cref{def:compressed-extension} such that $S \cup I$ forms an independent set in $G$, $I \neq \emptyset$, and let $X \coloneqq \{x \in V(G) \mid v \prec x\ \text{for every}\ v \in S \cup I\}$.
	Let $v \in I$ be the leftmost vertex in the interior $I$.
	The following holds.
	\begin{itemize}
		\item If $p$ is even and $X = \emptyset$, let $\prec'$ be the \EExtension-extension of $\prec$ by $P$ with separator $S$ and interior~$I$.
		Then, $\prec'$ is a one-page queue layout of $G' = G + P + \{vu_p\}$ with frontier $(v, u_p)$.
		\item If $p$ is odd and $X = \emptyset$, let $\prec'$ be the \ORExtension-extension of $\prec$ by $P$ with separator $S$ and interior~$I$.
		Then, $\prec'$ is a one-page queue layout of $G' = G + P + \{vu_p\}$ with frontier $(v, u_p)$.
		\item If $p$ is odd, $X = \{w\}$, and $N_G(w) = \{z\}$ with $z \in I$, let $\prec'$ be the \OLExtension-extension of $\prec$ by $P$ with separator~$S$ and interior~$I$.
		Then, $\prec'$ is a one-page queue layout of $G' = G + P + \{u_pw\}$ with frontier $(u_p, w)$.
	\end{itemize}
\end{lemma}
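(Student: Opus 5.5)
We prove the three items by a direct check, reducing as much as possible to \Cref{obs:compressed-extension} and then treating the one or two extra edges by hand. In each case there are three things to verify: (i) $\prec'$ is nesting-free on $G+P$; (ii) the added edge ($vu_p$, or $u_pw$) nests no other edge; (iii) the frontier conditions of \Cref{def:outerpath-deg-3-frontier} hold.

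For the \EExtension-extension with $X=\emptyset$, the intervals $S$ and $I$ exhaust everything to the right of $u_1$. Hence no $b\in V(G)$ lies right of $u_p$, and the hypothesis of \Cref{obs:compressed-extension} holds vacuously. That observation then gives that $\prec'$ is a queue layout of $G+P$, and $u_p$ is the rightmost vertex.

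Next consider the edge $vu_p$. Its left endpoint $v$ is the leftmost vertex of $I$, so it lies just right of $u_{p-1}$. Strictly inside $(v,u_p)$ we find only $I\setminus\{v\}$ and $u_2,\dots,u_{p-2}$.
\begin{itemize}
\item Vertices of $I$ have no neighbours inside $I$, by independence.
\item Path edges $u_ju_{j+1}$ always have an endpoint in the odd interval, which lies left of $v$.
\end{itemize}
So $vu_p$ nests no edge. Conversely, an edge nesting $vu_p$ would need an endpoint right of $u_p$, which is impossible.

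For the frontier conditions: $\deg(u_p)=2$ with $N(u_p)=\{u_{p-1},v\}$, and $u_p$ is rightmost. For condition~3, every $y$ strictly between $v$ and $u_p$ is either in $I$ or in the even interval. In the first case its $G$-neighbours lie left of $S\cup I$ by independence and $X=\emptyset$, hence left of $v$. In the second case its path neighbours are odd vertices, which lie left of $v$.

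For the \ORExtension-extension, we apply the even case to $P'=(u_1,\dots,u_{p-1})$ and then append $u_p$ as the new rightmost vertex. The new edges are $u_{p-1}u_p$, which has length one and so nests nothing, and $vu_p$. The edge $vu_p$ now spans all even vertices, but each of these has only odd neighbours, all left of $v$; the vertices of $I$ have neighbours left of $v$. So again nothing is nested, and condition~3 is verified exactly as before.

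For the \OLExtension-extension, $u_p$ is placed directly right of $u_{p-2}$, hence just left of $I$, and $w$ stays the rightmost vertex. The new edges are $u_{p-1}u_p$ and $u_pw$; note that $u_{p-1}$ lies right of $I$.
\begin{itemize}
\item Edge $u_pw$ (which is the case I expect to need the most care): its interior consists of $I$, the even interval $u_2,\dots,u_{p-1}$, and nothing else, since $X=\{w\}$. We must check that none of these has an edge staying inside $(u_p,w)$. Even vertices connect only to odd vertices left of $u_p$, or, for $u_{p-1}$, to $u_p$ itself. Vertices of $I$ connect only left, by independence, except the neighbour $z\in I$ of $w$, whose edge $zw$ shares the endpoint $w$. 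So $u_pw$ nests nothing, and it is nested by nothing because $w$ is rightmost.
\item Edge $u_{p-1}u_p$: it must not nest, or be nested by, the edges $zw$, $u_{p-2}u_{p-1}$, or the edges from $I$ going left. With $u_{p-2}\prec u_p\prec I\prec u_2\prec\dots\prec u_{p-1}\prec w$, the pair $u_{p-2}u_{p-1}$ and $u_pu_{p-1}$ shares an endpoint, and $zw$ crosses $u_pu_{p-1}$ rather than nesting with it. Edges from $I$ to vertices left of $u_1$ also cross. An edge from $I$ to a vertex of $S$ or to $u_1$ would need to be nested inside $u_{p-1}u_p$, but its left endpoint lies left of $u_p$, so it cannot be.
\end{itemize}
The remaining edges are handled by \Cref{obs:compressed-extension}; we just need to check that its hypothesis about edges leaving past $u_{p-1}$ is met. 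The only such edge is $zw$, and $z\in I$ is allowed.

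Finally, the frontier $(u_p,w)$ holds: $\deg(w)=2$ with $N(w)=\{z,u_p\}$, $w$ is rightmost, and every $y$ strictly between $u_p$ and $w$ has all neighbours other than $w$ left of $u_p$, by the list above. The one exception would be $u_{p-1}$, whose neighbour is $u_p$ itself. Here the definition's ``$x\prec u$'' must be read loosely as $x\preceq u$; otherwise the frontier claim would fail, and we would adjust that condition accordingly.
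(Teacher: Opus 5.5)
Your proof is correct and takes essentially the paper's route: nesting-freeness of $G+P$ (or of $G+P'$ with $P'=(u_1,\dots,u_{p-1})$ for odd $p$) comes from the compressed extension, and you then check the extra edges $vu_p$, $u_{p-1}u_p$, $u_pw$ and the frontier conditions by hand; your treatment of the OL case is more explicit than the paper's. Your caveat about $y=u_{p-1}$ is justified: under the literal condition~3, $(u_p,w)$ is never a frontier in the third item (the paper's proof misses this when it claims all these neighbours lie strictly left of $u_p$), and relaxing the condition to $x\preceq u$ repairs the lemma without harming the later uses of frontiers, where an extra edge $yu$ either shares the endpoint $u$ with the newly added edge or, since $u\notin S\cup I$, leaves $S\cup I$ independent.
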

\begin{proof}
	Let $G$, $\prec$, $\prec'$, and $G'$ be as in the statement. 
	First, we discuss the \EExtension-extension case.
	We now show that~$\prec'$ is a valid one-page queue layout of $G'$ and argue afterwards that $(v, u_p)$ is a frontier of it.
	In the end, we discuss how to adapt the proof for \ORExtension- and \OLExtension-extensions.
	
	\proofsubparagraph*{Queue Layout.}
	Towards a contradiction, assume that $\prec'$ is not a one-page queue layout of $G'$.
	Hence, there are two edges $ab, cd \in E(G')$ such that $a \prec' c \prec' d \prec' b$.
	Since $\prec$ is a valid one-page queue layout of $G$ and $\prec$ occurs as sub-order in $\prec'$, we have $ab \in E(G') \setminus E(G)$ or $cd \in E(G') \setminus E(G)$ (or both).
	Since we use the compressed order for $P$, we know that not both can belong to $P$, i.e., we cannot have $ab, cd \in E(P)$; recall also \Cref{obs:compressed-extension}.
	Thus, either one edge must be from $G$, or one edge must be the edge $vu_p$.
	For the former, since $S \cup I$ is (and remains) an independent set and $X = \emptyset$, we observe that every edge $xy \in E(G)$ has one endpoint, say $x$, at $u_1$ or left of $u_1$, i.e., $x \preceq' u_1$ (recall also \Cref{obs:compressed-extension} and \Cref{fig:compressed-extension-odd}a).
	Thus, every vertex $y \in V(G)$ to the right of $u_1$ is in $S$ or $I$, i.e., $y \in S \cup I$, which is spanned over by the path $P$ in $\prec'$.
	Hence, no edge $xy \in E(G)$ can be involved in a nesting. 
	For the latter, observe that only $ab = vu_p$ is possible.
	In particular, $cd = vu_p$ is impossible, since $d \prec' b$ and $u_p$ is the rightmost vertex in $\prec'$.
	However, also the edge $vu_p$ cannot be involved in a nesting, since every vertex $c$ with $v \prec' c \prec' u_p$ is either from $I$ or from the even interval of $P$.
	In particular, this means $d$ is either to the left of $u_1$ or from the odd interval of $P$, implying $d \prec' v$; hence, no nesting here either.
	
	Combining all, we conclude that $\prec'$ is a one-page queue layout of $G'$.
	
	\proofsubparagraph*{Frontier.}
	To see that $(v,u_p)$ is a frontier of $\prec'$, we first observe 
	that $u_p$ is only adjacent to $u_{p - 1}$ (which exists as $p \geq 2$) and to $v$. 
	Moreover, $u_p$ is the last vertex in $P$ and, consequently, the rightmost vertex in $\prec'$; recall $X = \emptyset$.
	
	Finally, let $y \in V(G')$ be a vertex such that $v \prec' y \prec' u_p$.
	There are two cases, either $y \in I$ or $y \in V(P)$ (note that $y \in S$ is impossible as every vertex of $I$, in particular $v$, is right of every vertex in $S$).
	For the former, recall that $S \cup I$ is an independent set.
	Since $X = \emptyset$ (and $\prec$ occurs as suborder in $\prec'$), we conclude that $x \prec' v$ must hold.
	For the latter, observe that $y$ is in the even interval of $P$.
	Since such a vertex is only adjacent to vertices from the odd interval, we have in this case $x \prec' v$ as well.
	We conclude that $(v, u_p)$ is a frontier of $\prec'$.
	
	\proofsubparagraph*{$\boldsymbol{\ORExtension}$-extensions.}
	We now discuss $\ORExtension$-extensions.
	First observe that $p$ odd implies $p \geq 3$.
	Hence, $P' = P - \{u_p\}$ has $p - 1 \geq 2$ vertices, which is even.
	Observe that the construction resembles an \EExtension-extension for the path $P'$.
	Let $\prec''$ be the linear order that we obtain from $\prec'$ after deleting $u_p$.
	It is, by the above arguments, a one-page queue layout of $G + P' + \{vu_{p - 1}\}$.
	Since queue layouts are closed under edge deletion, it is in particular a one-page queue layout of $G + P'$.
	The only difference between $\prec''$ and $\prec'$ is the vertex $u_p$, which is rightmost in $\prec'$, i.e., placed after $u_{p - 1}$; recall $X = \emptyset$ and \Cref{fig:compressed-extension-odd}b.
	To see that $\prec'$ is a one-page queue layout of $G'$, observe that $u_p$ is only incident to $u_{p - 1}$ and $v$.
	The edge $u_{p - 1}u_p$ cannot nest any other edge since $u_{p - 1}$ and $u_p$ are the two rightmost vertices in $\prec'$.
	For the edge $vu_p$, observe that $(v,u_{p - 1})$ is a frontier of $\prec''$ (with respect to a one-page queue layout of $G + P' + \{vu_{p - 1}\}$).
	Hence, for every vertex $y$ with $v \prec'' y \prec'' u_{p - 1}$ and every neighbor $x$ of $y$, we have $x \prec'' v$ or $x = u_{p - 1}$.
	By construction, $x = u_{p - 1}$ is not possible (see also \Cref{fig:compressed-extension-odd}b).
	Hence, only the former is possible, which implies that $vu_p$ cannot be involved in any nesting.
	Thus, $\prec'$ is a one-page queue layout of $G'$.
	To see that $(v, u_p)$ is a frontier of $\prec'$, we once more observe that every $y$ with $v \prec'' y \prec'' u_{p - 1}$ and every neighbor $x$ of $y$ has $x \prec'' v$ or $x = u_{p - 1}$.
	Combined with the fact that $vu_p$ is an edge of $G'$ and $N_{G'}(u_p) = \{v, u_{p - 1}\}$, we conclude that $(v, u_p)$ is indeed a frontier of $\prec'$.
	
	\proofsubparagraph*{$\boldsymbol{\OLExtension}$-extensions.}
	Finally, we discuss $\OLExtension$-extensions.
	As before, we first observe that $p$ odd implies $p \geq 3$ and $P' = P - \{u_p\}$ has $p - 1 \geq 2$ vertices, which is even.
	The layout $\prec''$ obtained by dropping $u_p$ resembles again the layout from an \EExtension-extension with the difference that $X = \{w\}$ instead of $X = \emptyset$.
	Thus, if it contains a nesting, then it must involve an edge incident to the vertex $w$.
	However, since $N_G(w) \subseteq I$, all edges incident to $w$ span precisely the even interval of $P$.
	Since every vertex in the even interval is only incident to vertices in the odd interval (and $I$ induces an independent set in $G$), this cannot introduce a nesting.
	The same holds true for $\prec'$, since we place $u_p$ in the odd interval.
	
	To see that $(u_p, w)$ is a frontier of $\prec'$, we first observe that $w$ is the only vertex in $X$ and thus also the rightmost vertex in $\prec'$; see \Cref{fig:compressed-extension-odd}c.
	Since $N_G(w) = \{z\}$ for some $z \in I$ and we have additionally $u_pw \in E(G')$, $\deg_{G'}(w) = 2$.
	Finally, for the last criterion of frontiers, consider an arbitrary vertex $y \in V(G')$ with $u_p \prec' y \prec' w$.
	Recall that the only neighbor of $w$ other than $u_p$ is the vertex $z$.
	Every other vertex $y$ with $u_p \prec' y \prec' w$ has only neighbors $x$ which are left of $u_p$, i.e., we have $x \prec' u_p$.
	This is because $I$ is an independent set of $G$ and every vertex $y \in V(G')$ with $u_p \prec' y \prec' w$, $y \notin I$ is from the even interval of $P$, which has only neighbors in the odd interval of $P$.
	Also the neighbors of $z$ other than $w$ are to the left of $u_p$.
	Combining all, we conclude that $(u_p, w)$ is a frontier of $\prec'$.
\end{proof}

\subparagraph*{Constructing the Layout.}
With \Cref{lem:compressed-extension} at hand, we are now ready to describe how to extend a one-page queue layout of $G_i$ into one for $G_{i + 1}$ for $1 \leq i \leq f(G) - 2$; we will construct the layout for $G_1$ and $G_{f(G)} = G$ in the end.
To this end, let $c_i = u_iw_i$ and $c_{i + 1} = u_{i+1}w_{i+1}$ be the chords separating $F_{i}$ and $F_{i + 1}$, and $F_{i + 1}$ and $F_{i + 2}$, respectively.
Consider the two vertex-disjoint paths $U_i$ and $W_i$ connecting the respective endpoints of the chords.
Since $G$ is maximal, we have $\Size{V(U_i)} = 2$ or $\Size{V(W_i)} = 2$.
Depending on which of the two paths consists only of two vertices, we devise a construction rule that extends $\prec_{i}$ to $\prec_{i + 1}$ while ensuring that the endpoints of $c_{i + 1}$ constitute a frontier of $\prec_{i + 1}$; see \Cref{fig:deg-three-layout-extension} for a visualization of the construction.

\begin{lemma}
	\label{lem:outerpath-deg-3-maintain-frontier}
	Let $\prec_i$, $1 \leq i \leq f(G) - 2$, be a one-page queue layout of $G_i$ with frontier $(u_i, w_i)$.
	We can extend $\prec_{i}$ to a one-page queue layout of $G_{i + 1}$ with frontier $(u_{i + 1}, w_{i + 1})$.
\end{lemma}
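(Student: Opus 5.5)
The plan is to split according to which of the two connecting paths is trivial. By \Cref{lem:outerpath-deg-3-supergraph}, $\Size{V(U_i)} = 2$ or $\Size{V(W_i)} = 2$. In either case we add the new vertices of $F_{i+1}$ with one of the extensions of \Cref{lem:compressed-extension}, and we choose the separator $S$ and interior $I$ so that the chord $c_{i+1}$ becomes the frontier edge. The key information comes from the frontier $(u_i,w_i)$ of $\prec_i$ (\Cref{def:outerpath-deg-3-frontier}): $w_i$ is rightmost and has degree~$2$ in $G_i$. Moreover, every vertex strictly between $u_i$ and $w_i$ has all its other neighbours to the left of $u_i$. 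Since $\Delta(G)\le 3$, no such vertex is adjacent to $u_i$ except through the chord structure, which we check.

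\emph{Case 1: $\Size{V(W_i)}=2$.} Then $w_{i+1}$ is adjacent to $w_i$, and $U_i=(u_i=u_i^1,\dots,u_i^p=u_{i+1})$ with $p\ge 2$ contributes new vertices. The chord $c_{i+1}=u_{i+1}w_{i+1}$ closes the face. I would treat this by symmetry with Case~2, exchanging the roles of the two sides. If that fails, I would handle $p=2$ directly: place $w_{i+1}$ and $u_{i+1}$ to the right of $w_i$ in a suitable order and check nestings by hand.

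\emph{Case 2: $\Size{V(U_i)}=2$.} Then $u_{i+1}$ is adjacent to $u_i$, and $W_i=(w_i=w_i^1,\dots,w_i^q=w_{i+1})$. Here I apply \Cref{lem:compressed-extension} to the path $P=W_i$, which starts at $w_i$, together with the single new vertex $u_{i+1}$. The idea is to let the interior $I$ consist of the new vertex $u_{i+1}$ placed immediately to the right of $w_i$, and to take $S=\emptyset$. Then $S\cup I$ is trivially independent and $X=\emptyset$. The resulting frontier $(v,u_p)=(u_{i+1},w_{i+1})$ is exactly the chord $c_{i+1}$, realised by the added edge $vu_p$.

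The edge $u_iu_{i+1}$ needs extra care. Every vertex between $u_i$ and $w_i$ has all its neighbours left of $u_i$, apart from $w_i$, and $w_i$'s only other neighbour is the predecessor $z$ on $W$. Using this, I would show that $u_iu_{i+1}$ neither nests nor is nested. For even $q$ we use the \EExtension-extension and for odd $q$ the \ORExtension-extension. Where parity forces $u_{i+1}$ to be rightmost instead, we use the \OLExtension-extension with $X=\{u_{i+1}\}$; its hypothesis $N_G(w)=\{z\}$ with $z\in I$ then has to be arranged by placing $u_i$ appropriately.

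The main obstacle is matching the hypotheses of \Cref{lem:compressed-extension} exactly. In the current layout the path starts at $w_i$, which is rightmost, so there is no room to the right of $w_i$ for $S$ and $I$ unless new vertices are put there. The chord edge $u_iu_{i+1}$ and the old edge $u_iw_i$ are also not covered by the lemma and must be checked separately using property~3 of the frontier. My first attempt is to regard $w_i$ as $u_1$ and $u_{i+1}$ as the sole interior vertex. If that breaks, I would instead regard $u_i$ as the start of a path $(u_i,w_i,\dots)$ and choose $I$ to be the interval strictly between $u_i$ and $w_i$, which is independent by frontier property~3. That choice is likely the intended one, with the cases $q$ even/odd mapped to E/OR and the symmetric side mapped to OL.
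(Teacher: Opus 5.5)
There is a genuine gap: you never distinguish where the second neighbour $z$ of $w_i$ lies. The frontier only gives $z\prec_i w_i$. Property~3 allows $z$ to lie strictly between $u_i$ and $w_i$; it then only constrains the other neighbours of $z$. In that situation your Case~2 step puts $u_{i+1}$ directly right of $w_i$. This yields $u_i\prec z\prec w_i\prec u_{i+1}$, so $u_iu_{i+1}$ nests $zw_i$, and your claim that $u_iu_{i+1}$ ``neither nests nor is nested'' is false.

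This is not a corner case, because your own \ORExtension-extension creates it: the predecessor $w_i^{q-1}$ of $w_{i+1}$ lands in the even interval, to the right of $u_{i+1}$. Concretely, from $v_1\prec u_1\prec w_1$ and a face with $p=2$, $q=3$, you obtain $v_1,u_1,w_1,u_2,w_1^2,w_2$. If the next face again has $p=2$, placing $u_3$ after $w_2$ makes $u_2u_3$ nest $w_1^2w_2$.

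The missing idea is the case split $z\prec_i u_i$ versus $u_i\prec_i z$. In the second case the paper places the new neighbour $u_i^2$ of $u_i$ directly \emph{left} of $w_i$. Then $u_iu_i^2$ only spans vertices whose neighbours lie left of $u_i$ or equal $w_i$. For $p=2$ it then puts $w_i^2$ directly left of $u_{i+1}$ and extends along $(w_i^2,\dots,w_{i+1})$ with interior $(u_{i+1},w_i)$. As a result, which of the \EExtension- and \ORExtension-extensions is used for a given parity of $q$ is swapped relative to the first case.

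Your Case~1 ($|V(W_i)|=2$, $p$ arbitrary) is not covered either. The two sides are not symmetric. The frontier singles out $w_i$ as the rightmost degree-$2$ vertex, while the new path $U_i$ hangs off $u_i$, whose new edge must jump over everything between $u_i$ and $w_i$. You also cannot ``place $u_i$ appropriately'', since $u_i$ is already fixed in $\prec_i$.

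Your fallback does not fit \Cref{lem:compressed-extension}. A path $(u_i,w_i,\dots)$ meets $G_i$ in two vertices. Taking the interval strictly between $u_i$ and $w_i$ as the interior leaves $w_i\in X$ with two neighbours, and that interval may be empty. So none of the three cases applies.

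What works is to use that interval as the \emph{separator} $S$, with interior $I=(w_i)$, and extend along $U_i$ from $u_i$. For $p$ even this is an \EExtension-extension. For $p$ odd you first append $w_{i+1}$ and then use the \OLExtension-extension with $X=\{w_{i+1}\}$. This again needs $z\prec_i u_i$, so that $S\cup I$ is independent. For $u_i\prec_i z$ one first places $u_i^2$ directly left of $w_i$. Then one extends along $(u_i^2,\dots,u_{i+1})$ with empty separator and interior $(w_i)$, placing $w_{i+1}$ at the right end.
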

\begin{proof}
	Consider the face $F_{i + 1}$ of $G$ and let $U_i = (u_i = u_i^1, \ldots, u_i^p = u_{i + 1})$ and $W_i = (w_i = w_i^1, \ldots, w_i^q = w_{i + 1})$ be the two vertex-disjoint paths on the Hamiltonian cycle $H$ connecting the respective endpoints of the chords; see \Cref{fig:deg-three-layout-extension}a.
	Since the chords form a matching, we have $u_i \neq u_{i + 1}$ and $w_i \neq w_{i + 1}$, i.e., $p, q \geq 2$.
	Moreover, since $G$ is maximal, $p = 2$ or $q = 2$ holds.
	Our idea is to obtain $\prec_{i + 1}$ by applying \Cref{lem:compressed-extension}.
	However, depending on $p > 2$ or $q > 2$ and their parity, we have to carefully select the separator $S$ and interior $I$; see \Cref{fig:deg-three-layout-extension} for a visualization.
	Before discussing the different cases, we first recall that the vertex $w_{i}$ has precisely one neighbor $z \in N(w_{i}) \setminus \{u_i\}$.
	Consider the vertex $u_i^2$, which exists as $p \geq 2$.
	We now describe safe placements for $u_i^2$ depending on the position of $z$.
	If $z \prec_i u_i$, we can place $u_i^2$ directly to the right of $w_i$.
	Since this only introduces the edge $u_iu_i^2$ and every vertex $y \in V_i$ with $u_i \prec_i y \prec_i w_i$ has only neighbors to the left of $u_i$ (since we assume $z \prec_i u_i$), this does not introduce a nesting.
	If $u_i \prec_i z$, we can place $u_i^2$ directly left of $w_i$.
	Consider the vertices $y \in V_i$ with $u_i \prec_i y \prec_i u_i^2$, which are spanned by the edge $u_iu_i^2$.
	Since $(u_i,w_i)$ is a frontier, every such vertex $y$ is either adjacent to $w_i$ (which is to the right of $u_i^2$) or to vertices $x$ with $x \prec_i u_i$.
	Thus, also here we do not introduce a nesting.
	In the following, we call this a Type-1 and Type-2 placement of $u_i^2$, respectively.
	We are now ready to discuss the individual cases.

	\proofsubparagraph*{$\boldsymbol{p = 2}$, $\boldsymbol{q}$ even:}
	For $p = 2$ and $q$ even, we first consider the case $z \prec_i u_i$.
	There, we first obtain $\prec_{i + 1}'$ via a Type-1 placement of $u_i^2 = u_{i + 1}$.
	Afterwards, $\prec_{i + 1}$ is the \EExtension-extension of $\prec_{i + 1}'$ by $W_i$ with empty separator and interior $(u_{i + 1})$; see \Cref{fig:deg-three-layout-extension}b for an example.
	Since $u_{i+1}$ is the rightmost vertex in $\prec_{i + 1}'$, this yields a valid one-page queue layout of $G_{i + 1}$ with frontier $(u_{i + 1}, w_{i + 1})$ by \Cref{lem:compressed-extension}.
	
	For the case $u_i \prec_i z$, we differentiate $q = 2$ and $q > 2$.
	In both cases, we perform a Type-2 placement of $u_i^2 = u_{i+1}$ to obtain $\prec_{i + 1}'$.
	
	For $q = 2$, we obtain $\prec_{i + 1}$ by extending $\prec_{i + 1}'$ with $w_i \prec_{i + 1} w_{i + 1}$; see \Cref{fig:deg-three-layout-extension}c.
	Since this only adds two further edges, $u_{i + 1}w_{i + 1}$ and $w_iw_{i + 1}$, and the involved three vertices are the three rightmost vertices in $\prec_{i + 1}$, this is a one-page queue layout of $G_{i + 1}$.
	Moreover, $w_i$ is the only vertex between $u_{i + 1}$ and $w_{i + 1}$, and it is not incident to $u_{i + 1}$, thus $(u_{i + 1}, w_{i + 1})$ is a frontier of $\prec_{i + 1}$.
	For $q > 2$, we place the next vertex of the path $W_i$, the vertex $w_i^2$, directly left of $u_{i+1}$.
	Since $w_i^2$ is (for now) only incident to $w_i^1 = w_i$, and the edge $w_iw_i^2$ spans precisely the vertex $u_{i + 1}$, this results in a one-page queue layout.
	The layout $\prec_{i + 1}$ is then an \ORExtension-extension of $\prec_{i + 1}'$ by the path $W_i' = (w_i^2, w_i^3, \ldots, w_i^{q} = w_{i + 1})$ with empty separator and interior $I' = (u_{i+1}, w_i)$; see \Cref{fig:deg-three-layout-extension}d.
	Since $w_i$ is the rightmost vertex in $\prec_{i + 1}'$ and $W_i'$ has an odd number of vertices, \Cref{lem:compressed-extension} ensures that $\prec_{i + 1}$ is a one-page queue layout of $G_{i + 1}$ with frontier $(u_{i + 1}, w_{i + 1})$.

	\proofsubparagraph*{$\boldsymbol{p = 2}$, $\boldsymbol{q}$ odd:}
	For $p = 2$ and $q$ odd, we perform similarly to before.
	First, we consider the case $z \prec_i u_i$ and obtain $\prec_{i + 1}'$ by a Type-1 placement of $u_i^2 = u_{i + 1}$.
	Afterwards, to obtain $\prec_{i + 1}$, we use an \ORExtension-extension of $\prec_{i + 1}'$ by $W_i$ with empty separator and interior $I = (u_{i+1})$; see \Cref{fig:deg-three-layout-extension}e.
	As $u_{i+1}$ is the rightmost vertex of $\prec_{i + 1}'$, this yields a one-page queue layout of $G_{i + 1}$ with frontier $(u_{i + 1}, w_{i + 1})$; recall \Cref{lem:compressed-extension}.
	
	Second, we consider the case $u_i \prec_i z$ and first obtain $\prec_{i + 1}'$ by a Type-2 placement of $u_i^2 = u_{i + 1}$ followed by a placement of $w_i^2$ directly left of $u_{i+1}$.
	Since $q$ odd implies $q \geq 3$, we can get $\prec_{i + 1}$ via an \EExtension-extension of $\prec_{i + 1}'$ by $W_i' = (w_i^2, \ldots, w_i^{q} = w_{i + 1})$ with empty separator and interior $I = (u_{i+1}, w_i)$; see \Cref{fig:deg-three-layout-extension}f.
	\Cref{lem:compressed-extension} ensures that it is a one-page queue layout of $G_{i + 1}$ with frontier $(u_{i + 1}, w_{i + 1})$.

	\proofsubparagraph*{$\boldsymbol{p}$ even, $\boldsymbol{q = 2}$:}
	Intuitively, we now switch the roles of $p$ and $q$ (and $U_i$ and $W_i$) in the constructions from the previous case.
	
	To be more precise, we first assume $z \prec_i u_i$ and let $\prec_{i + 1}'$ be the \EExtension-extension of $\prec_i$ by $U_i$ where the interval $\{y \in V_i \mid u_i \prec_i y \prec_i w_i\}$ forms the separator $S$ and $I = (w_i)$ is the interior; see \Cref{fig:deg-three-layout-extension}g.
	Observe that $(u_i, w_i)$ being a frontier of $\prec_i$, implies that every vertex $y \in S$ has only neighbors $x$ with $x \prec_i u_i$ or $x = w_i$.
	Since we assume $z \prec_i u_i$, we have $x \neq w_i$ and, consequently, $x \prec_i u_i$ for every neighbor $x$ of $y \in S$.
	Thus, $S \cup I$ is an independent set and we can apply \Cref{lem:compressed-extension} to conclude that $\prec_{i + 1}'$ is a one-page queue layout of $G_i + U_i$ (since queue layouts are closed under edge deletion).
	What is missing from a layout for $G_{i + 1}$ is the vertex $w_{i+1}$, which we place at the very end of the layout; recall \Cref{fig:deg-three-layout-extension}g.
	Note that the edge $w_iw_{i + 1}$ spans precisely the even interval of $U_i$.
	Since these vertices are only adjacent to vertices in the odd interval of $U_i$, the edge cannot be involved in any nesting.
	The edge $u_{i + 1}w_{i + 1}$ is between the two rightmost vertices in $\prec_{i + 1}$ and, consequently, cannot be involved in any nesting either.
	The placement of $u_{i + 1}$ and $w_{i + 1}$ allows us also to conclude that $(u_{i + 1}, w_{i + 1})$ is a frontier of $\prec_{i + 1}$.
	
	For the case $u_i \prec_i z$, we obtain $\prec_{i + 1}'$ via a Type-2 placement of $u_i^2$.
	If $p = 2$, we have $u_i^2 = u_{i + 1}$ and place $w_{i + 1}$ at the end of $\prec_{i + 1}'$ to obtain $\prec_{i + 1}$.
	This introduces two new edges, $u_{i + 1}w_{i + 1}$ and $w_iw_{i+1}$.
	The former spans precisely the vertex $w_i$, the latter spans no vertex at all.
	Thus, $\prec_{i + 1}$ is a one-page queue layout of $G_{i + 1}$ with frontier $(u_{i+1}, w_{i + 1})$ (as $w_{i+1}$ is the rightmost vertex and has degree 2 in $G_{i + 1}$).
	
	If $p > 2$, we observe that the path $U_i' = (u_i^2, \ldots, u_i^p = u_{i + 1})$ has $\Size{V(U_i')} = p - 1 \geq 3$ vertices, which is an odd number as $p$ is even.
	We place the vertex $w_{i + 1}$ to the very end of $\prec_{i + 1}'$ to obtain $\prec_{i + 1}''$.
	Afterwards, we obtain $\prec_{i + 1}$ by an \OLExtension-extension of $\prec_{i + 1}''$ by $U_i'$ with empty separator and interior $w_i$; see \Cref{fig:deg-three-layout-extension}h.
	By \Cref{lem:compressed-extension}, $\prec_{i + 1}$ is a one-page queue layout of $G_{i + 1}$; observe that $X = \{w_{i + 1}\}$ holds.
	
	\proofsubparagraph*{$\boldsymbol{p}$ odd, $\boldsymbol{q = 2}$:}
	Finally, we consider the last case with $p$ odd and $q = 2$.
	
	First, assume $z \prec_i u_i$.
	In this case, let $\prec_{i + 1}'$ be obtained by placing $w_{i + 1}$ directly right of~$w_{i}$.
	As this only introduces the edge $w_iw_{i+1}$ between the two rightmost vertices in $\prec_{i + 1}'$, the layout is a one-page queue layout.
	Afterwards, we obtain $\prec_{i + 1}$ by an \OLExtension-extension of $\prec_{i + 1}'$ by $U_i$ where the separator $S$ is formed by the vertices $y$ with $u_i \prec_{i} y \prec_{i} w_i$ and with interior $I = (w_i)$; see \Cref{fig:deg-three-layout-extension}i.
	Since $(u_i, w_i)$ is a frontier of $\prec_i$, every vertex $y \in S$ has only neighbors $x$ with $x \prec_i u_i$ or $x = w_i$.
	Moreover, since we assume $z \prec_i u_i$, we have $x \neq w_i$ and, consequently, $x \prec_i u_i$ for every neighbor $x$ of $y \in S$.
	Thus, $S \cup I$ is an independent set and we can apply \Cref{lem:compressed-extension} to conclude that $\prec_{i + 1}$ is a one-page queue layout of $G_{i + 1}$ with frontier $(u_{i + 1}, w_{i + 1})$; observe that $X = \{w_{i + 1}\}$ holds.
	
	Second, assume $u_i \prec_i z$.
	In this case, we first obtain $\prec_{i + 1}'$ by a Type-2 placement of $u_i^2$.
	From $\prec_{i + 1}'$, we can get a layout of $G_i + U_i$ via an \EExtension-extension by $U_i' = (u_i^2, \ldots, u_i^p = u_{i + 1})$ with empty separator and interior $I = (w_i)$.
	To get $\prec_{i + 1}$, we place $w_{i + 1}$ to the very right of the layout; see \Cref{fig:deg-three-layout-extension}j.
	To see that $\prec_{i + 1}$ is a one-page queue layout of $G_{i + 1}$, observe that we introduce two new edges: $w_{i}w_{i + 1}$ and $u_{i + 1}w_{i + 1}$.
	The former spans precisely the even interval of $U_i'$ and can thus not be involved in any nesting (since all vertices are adjacent to the odd interval or $w_{i + 1}$).
	The latter edge spans the two rightmost vertices, i.e., cannot be involved in a nesting.
	Together with the fact that $w_{i + 1}$ has two neighbors in $G_{i + 1}$ and is rightmost in $\prec_{i + 1}$, the last observation allows us to conclude that $(u_{i + 1}, w_{i + 1})$ is a frontier of $\prec_{i + 1}$.

	\proofsubparagraph*{Wrapping Up.}
	In all considered cases, we have constructed a one-page queue layout $\prec_{i + 1}$ of~$G_{i + 1}$ with frontier $(u_{i + 1}, w_{i + 1})$.
	Thus, the statement follows by combining them.
\end{proof}

\begin{figure}
	\centering
	\includegraphics[page=1]{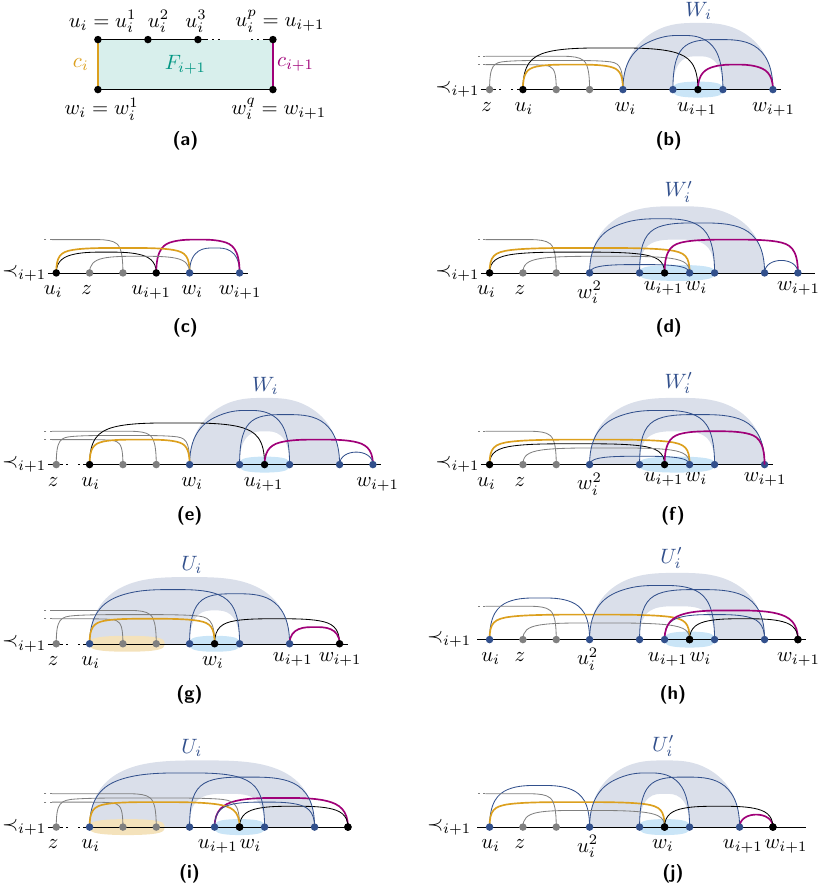}
	\caption{Visualization of the construction behind \Cref{lem:compressed-extension}.
		\textbf{\textsf{(a)}} The face $F_{i + 1}$, which is bounded by the chords $c_i = u_iw_i$ (yellow) and $c_{i+1} = u_{i + 1}w_{i+1}$ (purple), together with the two paths $U_i$ and $W_i$ consisting of $p$ and $q$ vertices, respectively.
		Here $q = 2$ and $p \geq 2$.
		The tuple $(u_i,w_i)$ is a frontier of $\prec_i$.
		\textbf{\textsf{(b)}}--\textbf{\textsf{(d)}} Case $p = 2$ and $q$ even.
		\textbf{\textsf{(e)}} and \textbf{\textsf{(f)}} Case $p = 2$ and $q$ odd.
		\textbf{\textsf{(g)}} and \textbf{\textsf{(h)}} Case $p$ even and $q = 2$.
		\textbf{\textsf{(i)}} and \textbf{\textsf{(j)}} Case $p$ odd and $q = 2$.
		We indicate the \EExtension-, \ORExtension-, and \OLExtension-extension with the blue band.
		The separator $S$ and interior $I$ are indicated in yellow and light blue, respectively.
	}
	\label{fig:deg-three-layout-extension}
\end{figure}

We now iteratively apply \Cref{lem:outerpath-deg-3-maintain-frontier} to obtain a one-page queue layout of an outerpath with maximum degree 3.
\begin{proposition}
	\label{thm:outerpath-max-deg-3}
	Every outerpath with vertex degree at most~$3$ has queue number~$1$.
\end{proposition}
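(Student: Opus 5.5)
We reduce to maximal outerpaths of maximum degree~$3$ and build the layout face by face along the weak dual. Let $G$ be an outerpath with $\Delta(G)\le 3$. By \Cref{lem:outerpath-deg-3-supergraph}, $G$ is a subgraph of an outerpath $G^+$ with $\Delta(G^+)\le 3$ whose first and last faces are triangles and whose inner faces have at least three single-edge sides. Queue layouts are closed under taking subgraphs: restricting the vertex order and deleting edges creates no new nestings. Hence it suffices to show that $G^+$ has queue number~$1$, and we assume from now on that $G$ is maximal in this sense. If $G$ is a single triangle, any order works. If $f(G)=2$, $G$ is $K_4$ minus an edge and a layout exists by \Cref{lem:four-local-states}. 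So assume $f(G)\ge 3$.

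\textbf{Base case.} We construct $\prec_1$ for $G_1$, the first face, which is the triangle $v_1u_1w_1$ with chord $c_1=u_1w_1$. Place $v_1\prec u_1\prec w_1$. Then $w_1$ is rightmost, and $N_{G_1}(w_1)=\{u_1,v_1\}$, so $\deg(w_1)=2$ with $z=v_1$. No vertex lies strictly between $u_1$ and $w_1$, so the third frontier condition holds vacuously. Three vertices admit no nesting. Thus $(u_1,w_1)$ is a frontier of $\prec_1$. We must check that the labeling is consistent with the convention that $w_i$ is the endpoint with a single further neighbor $z$. This holds because $G_1$ only contains the triangle, and the roles of $u_1$ and $w_1$ may be swapped freely.

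\textbf{Inductive step.} We apply \Cref{lem:outerpath-deg-3-maintain-frontier} for $i=1,\dots,f(G)-2$. This gives a one-page queue layout $\prec_{f(G)-1}$ of $G_{f(G)-1}$ with frontier $(u_{f(G)-1},w_{f(G)-1})$.

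\textbf{Last face.} The last face is the triangle $u_{f(G)-1}w_{f(G)-1}v_n$. We insert $v_n$ so that its two new edges create no nesting, reusing the Type-1/Type-2 placement argument from the proof of \Cref{lem:outerpath-deg-3-maintain-frontier}. Let $z$ be the other neighbor of $w=w_{f(G)-1}$ and $u=u_{f(G)-1}$.

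If $z\prec u$, put $v_n$ directly right of $w$. The edge $wv_n$ spans nothing. The edge $uv_n$ spans only vertices strictly between $u$ and $w$, plus $w$ itself. By the frontier property, the vertices strictly between $u$ and $w$ have all neighbors left of $u$ except possibly $w$, and $w$ has neighbors $u$ and $z\prec u$. So every edge inside the span of $uv_n$ has an endpoint left of $u$ or is $uw$, and no nesting arises.

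If $u\prec z$, put $v_n$ directly left of $w$ (the Type-2 placement). Then $v_nw$ spans nothing. For $uv_n$, every vertex $y$ with $u\prec y\prec v_n$ is adjacent only to $w$ or to vertices left of $u$. Hence edges at $y$ leave the span of $uv_n$ on one side, and no nesting occurs. We must also check that $v_n$ does not lie inside another edge that it nests: any edge spanning $v_n$ has an endpoint left of $u$ or is incident to $w$. In the latter case it is $zw$ or $uw$, and these share an endpoint with the new edges or cross them rather than nest.

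\textbf{Main obstacle.} The delicate point is this final-face insertion, in particular verifying in the $u\prec z$ case that the edge $zw$ and the new edge $uv_n$ do not nest. Since $u\prec z\prec v_n\prec w$ would mean $uv_n$ and $zw$ cross, this is fine. The remaining work is bookkeeping to ensure the base labeling meets the hypotheses of \Cref{lem:outerpath-deg-3-maintain-frontier}.
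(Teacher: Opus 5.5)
Your proposal is correct and follows the paper's proof essentially step for step. You pass to the supergraph $G^+$ from \Cref{lem:outerpath-deg-3-supergraph}, start with $v_1\prec u_1\prec w_1$, iterate \Cref{lem:outerpath-deg-3-maintain-frontier}, and insert $v_n$ either as the new rightmost vertex (if $z\prec u_{k-1}$) or directly left of $w_{k-1}$ (if $u_{k-1}\prec z$); your separate handling of $f(G^+)=2$ via \Cref{lem:four-local-states} is harmless but unnecessary, since the general argument already covers it with an empty iteration.
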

\begin{proof}
	Let $G$ be an outerpath with maximum degree 3.
	We assume that $G$ has at least four vertices; otherwise we can take any vertex order of $G$.
	
	By \Cref{lem:outerpath-deg-3-supergraph}, we let $G^+$ be a maximal outerpath of degree 3 which is a supergraph of $G$.
	Let $v_1$ and $v_n$ be the unique vertices of degree two in the first and last face of $G^+$, respectively.
	Let $F_1$ be the leftmost face (in a drawing) of $G^+$.
	By \Cref{lem:outerpath-deg-3-supergraph}, $F_1$ is a triangle.
	Let it be composed of the vertices $v_1$, $u_1$, and $w_1$, where $u_1w_1$ forms the chord separating $F_1$ from the next face $F_2$.
	
	We now construct a linear order $\prec_1$ of $v_1$, $u_1$, and $w_1$ as $v_1 \prec_1 u_1 \prec_1 w_1$.
	Observe that $\prec_1$ is a one-page queue layout of $G_1$ with frontier $(u_1, w_1)$.
	Let $k$ be the number of faces of $G^+$ and let $F_k$ be composed of the vertices $u_{k - 1}$, $w_{k - 1}$, and $v_{n}$, where $c_{k - 1} = u_{k - 1}w_{k - 1}$; recall \Cref{lem:outerpath-deg-3-supergraph}.
	By iteratively applying \Cref{lem:outerpath-deg-3-maintain-frontier}, we can obtain a one-page queue layout $\prec_{k - 1}$ of $G_{k - 1}$ with frontier $(u_{k - 1}, w_{k - 1})$.
	
	What is missing from a one-page queue layout of $G^+$ is the vertex $v_n$ and its two incident edges $u_{k - 1}v_n$ and $w_{k - 1}v_n$.
	Since $(u_{k - 1}, w_{k - 1})$ is a frontier of $\prec_{k - 1}$, $w_{k - 1}$ is the rightmost vertex in $\prec_{k - 1}$; recall \Cref{def:outerpath-deg-3-frontier}.
	Moreover, it has, apart from $u_{k - 1}$, one other neighbor $z \in V_{k - 1}$.
	
	If $z \prec_{k - 1} u_{k - 1}$, we extend $\prec_{k - 1}$ by placing $v_{n}$ as the new rightmost vertex, obtaining $\prec$.	
	To see that $\prec$ is a one-page queue layout, observe that only the edges $u_{k - 1}v_n$ and $w_{k - 1}v_n$ have been added.
	Thus, only they can be involved in any nesting (since $\prec_{k - 1}$ is a one-page queue layout of $G_{k - 1}$).
	Since $w_{k - 1}$ is directly to the left of $v_n$, $w_{k - 1}v_n$ cannot nest any edge (since the only edge above it is $u_{k - 1}v_{n}$).
	Regarding the edge $u_{k - 1}v_{n}$, the frontier (together with the assumption $z \prec_{k - 1} u_{k - 1}$) ensures that every vertex $y$ with $u_{k - 1} \prec y \prec w_{k - 1} \prec v_n$ is only adjacent to vertices $x \prec u_{k - 1}$.
	Thus, also the edge $u_{k - 1}v_{n}$ cannot nest any other edge.
	
	If $u_{k - 1} \prec_{k - 1} z$, we extend $\prec_{k - 1}$ by placing $v_{n}$ directly left of $w_{k - 1}$, obtaining $\prec$.	
	Again, only the edges $u_{k - 1}v_n$ and $w_{k - 1}v_n$ have been added, and only they can be involved in any nesting (since $\prec_{k - 1}$ is a one-page queue layout of $G_{k - 1}$).
	Since $v_n$ is directly next to $w_{k - 1}$, the edge $w_{k - 1}v_n$ cannot nest any other edge (the only edge above it is $u_{k - 1}w_{k - 1}$).
	Regarding the edge $u_{k - 1}v_{n}$, every vertex $y$ with $u_{k - 1} \prec y \prec v_n$ is either adjacent to vertices $x \prec u_{k - 1}$ or to $w_{k - 1}$.
	In neither case can the edge $u_{k - 1}v_{n}$ be involved in a nesting.

	Combining all, we conclude that $\prec$ is a one-page queue layout of $G^+$.
	Since queue layouts are closed under vertex removal, this also shows that $G$ has a one-page queue layout.
\end{proof}

\subsection{Outerpaths with Maximum Degree~4}
\label{sub:degree-4-outerpaths}
In this subsection, we prove that there are outerpaths of maximum degree~$4$ that do not admit a $1$-queue layout \lQL.

\begin{proposition}
\label{prop:max-deg-4-outerpaths}
There are outerpaths of maximum degree~4 that do not admit a $1$-queue layout.
\end{proposition}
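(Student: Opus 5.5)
We will exhibit an explicit outerpath $G$ of maximum degree~$4$ and show by a finite case analysis that it has no $1$-queue layout. Odd cycles are the natural source of obstruction. By the Heath--Rosenberg characterization, a $1$-queue layout is an arched leveled-planar drawing, and by \cref{obs:k-odd-length-cycles-k-arches} every odd face forces an arch. Since arches on a level must all share the leftmost vertex of that level, many triangles crowded together create conflicts. Maximal outerpaths of degree at most~$4$ are layoutable by \cref{prop:degree-4-max-outerpath}, so the example must be non-maximal.

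The plan is to start from a small triangulated core with few layouts and then attach non-triangular faces that destroy those layouts. The core is a strip of consecutive triangles in the alternating pattern UDUD\ldots, for which \cref{lem:four-local-states,lem:unique-continuation} give complete control. For a maximal outerpath, every $1$-queue layout of the strip is determined by its initial state in $\{-1,0,1\}$. Moreover, the opposite moves keep the state sequence alternating as $\sigma,-\sigma,\sigma,\dots$ or constant at $0$. Hence only three layouts of the core exist up to reversal, and each is explicitly known, including the two rightmost vertices by invariant~(I).

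Next we attach, at both ends of the strip and while keeping the degree at most~$4$, a quadrilateral or pentagonal face followed by another short triangulated strip. Each attached piece is chosen so that, in each of the three forced states, its new vertices have no nesting-free position. The argument mimics the proof of \cref{thm:maxouterplanar}. For each state we list the gaps for the new vertex~$y$ and in each gap name a nesting pair, as in cases (Q.1)--(Q.3).

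The new difficulty is that the attached vertices of a non-triangular face are not forced into a unique gap, so uniqueness of continuation fails. The case analysis must therefore branch on the positions of two consecutive new vertices. Two facts keep this tractable. First, the edge from the face's degree-$2$ path vertex back to the core spans an interval that must contain only vertices whose other neighbors lie outside it. Second, a short even-length path attached to the frontier must be interleaved with its endpoints. The main obstacle is choosing the gadget so that every branch dies while the degree stays at most~$4$.

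To find the gadget, I would first try a pentagon flanked on both sides by two or three alternating triangles. If needed, I would search small candidates by computer, checking all orders of about ten vertices. I would then present the resulting exhaustive case table, organized by the state of the left strip and then by the gap of the first pentagon vertex.
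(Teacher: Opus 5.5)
\textbf{There is a genuine gap: the proposal never exhibits a graph.} The statement is purely existential. Its whole content is an explicit outerpath of maximum degree~4 together with a verification that it has no 1-queue layout, and you supply neither. You leave the graph to be found (``I would first try\dots{} If needed, I would search small candidates by computer''), and you only promise the exhaustive case table.

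What you do establish is correct, but it is only scaffolding:
\begin{itemize}
\item Non-maximality is indeed forced by \cref{prop:degree-4-max-outerpath}.
\item For an alternating strip of at least three triangles, \cref{lem:four-local-states,lem:unique-continuation} do leave exactly three layouts up to reversal.
\end{itemize}
The step you yourself call the main obstacle is choosing attachments that kill every branch while keeping the degree at most~4. That step is the proof, and it is missing.

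\textbf{Your first concrete candidate fails in its most natural form.} Construct the graph as follows:
\begin{itemize}
\item Let $A$ be the square of the path $a_1\dots a_m$, which is a strip of $m-2$ alternating triangles.
\item Glue the pentagon $a_{m-1}a_mp_1p_2p_3$ onto the outer edge $a_{m-1}a_m$.
\item Glue the square of the path $p_1p_2b_3\dots b_k$ onto the edge $p_1p_2$.
\end{itemize}
This is an outerpath of maximum degree~4. Yet in the order $a_1,\dots,a_m,p_3,p_1,p_2,b_3,\dots,b_k$, every edge joins vertices at most two apart. So no edge has two vertices strictly inside it, and nothing nests. Having bandwidth at most~2 is a cheap test that any candidate must fail.

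\textbf{How the paper proceeds.} The example above also suggests that arch-counting on odd faces is not the right lever. The paper's witness is a specific 12-vertex outerpath built from triangles and quadrilaterals around the 4-cycle $C=v_1v_2v_3v_4$. The proof runs as follows:
\begin{itemize}
\item It fixes $v_2\prec v_4$ by reversal.
\item It shows that only four orders S.1--S.4 of $C$ are nesting-free.
\item It then inserts the remaining vertices one at a time: $v_5$, $v_6$, and in the deepest branches $v_8,v_{10},v_{11},v_{12},v_9,v_7$. In every branch it names an explicit 2-rainbow.
\end{itemize}
To complete your argument, you must commit to a concrete graph and carry out, or checkably certify, such an exhaustive analysis.
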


\begin{figure}[h!]
    \centering
    \includegraphics[width=0.4 \textwidth]{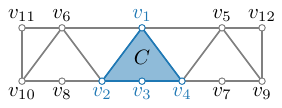}
    \caption{An outerpath of maximum degree~$4$ that does not admit a $1$-queue layout. }
    \label{fig:non-max-deg-4-counterexample}
\end{figure}

\begin{proof}
Consider the outerpath that is shown in \Cref{fig:non-max-deg-4-counterexample} and observe that its maximum degree is $4$. Suppose for the sake of contradiction that it admits a 1-queue layout $L$. We will show that it contains a 2-rainbow in $L$. Consider the 4-cycle $C$, highlighted in blue in \Cref{fig:non-max-deg-4-counterexample}. W.l.o.g.\ (by reversing $L$ if necessary) we assume that $v_2\prec v_4$ in $L$.
Since $L$ is a $1$-queue layout one of the following holds: 
\begin{enumerate*}[label={\textcolor{lipicsGray}{\bf\textsf{(S.\arabic*)}}},ref={\textsf{S.\arabic*}}]
\item\label{s:1} $v_2 \prec v_1 \prec v_3 \prec v_4$
\item\label{s:2} $v_2 \prec v_3 \prec v_1 \prec v_4$
\item\label{s:3} $v_1 \prec v_2 \prec v_4 \prec v_3$
\item\label{s:4} $v_3 \prec v_2 \prec v_4 \prec v_1$. %
\end{enumerate*}
Indeed, all other permutations yield a nesting in $L$:
if $v_2 \prec \{v_1,v_4\} \prec v_3$ or $v_1 \prec \{v_2,v_3\} \prec v_4$, then edges $(v_2,v_3)$ and $(v_1,v_4)$ form a $2$-rainbow;
if $v_2 \prec \{v_3,v_4\} \prec v_1$ or $v_3 \prec \{v_1,v_2\} \prec v_4$, then edges $(v_2,v_1)$ and $(v_4,v_3)$ form a $2$-rainbow.
We focus on Cases S.1 and S.2, which are more involved. Afterwards, we present Cases S.3 and S.4

\medskip
\noindent\textbf{Case \ref{s:1}:} $v_2 \prec v_1 \prec v_3 \prec v_4$.

\begin{figure}[h!]
    \centering
    \includegraphics[page=1]{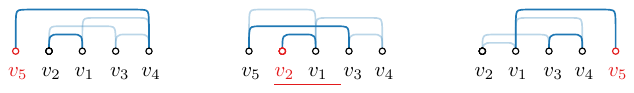}
\end{figure}

We next claim that vertex $v_5$ appears between $v_3$ and $v_4$ in $L$, that is $v_3 \prec v_5 \prec v_4$. To see this, assume to the contrary that $v_5$ is not between them. Then, for each of the remaining cases, it is not difficult to see that a $2$-rainbow is formed in $L$:
if $v_5 \prec v_2$, then edges $(v_5,v_4)$ and $(v_2,v_3)$ form a 2-rainbow;
if $v_2 \prec v_5 \prec v_3$, then edges $(v_2,v_3)$ and $(v_5,v_1)$ form a 2-rainbow; 
if $v_4 \prec v_5$, then edges $ (v_1,v_5)$ and $(v_3,v_4)$ form a 2-rainbow.
Hence, vertex $v_5$ appears between $v_3$ and $v_4$ in $L$, as we claimed. Next, consider vertex $v_6$. 

\begin{figure}[h!]
    \centering
    \includegraphics[page=2]{figures/S1.pdf}
\end{figure}

\noindent If $v_2 \prec v_6 \prec v_3$, then $(v_2,v_3)$ nests $(v_6,v_1)$; if $v_5 \prec v_6$, then $(v_6,v_2)$ nests $(v_1,v_5)$. Thus, two cases remain: either $v_6 \prec v_2$ or $v_3 \prec v_6 \prec v_5$, which we analyze separately below.

\medskip
\noindent\textbf{Subcase 1.1:} $v_6 \prec v_2$.
\begin{figure}[h!]
    \centering
    \includegraphics[page=3]{figures/S1.pdf}
\end{figure}
We proceed with vertex $v_8$. There exist three cases of vertex $v_8$ in $L$: \emph{(a)} $v_8 \prec v_6$, \emph{(b)} $v_1 \prec v_8 \prec v_3$, or \emph{(c)} $v_3 \prec v_8 \prec v_5$. The remaining cases do not hold as if $v_6 \prec v_8 \prec v_1$, then $(v_6,v_1)$ nests $(v_8,v_2)$ and if $v_5 \prec v_8$, then $(v_8,v_2)$ nests $(v_1,v_5)$. 
In the following, we show that each of the three cases yields a $2$-rainbow, which is a contradiction to the fact that $L$ is a $1$-queue layout. We begin with cases \emph{(b)} and \emph{(c)}.

\smallskip\noindent\emph{(b)} Recall that in this case the order of vertices $v_6,v_8,v_2,v_1,v_4,v_3,v_5$ is as follows: $v_6 \prec v_2 \prec v_1 \prec v_8 \prec v_3 \prec v_5 \prec v_4$. We show that vertex $v_{10}$ results in a $2$-rainbow in $L$:

\begin{figure}[h!]
    \centering
    \includegraphics[page=4]{figures/S1.pdf}
\end{figure}

\noindent if $v_{10} \prec v_2$, then $(v_{10},v_8)$ nests $(v_2,v_1)$;
if $v_2 \prec v_{10} \prec v_8$, then $(v_2,v_3)$ nests $(v_{10},v_8)$;
if $v_8 \prec v_{10}$, then $(v_{10},v_6)$ nests $(v_8,v_2)$.
Hence, vertex $v_{10}$ cannot appear in $L$ without creating a $2$-rainbow.

\smallskip\noindent\emph{(c)} In this case the order of the vertices is $v_6 \prec v_2 \prec v_1 \prec v_3 \prec v_8 \prec v_5 \prec v_4$. We claim that vertex $v_{10}$ can only be between $v_2$ and $v_1$ that is $ v_2 \prec v_{10} \prec v_1$. 

\begin{figure}[h!]
    \centering
    \includegraphics[page=5]{figures/S1.pdf}
\end{figure}

\noindent Indeed, if $v_{10} \prec v_2$, then $(v_{10},v_8)$ nests $(v_2,v_1)$;
if $v_1 \prec v_{10}$, then $(v_{10},v_6)$ nests $(v_2,v_1)$.
Hence, vertex $v_{10}$ appears between $v_2$ and $v_1$ in $L$, as we claimed. Next, we proceed with vertex $v_{11}$.

\begin{figure}[h!]
    \centering
    \includegraphics[page=6]{figures/S1.pdf}
\end{figure}

\noindent In each of the remaining cases, a $2$-rainbow is formed in $L$:
if $v_{11} \prec v_6$, then $(v_{11},v_{10})$ nests $(v_6,v_2)$;
if $v_6 \prec v_{11} \prec v_1$, then $(v_6,v_1)$ nests $(v_{11},v_{10})$;
if $v_1 \prec v_{11}$, then $(v_{11},v_6)$ nests $(v_2,v_1)$.
Hence, vertex $v_{11}$ cannot appear in $L$ without creating a $2$-rainbow.

\smallskip\noindent\emph{(a)} We first consider vertex $v_{10}$ and we will show that it appears before $v_8$ in $L$:

\begin{figure}[h!]
    \centering
    \includegraphics[page=7]{figures/S1.pdf}
\end{figure}

\noindent if $v_8 \prec v_{10} \prec v_2$, then $(v_8,v_2)$ nests $(v_{10},v_6)$;
if $v_2 \prec v_{10} $, then $(v_{10},v_8)$ nests $(v_6,v_2)$.
Hence, the only valid option is $v_{10} \prec v_8$, as we claim.
Next, we show that vertex $v_{11}$ is between vertices $v_{10}$ and $v_8$. Indeed:

\begin{figure}[h!]
    \centering
    \includegraphics[page=8]{figures/S1.pdf}
\end{figure}

\noindent if $v_{11} \prec v_{10}$, then $(v_{11},v_6)$ nests $(v_{10},v_8)$;
if $v_8 \prec v_{11} \prec v_2$, then $(v_8,v_2)$ nests $(v_{11},v_6)$;
if $v_2 \prec v_{11}$, then $(v_{11},v_{10})$ nests $(v_6,v_2)$.
Thus, the total order is $v_{10}\prec v_{11}\prec v_8\prec v_6\prec v_2\prec v_1\prec v_3\prec v_5\prec v_4$.
Regarding vertex $v_{12}$ there exist two cases $v_2 \prec v_{12} \prec v_1$ or $v_4 \prec v_{12}$:

\begin{figure}[h!]
    \centering
    \includegraphics[page=9]{figures/S1.pdf}
\end{figure}

\noindent if $v_{12} \prec v_2$, then $(v_{12},v_5)$ nests $(v_2,v_1)$;
if $v_1 \prec v_{12} \prec v_4$, then $(v_1,v_4)$ nests $(v_{12},v_5)$.

First, assume that $v_2 \prec v_{12} \prec v_1$; the total order in $L$ is $v_{10} \prec v_{11} \prec v_8 \prec v_6 \prec v_2 \prec v_{12} \prec v_1 \prec v_3 \prec v_5 \prec v_4$.
We will show that vertex $v_9$ results in a $2$-rainbow in $L$:

\begin{figure}[h!]
    \centering
    \includegraphics[page=10]{figures/S1.pdf}
\end{figure}

\noindent if $v_9 \prec v_2$, then $(v_9,v_5)$ nests $(v_2,v_1)$;
if $v_2 \prec v_9 \prec v_3$, then $(v_2,v_3)$ nests $(v_9,v_{12})$;
if $v_3 \prec v_9 \prec v_4$, then $(v_1,v_4)$ nests $(v_9,v_5)$;
if $v_4 \prec v_9$, then $(v_9,v_{12})$ nests $(v_5,v_4)$.

Next, assume that $v_4 \prec v_{12}$, so the total order is $v_{10} \prec v_{11} \prec v_8 \prec v_6 \prec v_2 \prec v_1 \prec v_3 \prec v_5 \prec v_4 \prec v_{12}$. 
Considering again vertex $v_9$ we will show that there exist two possible cases, either $v_4 \prec v_9 \prec v_{12}$ or $v_{12} \prec v_9$:

\begin{figure}[h!]
    \centering
    \includegraphics[page=11]{figures/S1.pdf}
\end{figure}

\noindent if $v_9 \prec v_5$, then $(v_9,v_{12})$ nests $(v_5,v_4)$;
if $v_5 \prec v_9 \prec v_4$, then $(v_1,v_4)$ nests $(v_5,v_9)$.

We proceed with vertex $v_7$ for both of the two previous cases. Let $x\in\{v_9,v_{12}\}$ denote whichever of the two lies further to the right. We will prove that vertex $v_7$ always yields a nesting in $L$:

\begin{figure}[h!]
    \centering
    \includegraphics[page=12]{figures/S1.pdf}
\end{figure}

if $v_7 \prec v_5$, then $(v_7,v_9)$ nests $(v_5,v_4)$;
if $v_5\prec v_7\prec x$, then $(v_5,x)$ nests $(v_7,v_4)$.
Finally, if $x\prec v_7$, then $(v_7,v_4)$ nests $(v_9,v_{12})$.
Indeed, vertex $v_7$ results in a $2$-rainbow in $L$.

\noindent\textbf{Subcase 1.2:} $v_3 \prec v_6 \prec v_5$. First we will prove that vertex $v_{11}$ is between $v_2$ and $v_1$ in $L$ and afterwards that vertex $v_{10}$  always yields a $2$-rainbow. Indeed, for vertex $v_{11}$:

\begin{figure}[h!]
    \centering
    \includegraphics[page=13]{figures/S1.pdf}
\end{figure}

\noindent if $v_{11} \prec v_2$, then $(v_{11},v_6)$ nests $(v_2,v_1)$;
if $v_1 \prec v_{11} \prec v_4$, then $(v_1,v_4)$ nests $(v_{11},v_6)$;
if $v_4 \prec v_{11}$, then $(v_{11},v_6)$ nests $(v_5,v_4)$.
Hence, the total order is $v_2 \prec v_{11} \prec v_1 \prec v_3 \prec v_6 \prec v_5 \prec v_4$. For this order of $L$, vertex $v_{10}$ implies a nesting:

\begin{figure}[h!]
    \centering
    \includegraphics[page=14]{figures/S1.pdf}
\end{figure}

\noindent if $v_{10} \prec v_2$, then $(v_{10},v_6)$ nests $(v_2,v_1)$;
if $v_2 \prec v_{10} \prec v_1$, then $(v_2,v_3)$ nests $(v_{11},v_{10})$;
if $v_1 \prec v_{10} \prec v_4$, then $(v_1,v_4)$ nests $(v_{10},v_6)$;
if $v_4 \prec v_{10}$, then $(v_{10},v_6)$ nests $(v_5,v_4)$.

In all configurations of Case~S.1, a $2$-rainbow arises, contradicting the fact that $L$ is a $1$-queue layout.

\medskip
\noindent\textbf{Case S.2:} $v_2 \prec v_3 \prec v_1 \prec v_4$.

\begin{figure}[h!]
    \centering
    \includegraphics[page=1]{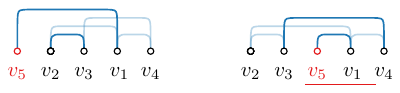}
\end{figure}

\noindent First, consider vertex $v_5$. We will show that there exist two cases: either $v_5$ is between $v_2$ and $v_3$ or it is after $v_4$:
if $v_5 \prec v_2$, then $(v_5,v_4)$ nests $(v_2,v_3)$;
if $v_3 \prec v_5 \prec v_4$, then $(v_3,v_4)$ nests $(v_5,v_1)$.
We analyze the two cases $v_2 \prec v_5 \prec v_3$ and $v_4 \prec v_5$ separately below.

\smallskip\noindent\textbf{Subcase 2.1:} $v_2 \prec v_5 \prec v_3$. Next we prove that vertex $v_6$ is between $v_2$ and $v_5$ in $L$:

\begin{figure}[h!]
    \centering
    \includegraphics[page=2]{figures/S2.pdf}
\end{figure}

\noindent if $v_6 \prec v_2$, then $(v_6,v_1)$ nests $(v_2,v_3)$;
if $v_5 \prec v_6 \prec v_4$, then $(v_5,v_4)$ nests $(v_6,v_1)$;
if $v_4 \prec v_6$, then $(v_6,v_2)$ nests $(v_1,v_4)$.
Next, vertex $v_{11}$ must be before vertex $v_2$:

\begin{figure}[h!]
    \centering
    \includegraphics[page=3]{figures/S2.pdf}
\end{figure}

\noindent if $v_2 \prec v_{11} \prec v_1$, then $(v_2,v_1)$ nests $(v_{11},v_6)$;
if $v_1 \prec v_{11}$, then $(v_{11},v_6)$ nests $(v_5,v_1)$.
Thus, $v_{11} \prec v_2$.
Finally, for vertex $v_{10}$, we observe that:
if $v_2 \prec v_{10} \prec v_1$, then $(v_2,v_1)$ nests $(v_{10},v_6)$;
if $v_1 \prec v_{10}$, then $(v_{10},v_6)$ nests $(v_5,v_1)$.
Hence, for vertex $v_{10}$ there exist two cases: either $v_{10} \prec v_{11}$ or $v_{11} \prec v_{10} \prec v_2$ in $L$.

\begin{figure}[h!]
    \centering
    \includegraphics[page=4]{figures/S2.pdf}
\end{figure}

\noindent The total order of $L$ so far is $v_{11},v_{10} \prec v_2 \prec v_6 \prec v_5 \prec v_3 \prec v_1 \prec v_4$. Consider now vertex $v_8$. Let $x\in\{v_{11},v_{10}\}$ denote whichever of the two lies further to the left in $L$.  We will prove that vertex $v_8$ always yields a nesting in $L$:
if $v_8 \prec x$, then $(v_8,v_2)$ nests $(v_{10},v_{11})$;
if $x \prec v_8 \prec v_6$, then $(x,v_6)$ nests $(v_8,v_2)$.
Finally, if $v_6 \prec v_8$, then $(v_8,v_{10})$ nests $(v_2,v_6)$.
In all configurations, vertex $v_8$ forms a $2$-rainbow in $L$ leading to a contradiction. 

\medskip
\noindent\textbf{Subcase 2.2:} $v_4 \prec v_5$. For vertex $v_6$, we observe that:

\begin{figure}[h!]
    \centering
    \includegraphics[page=5]{figures/S2.pdf}
\end{figure}

\noindent if $v_6 \prec v_2$, then $(v_6,v_1)$ nests $(v_2,v_3)$;
if $v_3 \prec v_6 \prec v_4$, then $(v_3,v_4)$ nests $(v_6,v_1)$;
if $v_4 \prec v_6$, then $(v_6,v_2)$ nests $(v_1,v_4)$.
Hence, vertex $v_6$ can only be between $v_2$ and $v_3$ in $L$ that is $v_2 \prec v_6 \prec v_3$.
Next we prove that for vertex $v_{11}$ we have two cases, either it is before vertex $v_2$ or between $v_1$ and $v_4$. Indeed,   
if $v_2 \prec v_{11} \prec v_1$, then $(v_2,v_1)$ nests $(v_{11},v_6)$;
if $v_4 \prec v_{11}$, then $(v_{11},v_6)$ nests $(v_4,v_1)$.
Thus, \emph{(a)} $v_{11} \prec v_2$ or \emph{(b)} $v_1 \prec v_{11} \prec v_4$. We analyze these two options separately below.

\begin{figure}[h!]
    \centering
    \includegraphics[page=6]{figures/S2.pdf}
\end{figure}

\smallskip \noindent\emph{(a)} $v_{11} \prec v_2$. For vertex $v_{10}$, we observe that:
if $v_2 \prec v_{10} \prec v_1$, then $(v_2,v_1)$ nests $(v_{10},v_6)$;
if $v_1 \prec v_{10}$, then $(v_{11},v_{10})$ nests $(v_1,v_2)$.
Hence, for vertex $v_{10}$ there are two options: either $v_{10} \prec v_{11}$ or $v_{11} \prec v_{10} \prec v_2$.
Next, we will show that vertex $v_8$ always yields a nesting in $L$ for both options of vertex $v_{10}$. Again, let $x\in\{v_{11},v_{10}\}$ denote whichever of the two lies further to the left in $L$:
if $v_8 \prec x$, then $(v_8,v_2)$ nests $(v_{11},v_{10})$;
if $x \prec v_8 \prec v_6$, then $(x,v_6)$ nests $(v_8,v_2)$;
if $v_6 \prec v_8$, then $(v_8,v_{10})$ nests $(v_2,v_6)$.
Indeed, all cases of vertex $v_8$ result in a nesting in $L$.

\begin{figure}[h!]
    \centering
    \includegraphics[page=7]{figures/S2.pdf}
\end{figure}

\smallskip \noindent\emph{(b)}  $v_1 \prec v_{11} \prec v_4$.
For vertex $v_{10}$ we will prove that it always forms a $2$-rainbow in $L$:
if $v_{10} \prec v_6$, then $(v_{10},v_{11})$ nests $(v_6,v_1)$;
if $v_6 \prec v_{10} \prec v_1$, then $(v_2,v_1)$ nests $(v_{10},v_6)$;
if $v_1 \prec v_{10} \prec v_4$, then $(v_3,v_4)$ nests $(v_{10},v_{11})$;
if $v_4 \prec v_{10} \prec v_5$, then $(v_1,v_5)$ nests $(v_{10},v_{11})$;
if $v_5 \prec v_{10}$, then $(v_{10},v_{11})$ nests $(v_4,v_5)$.
In all cases, a nesting occurs, yielding a contradiction.

In all configurations of Case~S.2, a $2$-rainbow arises, contradicting the fact that $L$ is a $1$-queue layout.

\noindent \textbf{Case S.3: $v_1 \prec v_2 \prec v_4 \prec v_3$.} 

\begin{figure}[h!]
    \centering
    \includegraphics[page=1]{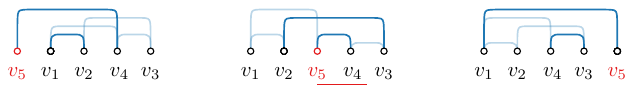}
\end{figure}

\noindent First we will show that vertex $v_5$ appears between $v_1$ and $v_2$ in $L$ and afterwards that vertex $v_6$ cannot appear in $L$ without creating a nesting.
Indeed, the following holds for vertex $v_5$:
if $v_5 \prec v_1$, then edges $(v_5,v_4)$ and $(v_1,v_2)$ form a $2$-rainbow;
if $v_2 \prec v_5 \prec v_3$, then edges $(v_2,v_3)$ and $(v_4,v_5)$ form a $2$-rainbow;
if $v_3 \prec v_5$, then edges $(v_1,v_5)$ and $(v_4,v_3)$ form a $2$-rainbow.
Thus, indeed $v_1 \prec v_5 \prec v_2$ as we claimed.

\begin{figure}[h!]
    \centering
    \includegraphics[page=2]{figures/S3.pdf}
\end{figure}

\noindent Next, consider vertex $v_6$. We will prove that it always results in a nesting in $L$:
if $v_6 \prec v_1$, then $(v_6,v_2)$ nests $(v_1,v_5)$;
if $v_1 \prec v_6 \prec v_4$, then $(v_1,v_4)$ nests $(v_6,v_2)$;
if $v_4 \prec v_6$, then $(v_6,v_1)$ nests $(v_5,v_4)$.
In all cases, a nesting occurs, yielding a contradiction.

\noindent \textbf{Case S.4: $v_3 \prec v_2 \prec v_4 \prec v_1$.} 

\begin{figure}[h!]
    \centering
    \includegraphics[page=1]{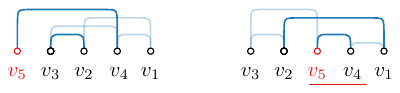}
\end{figure}

We will show that for vertex $v_5$ there exist two cases, either \emph{(a)} $v_3 \prec v_5 \prec v_2$ or \emph{(b)} $v_1 \prec v_5$.
If $v_5 \prec v_3$, then edge $(v_4,v_5)$ nests $(v_2,v_3)$;
if $v_2 \prec v_5 \prec v_1$, then edge $(v_2,v_1)$ nests $(v_4,v_5)$.
Therefore, vertex $v_5$ appears either between $v_3$ and $v_2$ or after vertex $v_1$.
For both cases of vertex $v_5$, vertex $v_6$ forms a $2$-rainbow in $L$.

\begin{figure}[h!]
    \centering
    \includegraphics[page=2]{figures/S4.pdf}
\end{figure}

\noindent \emph{(a)} $v_3 \prec v_5 \prec v_2$:
if $v_6 \prec v_5$, then edges $(v_6,v_1)$ and $(v_5,v_4)$ form a $2$-rainbow;
if $v_5 \prec v_6 \prec v_1$, then edges $(v_6,v_2)$ and $(v_1,v_5)$ form a $2$-rainbow;
if $v_1 \prec v_6$, then edges $(v_6,v_2)$ and $(v_1,v_4)$ form a $2$-rainbow.
Therefore, vertex $v_6$ cannot appear in $L$, as it always yields a nesting.

\begin{figure}[h!]
    \centering
    \includegraphics[page=3]{figures/S4.pdf}
\end{figure}

\noindent \emph{(b)} $v_1 \prec v_5 $. Similarly to \emph{(a)}, vertex $v_6$ always leads to a contradiction.
Indeed, 
if $v_6 \prec v_3$, then edges $(v_6,v_1)$ and $(v_2,v_3)$ form a $2$-rainbow;
if $ v_3 \prec v_6 \prec v_4$, then edges $(v_6,v_2)$ and $(v_4,v_3)$ form a $2$-rainbow;
if $v_4 \prec v_6 \prec v_5$, then edges $(v_6,v_1)$ and $(v_4,v_5)$ form a $2$-rainbow;
if $v_5 \prec v_6$, then edges $(v_6,v_2)$ and $(v_1,v_4)$ form a $2$-rainbow.

\medskip
Thus, in every case, any linear order contains a $2$-rainbow. Therefore, the graph of~\cref{fig:non-max-deg-4-counterexample} does not admit a $1$-queue layout.
\end{proof}

In contrast to \cref{prop:degree-4-max-outerpath}, adding the edges $(v_2,v_4)$, $(v_6,v_8)$, and $(v_9,v_{10})$ to the graph depicted in \cref{fig:non-max-deg-4-counterexample} gives a maximal outerpath of maximum degree~5 that does not admit a $1$-queue layout.

\begin{observation}
    \label{obs:max-outerpath-max-deg-5}
    There are maximal outerpaths of maximum degree~5 that do not admit a $1$-queue layout.
\end{observation}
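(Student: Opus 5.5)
The plan is to take the outerpath of \cref{fig:non-max-deg-4-counterexample}, add the three edges $(v_2,v_4)$, $(v_6,v_8)$, and $(v_9,v_{10})$, and check that the result is a maximal outerpath of maximum degree~$5$ with no $1$-queue layout.

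\textbf{Step 1: structure of the augmented graph.} We first check that each added edge is a chord of a non-triangular inner face of the original outerpath. If so, adding it splits that face into triangles, so every inner face becomes a triangle. The weak dual then remains a path, because each face is split along a chord into two faces in sequence. Hence the augmented graph $G^\ast$ is a maximal outerpath.

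\textbf{Step 2: maximum degree.} The original graph has maximum degree~$4$, and each added edge raises the degree of its two endpoints by one. So it suffices to read off from the figure that every vertex has degree at most~$5$ afterwards, and that some vertex (for instance one of $v_2,v_4,v_6,v_8,v_9,v_{10}$ that already had degree~$4$) reaches degree exactly~$5$. If the maximum degree came out as~$4$, Step~3 would contradict \cref{prop:degree-4-max-outerpath}; this is a sanity check that it must indeed be~$5$.

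\textbf{Step 3: no $1$-queue layout.} This follows immediately from \cref{prop:max-deg-4-outerpaths}. Adding edges cannot remove nestings, so any $1$-queue layout of $G^\ast$ would restrict to a $1$-queue layout of the graph in \cref{fig:non-max-deg-4-counterexample}, which does not exist.

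\textbf{Cross-check and main obstacle.} As an independent check, I would compute the $\mathrm{U}/\mathrm{D}$-encoding of $G^\ast$ and verify via \cref{prop:span} that its lattice path has height at least~$4$. The only real work is Step~1: reading the figure correctly so that the three chords exactly triangulate the non-triangular faces while preserving outerplanarity and the path-shaped dual. This is where I expect the main difficulty, and the encoding computation would catch any misreading.
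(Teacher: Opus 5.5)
Your overall route (augment the degree-$4$ counterexample by three edges, then use that a $1$-queue layout of a supergraph restricts to one of the subgraph) is the paper's route. Steps~2 and~3 are fine; the three added edges are pairwise vertex-disjoint, so $\Delta$ rises by at most one. The gap is in Step~1. The inference that the weak dual ``remains a path, because each face is split along a chord into two faces in sequence'' is false. An interior face $F_i$ of an outerpath is attached to its two dual neighbours by chords $c_{i-1}$ and $c_i$, and a diagonal of $F_i$ keeps the dual a path only if it separates $c_{i-1}$ from $c_i$. Otherwise one piece contains both attaching chords and the other piece becomes a leaf hanging off it. The result is then maximal outerplanar with a degree-$3$ dual vertex, hence not an outerpath. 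For a $4$-face whose two attaching chords share an endpoint $x$, only the diagonal through $x$ is admissible.

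This bites concretely here. The edges used in the case analysis of \cref{prop:max-deg-4-outerpaths} give the outer cycle $v_{11},v_{10},v_8,v_2,v_3,v_4,v_7,v_9,v_{12},v_5,v_1,v_6$. The inner faces, in dual order, are $v_6v_{10}v_{11}$, $v_2v_6v_{10}v_8$, $v_1v_2v_6$, $v_1v_2v_3v_4$, $v_1v_4v_5$, $v_4v_5v_9v_7$, $v_5v_9v_{12}$.
\begin{itemize}
\item The face $v_1v_2v_3v_4$ is attached via $v_1v_2$ and $v_1v_4$. The diagonal $(v_2,v_4)$ therefore cuts off $v_3$ as a third ear, so the dual is no longer a path.
\item $(v_9,v_{10})$ is not a diagonal of any face. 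It crosses the chord $v_1v_2$ with respect to the unique Hamiltonian cycle, so adding it destroys outerplanarity.
\end{itemize}
So Step~1, carried out correctly with the listed edges, fails, and your stated justification would have wrongly accepted $(v_2,v_4)$. The fix is to use the separating diagonals $v_1v_3$, $v_6v_8$, and $v_5v_7$. This yields a maximal outerpath with $\Delta=5$, attained at $v_1$, $v_5$, and $v_6$. Its $\mathrm{U}/\mathrm{D}$-encoding is $\mathrm{UUDUUDUU}$, whose lattice path has height~$4$, consistent with \cref{prop:span}. With these edges your Steps~2 and~3 go through unchanged.
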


\section{Concluding Remarks}

We have shown that deciding whether an outerplanar graph has queue number~1 is NP-hard. In contrast, we have shown that for maximal outerplanar graphs, the problem can be solved in linear time. A natural graph class to consider next are biconnected outerplanar graphs, as they generalize maximal outerplanar graphs while being more restricted than general outerplanar graphs.

\begin{open}
What is the complexity of deciding whether a biconnected outerplanar graph has queue number 1?
\end{open}

Outerpaths are an important ingredient in our approach to recognizing maximal outerplanar graphs of queue number 1 and also of independent interest. In addition to several existential results for outerpaths, we have given a linear-time algorithm to decide whether maximal outerpaths have queue number 1. It remains open whether maximality is necessary for efficient recognition.

\begin{open}
Can it be decided in polynomial time whether an outerpath has queue number 1?
\end{open}

Our hardness results rule out several of the most natural parameters for parametrized algorithms for determining the queue number of a graph. In particular, parameterizing by treewidth, pathwidth or queue number cannot lead to FPT or XP algorithms unless $\mathrm{P}=\mathrm{NP}$. It remains unclear whether there are other meaningful graph parameters that allow such algorithms. As our reduction has two vertices of high degree, maximum vertex degree could be a candidate. Our existential results for outerpaths of queue number indicate that the maximum vertex degree is a relevant parameter for the queue number.

\begin{open}
Are there interesting graph parameters that allow FPT or XP algorithms for testing whether a graph has queue number~$1$ or for computing its queue number?
\end{open}

Recall that for outerplanar graphs, we have to decide only between queue number~$1$ or~$2$.

\bibliography{references,stacks,queues}
\end{document}